\definecolor{brightcerulean}{rgb}{0.11, 0.67, 0.84}
\definecolor{cerulean}{rgb}{0.0, 0.48, 0.65}
\definecolor{Gray}{rgb}{0.5, 0.5, 0.5}
\definecolor{brightcerulean}{rgb}{0.11, 0.67, 0.84}
\definecolor{cerulean}{rgb}{0.0, 0.48, 0.65}
\definecolor{Gray}{rgb}{0.5, 0.5, 0.5}
\definecolor{columbiablue}{rgb}{0.61, 0.87, 1.0}
\definecolor{azulzinho}{rgb}{0.61, 0.87, 1.0}
\newtheorem{thm}{Theorem}[section]
\newtheorem{lem}[thm]{Lemma}
\newtheorem{cor}[thm]{Corollary}
\newtheorem{prop}[thm]{Proposition}
\newtheorem{definition}[thm]{Definition}
\newtheorem{rem}[thm]{Remark}
\newcommand\eps{\epsilon}
\newcommand\EE{{\mathbb E}}
\newcommand\PP{{\mathbb P}}
\newcommand{\inner}[1]{\langle #1 \rangle}
\newcommand{\mc}[1]{{\mathcal #1}}
\newcommand{\bb}[1]{{\mathbb #1}}
\newcommand{\ccl}[1]{{\color{red} #1}}
\definecolor{light-gray}{gray}{0.95}
\renewcommand{\leq}{\leqslant}
\renewcommand{\geq}{\geqslant}
\renewcommand{\ge}{\geqslant}
\newcommand{\pa}[1]{\left(#1\right)}
\let\oldtocsection=\tocsection
\let\oldtocsubsection=\tocsubsection
\let\oldtocsubsubsection=\tocsubsubsection
\renewcommand{\tocsection}[2]{\hspace{0em}\oldtocsection{#1}{#2}}
\renewcommand{\tocsubsection}[2]{\hspace{1em}\oldtocsubsection{#1}{#2}}
\renewcommand{\tocsubsubsection}[2]{\hspace{2em}\oldtocsubsubsection{#1}{#2}}
\DeclareRobustCommand{\SkipTocEntry}[5]{}
\title[]{Hydrodynamics for  SSEP with non-reversible\\ slow  boundary dynamics: Part I,  the critical regime and beyond}
\author{C. Erignoux}
\address{Cl\'ement Erignoux, Equipe PARADYSE, Bureau B211
Centre INRIA Lille Nord-Europe
Park Plaza, Parc scientifique de la Haute-Borne, 40 Avenue Halley B\^atiment B, 59650 Villeneuve-d'Ascq
France}
\email{{\tt clement.erignoux@inria.fr}}
\author{P.  Gon\c calves}
\address{Patr\'icia Gon\c calves, Center for Mathematical Analysis,  Geometry and Dynamical Systems,
Instituto Superior T\'ecnico, Universidade de Lisboa,
Av. Rovisco Pais, 1049-001 Lisboa, Portugal.}
\email{{\tt pgoncalves@tecnico.ulisboa.pt}}
\author{G. Nahum}
\address{Gabriel Nahum, Center for Mathematical Analysis,  Geometry and Dynamical Systems, Instituto Superior T\'ecnico, Universidade de Lisboa,
	Av. Rovisco Pais, 1049-001 Lisboa, Portugal.}
\email{{\tt gabriel.nahum@tecnico.ulisboa.pt}}
\thanks{C.E. gratefully acknowledges funding from the European Research Council under the European Unions Horizon 2020 Program, ERC Consolidator GrantUniCoSM (grant agreement no 724939). P.G. thanks FCT/Portugal for support through the project UID/MAT/04459/2013. { G.N thanks FCT/Portugal for the support through the project Lisbon Mathematics PhD (LisMath).} This project has received funding from the European Research Council (ERC) under  the European Union's Horizon 2020 research and innovative programme 
(grant agreement   No 715734).}
\date{\today.}
\begin{document}

\begin{abstract}
The purpose of this article is to provide a simple proof of the hydrodynamic and hydrostatic behavior of the SSEP in contact with slowed reservoirs which inject and remove particles in a finite size windows at the extremities of the bulk. More precisely, the reservoirs inject/remove particles at/from any point of a window of size $K$ placed at each extremity of the bulk and particles are injected/removed to the first open/occupied position in that window.  The  hydrodynamic limit is  given by the heat equation with non-linear Robin boundary conditions or Neumann boundary conditions, the latter being in the case when the reservoirs are too slow.  The proof goes through the entropy method of \cite{GPV}.  We also derive the  hydrostatic limit  for this model, whose proof  is based on the method developed in \cite{LT} and \cite{Tsu}. We observe that we do not make use of correlation estimates in none of our results.

\end{abstract}

\maketitle
\section{Introduction}
One of the intriguing questions in Statistical Physics is related to the understanding of  how local microscopic perturbations of the dynamics of a particle system, carries through its macroscopic description. In recent years, several articles have been dedicated to the understanding of adding a slow bond, a slow site or a slow boundary to the most classical interacting particle system, namely, \emph{the exclusion process}. For references on this topic, we refer the reader to \cite{patriciaantigo, FGS, bmns} and references therein,  where the hydrodynamic limit for the symmetric simple exclusion process (SSEP)  with, respectively,  a slow bond, a slow site and a slow boundary was analyzed. Recently, the case of  the non-simple symmetric exclusion process with slow boundary has been analyzed in \cite{G1, BGJO, BGJO2} and the asymmetric case in \cite{SS}. In the studied cases mentioned above with a slow boundary, the macroscopic PDE, ends up with boundary conditions of the type: Dirichlet, (linear) Robin, or Neumann.\\

In this article, motivated by deriving other types of boundary conditions, we consider the SSEP in the discrete box $ \{1,\dots,N-1\} $ coupled with slow reservoirs, placed at $x=0$ and $x=N$, whose role is to  inject and  and remove particles in a window of a fixed size $ K\geq 1 $. A particle may enter to the first free site and leave from the first occupied site in its respective window (\textit{i.e.,} $ \{1,\dots K\}, \{N-K,\dots,N-1\} $). We control the action of the reservoirs by fixing the rates of injection/removal as proportional to $ N^{-\theta} $. In {this} article, we address here the characterization of the hydrodynamic and hydrostatic behavior for the slowed regime $ \theta\geq1 $, {and we will consider in the second part of this article \cite{EGN2} the case where $\theta\in (0,1)$, which requires a different set of tools. More precisely, we show that the spatial density of particles is given by a weak solution of the heat equation with non-linear (resp. linear) Robin boundary conditions, if $ \theta=1 $ and $K \geq 2$ (resp. $K=1$, in which we recover the results of \cite{bmns}), and Neumann boundary conditions, if $ \theta>1$ for any $K\geq 1$. For the case $ \theta=1 $, the irreversibility of the boundary dynamics reflects on a non-linear macroscopic boundary evolution for $K\geq 2$ and a simplified version of this model was studied in \cite{DMP12}. The model where particles may enter only through the right and leave only through the left with rates $ \tfrac{1}{2} $ was first introduced by De Masi \textit{et al} in \cite{DMP12}, and the reservoirs were termed "current reservoirs", since they do not fix the value of the density at the boundary, but its gradient. The dynamics we consider here is a generalization of the dynamics of \cite{DMP12} since we allow injection and removal from both reservoirs and moreover, the rate is slowed with respect to the bulk dynamics.  In \cite{DMP12}, the dynamics was shown to have the Propagation of Chaos property, and that result was obtained  by providing sharp estimates on the $L^\infty$ norm of $ v-$functions. As a consequence,  the Fick's Law was shown to hold and the hydrostatic limit was proved in \cite{dptv} and \cite{dptv3}, respectively.\\

When $ K=1 $, we are reduced to the SSEP with classical slowed reservoirs, where the hydrodynamic and hydrostatic scenario were both investigated in \cite{bmns} for $ \theta\geq 0 $ and  for  $\theta<0$, the hydrodynamic behavior was studied in \cite{G1}. For $ \theta\geq0 $, in \cite{bmns}, Baldasso \textit{et al} showed the hydrodynamic limit by the application of the Entropy method, first presented in \cite{GPV}. In their case, which corresponds here to the  case when $K=1$, they were able to use an auxiliary measure which is product and given by a suitable profile and for that reason, the entropy production at the boundaries is small enough to enable them to show a replacement lemma at the boundaries. In the present paper, we apply a similar strategy for $ \theta\geq1 $, but with an extra difficulty due to the explicit correlation terms at the boundaries, which makes us use another replacement lemma. {The strategy works for $\theta\geq 1$, since, in this regime,  the reservoirs' action is sufficiently slow, and we are allowed to use an auxiliary measure of product type which is close to the stationary state of the system. Unfortunately, that same procedure is not possible for $ \theta<1 $ since the comparison measure is quite far from being of product type. Due to the boundary terms of the dynamics, we are not able to control the errors coming from the comparison between Dirichlet form and the carr\'e du champ operator and, as a consequence,  we cannot apply the entropy method, except in the case where the boundary is quite slow, namely $\theta\geq 1$.}  For this reason, in the second part of this article \cite{EGN2}, we make use of duality estimates { obtained  in \cite{E18, ELX18} for the case $\theta=0$} to derive both the hydrostatic and hydrodynamic limit in the case $\theta\in(0,1)$.\\

As a consequence of the hydrodynamic limit we derive Fick's Law. More precisely, we consider two currents related to the system, the conservative and the non-conservative. The former counts the net number of particles going through a bond, while the later counts the  number of particles injected minus the number of particles removed from the system through a site. Then, we associate the corresponding fields and we show their convergence. This is the content of Theorem \ref{thm:current} whose proof is given in Section \ref{sec:currrent}. \\

 Having the hydrodynamic limit proved, it is simple to obtain the hydrostatic limit, by showing that the stationary correlations of the system vanish as the system size grows to infinity. When $K=1 $ that is exactly the strategy pursued in \cite{bmns}. In our case, when $K\geq 2$ we do not have any information about the stationary correlations of the system and for that reason we have to do it in a different way.  Therefore, here the hydrostatic behavior is investigated through the methods developed in \cite{Tsu} and \cite{LT}. In particular, we will follow essentially \cite{Tsu}, where the hydrostatic limit was shown for $ K=1 $. The proof presented in \cite{LT} is robust enough for the hydrostatic limit to follow directly from the hydrodynamic limit when $ \theta=1 $, thus we will focus on the case $ \theta>1 $ and refer the interested reader to \cite{LT} and references therein. Our main interest is when $ \theta>1 $, where the macroscopic evolution is governed by a Neumann Laplacian on $ [0,1] $. In contrast to the arguments in \cite{bmns}, where the hydrostatic limit was shown through estimates on the density and correlation fields, the method in \cite{Tsu} is based on the study of the system's evolution at a subdiffusive time scale. This allows us to show replacement lemmas that, under a different time scale, do not hold. In this sense, our results regarding the hydrostatic limit also extend the ones obtained in \cite{bmns} for $ \theta\geq1 $ by the application of a simpler method and when correlation estimates are not easy to obtain.\\

 Regarding the results of the present paper, as already mentioned, the model expresses a macroscopic phase transition from non-linear Robin to Neumann boundary conditions. In particular, we derive the following hydrodynamic equation when $ \theta=1 $ 
\begin{equation}\label{cauchyprob}
\begin{cases}
&\partial_{t}\rho_{t}(u)=  \partial^2_u\, {\rho} _{t}(u), \quad (t,u) \in [0,T]\times(0,1),\\
&\partial_{u}\rho _{t}(0)= -D_{\alpha,\gamma}\rho_t(0)
,\quad t \in (0,T],\\
&\partial_{u} \rho_{t}(1)=D_{\beta,\delta}\rho_t(1)
,\quad t \in (0,T], \\
&\rho(0,\cdot)= f_0(\cdot),
\end{cases}
\end{equation}
where $ \alpha=(\alpha_1,\dots,\alpha_K), \beta=(\beta_1,\dots,\beta_K), \delta=(\delta_1,\dots,\delta_K), \gamma=(\gamma_1,\dots,\gamma_K) $ are parameters of the boundary dynamics and the operator $ D_{\lambda,\sigma} $ is defined for any vectors $ \lambda=(\lambda_1,\dots,\lambda_K),\sigma= (\sigma_1,\dots,\sigma_K)$  and $ f:[0,1]\to\mathbb{R} $ as
\begin{align*}
	(D_{\lambda,\sigma}f)(u)=
	\sum_{x=1}^K 
	\{
	\lambda_x(1-f(u))f^{x-1}(u)-\sigma_xf(u)(1-f(u))^{x-1}
	\}.
\end{align*}
{In the case $ \theta>1 $, the non linear Robin boundary conditions are replaced with Neumann boundary conditions }
\begin{equation}\label{cauchyprob2}
{\begin{cases}
&\partial_{t}\rho_{t}(u)=  \partial^2_u\, {\rho} _{t}(u), \quad (t,u) \in [0,T]\times(0,1),\\
&\partial_{u}\rho _{t}(0)=\partial_{u}\rho _{t}(1)=0,\quad t \in (0,T], \\
&\rho(0,\cdot)= f_0(\cdot){.}
\end{cases}}
\end{equation}
We also prove uniqueness of the weak solution of \eqref{cauchyprob} in the case where the parameters satisfy {suitable conditions (cf. \eqref{H0} below)}.\\

{The proof of our results is simpler than the one of  \cite{DMP12} and does not require any knowledge on the decay of $v$-functions.  Of course that the estimate on $v$-functions obtained in \cite{DMP12}  can have other purposes than just the hydrodynamic limit, as, for example, the density fluctuations of the system, but in what concerns the hydrodynamics, our  proof is simple and it  relies on good estimates between the Dirichlet form and the carr\'e du champ operator and a few replacement lemmas which allow to control boundary terms. }
 Throughout the paper we will state the results for $ K\geq 2 $, but in some cases present  the proofs in detail for $ K=2 $ only, since for $ K>2 $ the techniques are exactly the same and the biggest change is in the notation. Nevertheless, whenever required, we will state some appropriate remarks regarding the general case $K>2$. For  $ \beta_x=\gamma_x=1 $ and $ \delta_x=\alpha_x=0 $ for all $x\in \{1,\cdots, K\}$, the uniqueness for the Cauchy problem \eqref{cauchyprob} was shown in  \cite{dptv}. For $ K=2 $ with $ \alpha_2=\gamma_2 $ and $ \beta_2=\delta_2 $ the proof reduces to the case of linear Robin boundary conditions, whose uniqueness problem was  studied in \cite{bmns}.\\

{Since we treat in \cite{EGN2} the case $\theta\in(0,1)$, the main issue left open is related to the fluctuations around the hydrodynamic limit,} for which we need to obtain very sharp estimates on the space-time correlations of the system. Large deviations from the stationary state is also another challenge to look at in the near future. Note that in order to get exact information about the stationary state of the system, we cannot make use of  the preliminary work on the matrix product ansatz of Derrida \cite{Derrida}, since it does not straightforwardly apply to this dynamics in general, and encompasses the case $ K=1 $ only.\\

The article is divided as follows. In section \ref{sec:model} we present the model, the notation, the weak formulation for the solution of the Cauchy problem and the main results, namely, the hydrodynamic limit (Theorem \ref{th:hyd_ssep}), {a law of large numbers for the current (Theorem \ref{thm:current}), and the} hydrostatic limit (Theorem \ref{thm:hydrostatics}). In section \ref{sec:hyd} we show the hydrodynamic limit: we start presenting an heuristic proof for finding the notion of weak solution of the PDEs, we identify the main difficulties in the proof and we present the tools to solve them. Then we proceed with the entropy method: in Proposition \ref{prop:tight} we show tightness of the sequence of empirical measures, which shows that there exists convergent subsequences. With the assumption on the uniqueness of the solution of \eqref{cauchyprob}, we proceed with the characterization of limit points. In particular, in Proposition \ref{prop:charac} we show that the spatial density of particles converges to the solution of \eqref{cauchyprob}.
Section \ref{sec:currrent} is devoted to the proof of  the law of large numbers for the current fields associated to the system. 
 Section \ref{sec:hs} is devoted to the proof of the hydrostatic limit. In the Robin case ($\theta=1$), we require the existence of a unique stationary solution, to which the hydrodynamic solution converges. These two elements are obtained in Sections \ref{sec:statio_uniq} and \ref{appendix:convergence}, respectively. In the Neumann case ($\theta>1$), however, any constant profile is stationary, so that we need one further argument. We therefore show in Section \ref{sec:proofhydrostatics} that the total mass of the system evolves in the subdiffusive time scale $N^{1+\theta}$, and on this time scale it converges to a unique constant which determines the stationary profile.
In the appendix, we prove some technical results required throughout the proofs, namely the replacement lemmas (Appendix \ref{appendix:replacement}), an energy estimate (Appendix \ref{appendix:energy}), and the uniqueness of the weak solution to \eqref{cauchyprob} (Appendix \ref{ap:uni_weak}).

%

 \section{{Model and results}}\label{sec:model}

  \subsection{{The microscopic model}}
  
Denote by $N$ a scaling parameter, which will be taken to infinity later on. For $N\geq{2}$ {we call bulk the discrete set of points   $\Lambda_N:=\{1, \ldots, N-1\}$}.  The exclusion process in contact with stochastic reservoirs is a Markov process, that we denote by $\{\eta_t:\,t\geq{0}\}$, {whose state space is} $\Omega_N:=\{0,1\}^{\Lambda_N}$. 
 The configurations of the state space $\Omega_N$  are denoted by $\eta$, so that for $x\in\Lambda_N$,  $\eta(x)=0$ means that the site $x$ is vacant while $\eta(x)=1$ means that the site $x$ is occupied.
For {any} fixed $ K\in\mathbb{N}^+ $, we define $ I_-^K:=\{1,\dots,K\} $, $ I_+^K:=\{N-K,\dots,N-1\}$.
{We introduce the infinitesimal generator}
\begin{equation}\label{generator_ssep}
\mc L_{N}=\mc L_{N,0}+\tfrac{1}{N^\theta}\mc L_{N,b}
\end{equation}
{acting} on functions $f:\Omega_N\rightarrow \bb{R}$ by  
\begin{equation*}
(\mc L_{N,0}f)(\eta)=
\sum_{x=1}^{N-2}\Big(f(\eta^{x,x+1})-f(\eta)\Big)
\quad\text{and}\quad 
(\mc L_{N, b}f)(\eta)\;=\;(\mc L_{N, -}f)(\eta)+(\mc L_{N, +}f)(\eta)
\end{equation*}
where 
\begin{align}\label{eq:gen_bound_lin}
	(\mc L^{}_{N,\pm}f)(\eta)=
	\sum_{x\in I_\pm^K}
	c_x^\pm(\eta){\Big(}f(\eta^{x})-f(\eta){\Big)}
\end{align} 
and for $x\in I^K_{\pm}\setminus\{1,N-1\}$
\begin{align}\label{rates:boundary}
\begin{split}
c_x^-(\eta)&=\alpha_x\eta(1)\cdots\eta(x-1)(1-\eta(x))+\gamma_x(1-\eta(1))\cdots(1-\eta(x-1))\eta(x),\\
c_x^+(\eta)&=\beta_{N-x}(1-\eta(x))\eta(x+1)\cdots\eta(N-1)+
\delta_{N-x}\eta(x)(1-\eta(x+1))\cdots(1-\eta(N-1))
\end{split}
\end{align}
and $c^-_1(\eta)=\alpha_1(1-\eta(1))+\gamma_1 \eta(1)$ and $c^+_{N-1}(\eta)=\beta_{N-1}(1-\eta(N-1))+\delta_{N-1}\eta(N-1)$.
To simplify notation, we will identify $ \beta_x\equiv\beta_{N-x},\delta_x\equiv\delta_{N-x} $.
{In the formulae above, we shortened}
\begin{equation}\label{tranformations}
\eta^{x,y}(z) = 
\begin{cases}
\eta(z), \; z \ne x,y\\
\eta(y), \; z=x\\
\eta(x), \; z=y
\end{cases}
, \quad \eta^x(z)= 
\begin{cases}
\eta(z), \; z \ne x,\\
1-\eta(x), \; z=x
\end{cases},
\end{equation}
and the $\alpha_i,$ $\gamma_i$, $\beta_i,$ $\delta_i$, for $i=1,\dots,K$ are fixed non-negative constants.  The size $K$ of the boundary is considered to be a fixed constant as well. In other words, as illustrated in Figure \ref{fig:BD}, we consider a stirring dynamics in the bulk, and at the two boundary sets  $I_\pm^K $, particles get created (resp. removed) at the empty (resp. occupied) site closest to the boundary.
 {The role of the parameter $\theta$ appearing in \eqref{generator_ssep}} is to {slow down ($\theta\geq 0$) or speed up $(\theta \leq 0)$ the boundary dynamics relatively to the bulk dynamics}. In this article we restrict ourselves to the case $\theta\geq 1$ and in a companion article \cite{EGN2}, we look at the case  $0< \theta<1$.
{Throughout the article, we therefore fix $\theta\geq 1$ and consider  the Markov process $(\eta_{t})_{t\ge 0} $ with infinitesimal generator  given by $\mathcal L_{N}$. }
 
\tikzset{every picture/.style={line width=0.8pt}} 
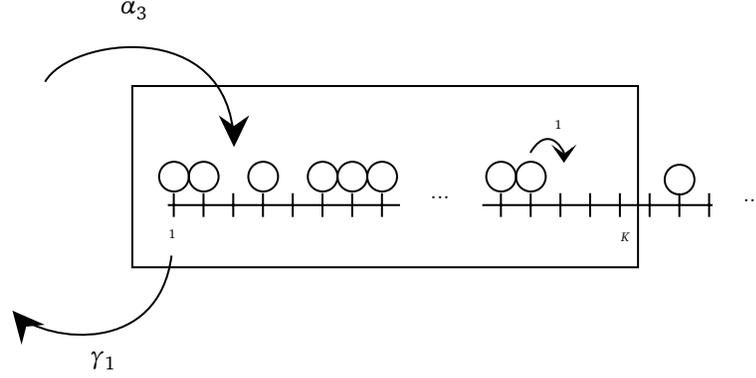
\begin{figure}[H]
	\centering

\tikzset{every picture/.style={line width=0.75pt}} 

\begin{tikzpicture}[x=0.75pt,y=0.75pt,yscale=-1.5,xscale=1.5]

\draw    (311.5,362) -- (109.5,362) (301.5,366) -- (301.5,358)(291.5,366) -- (291.5,358)(281.5,366) -- (281.5,358)(271.5,366) -- (271.5,358)(261.5,366) -- (261.5,358)(251.5,366) -- (251.5,358)(241.5,366) -- (241.5,358)(231.5,366) -- (231.5,358)(221.5,366) -- (221.5,358)(211.5,366) -- (211.5,358)(201.5,366) -- (201.5,358)(191.5,366) -- (191.5,358)(181.5,366) -- (181.5,358)(171.5,366) -- (171.5,358)(161.5,366) -- (161.5,358)(151.5,366) -- (151.5,358)(141.5,366) -- (141.5,358)(131.5,366) -- (131.5,358)(121.5,366) -- (121.5,358)(111.5,366) -- (111.5,358) ;

\draw   (136.57,352.43) .. controls (136.57,349.67) and (138.81,347.43) .. (141.57,347.43) .. controls (144.33,347.43) and (146.57,349.67) .. (146.57,352.43) .. controls (146.57,355.19) and (144.33,357.43) .. (141.57,357.43) .. controls (138.81,357.43) and (136.57,355.19) .. (136.57,352.43) -- cycle ;
\draw   (156.57,352.43) .. controls (156.57,349.67) and (158.81,347.43) .. (161.57,347.43) .. controls (164.33,347.43) and (166.57,349.67) .. (166.57,352.43) .. controls (166.57,355.19) and (164.33,357.43) .. (161.57,357.43) .. controls (158.81,357.43) and (156.57,355.19) .. (156.57,352.43) -- cycle ;
\draw    (68.25,320.5) .. controls (76.62,305.72) and (130.11,298.23) .. (131.71,340.53) ;
\draw [shift={(131.75,342.5)}, rotate = 270] [fill={rgb, 255:red, 0; green, 0; blue, 0 }  ][line width=0.75]  [draw opacity=0] (10.72,-5.15) -- (0,0) -- (10.72,5.15) -- (7.12,0) -- cycle    ;

\draw   (166.57,352.43) .. controls (166.57,349.67) and (168.81,347.43) .. (171.57,347.43) .. controls (174.33,347.43) and (176.57,349.67) .. (176.57,352.43) .. controls (176.57,355.19) and (174.33,357.43) .. (171.57,357.43) .. controls (168.81,357.43) and (166.57,355.19) .. (166.57,352.43) -- cycle ;
\draw   (106.57,352.43) .. controls (106.57,349.67) and (108.81,347.43) .. (111.57,347.43) .. controls (114.33,347.43) and (116.57,349.67) .. (116.57,352.43) .. controls (116.57,355.19) and (114.33,357.43) .. (111.57,357.43) .. controls (108.81,357.43) and (106.57,355.19) .. (106.57,352.43) -- cycle ;
\draw   (97.57,321.93) -- (267.6,321.93) -- (267.6,382.93) -- (97.57,382.93) -- cycle ;
\draw   (176.57,352.43) .. controls (176.57,349.67) and (178.81,347.43) .. (181.57,347.43) .. controls (184.33,347.43) and (186.57,349.67) .. (186.57,352.43) .. controls (186.57,355.19) and (184.33,357.43) .. (181.57,357.43) .. controls (178.81,357.43) and (176.57,355.19) .. (176.57,352.43) -- cycle ;
\draw   (216.57,352.43) .. controls (216.57,349.67) and (218.81,347.43) .. (221.57,347.43) .. controls (224.33,347.43) and (226.57,349.67) .. (226.57,352.43) .. controls (226.57,355.19) and (224.33,357.43) .. (221.57,357.43) .. controls (218.81,357.43) and (216.57,355.19) .. (216.57,352.43) -- cycle ;
\draw   (226.57,352.43) .. controls (226.57,349.67) and (228.81,347.43) .. (231.57,347.43) .. controls (234.33,347.43) and (236.57,349.67) .. (236.57,352.43) .. controls (236.57,355.19) and (234.33,357.43) .. (231.57,357.43) .. controls (228.81,357.43) and (226.57,355.19) .. (226.57,352.43) -- cycle ;
\draw    (110.75,379) .. controls (106.45,412.43) and (69.78,408.46) .. (58.65,398.89) ;
\draw [shift={(57.25,397.5)}, rotate = 409.4] [fill={rgb, 255:red, 0; green, 0; blue, 0 }  ][line width=0.75]  [draw opacity=0] (10.72,-5.15) -- (0,0) -- (10.72,5.15) -- (7.12,0) -- cycle    ;

\draw   (116.57,352.43) .. controls (116.57,349.67) and (118.81,347.43) .. (121.57,347.43) .. controls (124.33,347.43) and (126.57,349.67) .. (126.57,352.43) .. controls (126.57,355.19) and (124.33,357.43) .. (121.57,357.43) .. controls (118.81,357.43) and (116.57,355.19) .. (116.57,352.43) -- cycle ;
\draw   (276.57,353.43) .. controls (276.57,350.67) and (278.81,348.43) .. (281.57,348.43) .. controls (284.33,348.43) and (286.57,350.67) .. (286.57,353.43) .. controls (286.57,356.19) and (284.33,358.43) .. (281.57,358.43) .. controls (278.81,358.43) and (276.57,356.19) .. (276.57,353.43) -- cycle ;
\draw  [fill={rgb, 255:red, 0; green, 0; blue, 0 }  ,fill opacity=1 ] (245.6,343) -- (242.7,347.2) -- (239.8,343) -- (242.7,345.1) -- cycle ;
\draw    (231.4,344.4) .. controls (235.2,337.8) and (240,338.6) .. (242.8,344.6) ;

\draw (97,296.5) node [scale=1] [align=left] {$\displaystyle \ \ \alpha _{3} \ $};
\draw (86.67,414.17) node [scale=1] [align=left] {$\displaystyle \ \ \gamma _{1} \ $};
\draw  [color={rgb, 255:red, 255; green, 255; blue, 255 }  ,draw opacity=1 ][fill={rgb, 255:red, 255; green, 255; blue, 255 }  ,fill opacity=1 ]  (187.9,351) -- (214.9,351) -- (214.9,369) -- (187.9,369) -- cycle  ;
\draw (201.4,360) node [scale=0.7]  {$\cdots $};
\draw (263.5,372.67) node [scale=0.5]  {$K$};
\draw (111,371.67) node [scale=0.5]  {$1$};
\draw (240.8,334.87) node [scale=0.5]  {$1$};
\draw  [color={rgb, 255:red, 255; green, 255; blue, 255 }  ,draw opacity=1 ][fill={rgb, 255:red, 255; green, 255; blue, 255 }  ,fill opacity=1 ]  (293,352) -- (320,352) -- (320,370) -- (293,370) -- cycle  ;
\draw (306.5,361) node [scale=0.7]  {$\cdots $};

\end{tikzpicture}
\caption{Left boundary dynamics.}
\label{fig:BD}
\end{figure}

\subsection{Hydrodynamic equation {and uniqueness}} \label{sec:hyd_eq_ssep}
We now define the macroscopic limit of our model and its topological setup.  We denote by $\langle \cdot,\cdot\rangle _{\mu}$ the inner product  in $L^{2}([0,1])$ with respect to a measure $\mu$ defined in $[0,1]$ and $\| \cdot\|_{L^2 (\mu) }$ is the corresponding norm. When $\mu$ is the Lebesgue measure we write $\langle \cdot,\cdot\rangle$ and $\| \cdot\|_{L^2}$ for the corresponding norm. 

{Fix once and for all a finite time horizon $T>0$}. We denote by $C^{m,n}([0, T] \times [0,1])$ the set of functions defined on $[0, T] \times [0,1] $ that are $m$ times differentiable on the first variable and $n$ times differentiable  on the second variable, {with} continuous derivatives. For a function $G:=G(s,u)\in C^{m,n}([0, T] \times [0,1])$ we denote by $\partial _{s}G$  its derivative with respect to the time variable $s$ { and by $\partial_{u}G$  its derivative with respect to the space variable  $u$.}

Now we want to define the space where the solutions of the hydrodynamic  equations will live on, namely the Sobolev space $\mathcal H^1$ on $[0,1]$. For that purpose, we define the semi inner-product $\langle \cdot, \cdot \rangle_{1}$ on the set $C^{\infty} ([0,1])$ by $\langle G, H \rangle_{1} =\langle \partial_u  G \,, \partial_u  H \rangle$ 
and  the corresponding semi-norm is denoted by $\| \cdot \|_{1}$. 

\begin{definition}
\label{Def. Sobolev space}
The Sobolev space $\mathcal{H}^{1}$ on $[0,1]$ is the Hilbert space defined as the completion of $C^\infty ([0,1])$ for the norm 
$\| \cdot\|_{{\mc H}^1}^2 :=  \| \cdot \|_{L^2}^2  +  \| \cdot \|^2_{1}.$
Its {elements} coincide a.e. with continuous functions. The space $L^{2}(0,T;\mathcal{H}^{1})$ is the set of measurable functions $f:[0,T]\rightarrow  \mathcal{H}^{1}$ such that 
$\int^{T}_{0} \Vert f_{s} \Vert^{2}_{\mathcal{H}^{1}}ds< \infty. $
\end{definition}

We can now give the definition of the weak solution of the  hydrodynamic equation that will be derived for the process described above when $\theta\geq 1$. {Recall that the operator $ D_{\lambda,\sigma} $ is defined for any vectors $ \lambda=(\lambda_1,\dots,\lambda_K),\sigma= (\sigma_1,\dots,\sigma_K)$  and $ f:[0,1]\to\mathbb{R} $ as
\begin{align}\label{exp:D}
	(D_{\lambda,\sigma}f)(u)=
	\sum_{x=1}^K 
	\{
	\lambda_x(1-f(u))f^{x-1}(u)-\sigma_xf(u)(1-f(u))^{x-1}
	\}.
\end{align}}
\begin{definition}
	\label{def:weak_sol_ Robin}
	{Let $f_0:[0,1]\rightarrow [0,1]$ be a measurable function.} We say that  $\rho:[0,T]\times[0,1] \to [0,1]$ is a weak solution of the heat equation with Robin boundary conditions {(this will be obtained in the case $\theta=1$)}
	\begin{equation}\label{eq:Robin_equation}
	\begin{cases}
	&\partial_{t}\rho_{t}(u)=  \partial^2_u\, {\rho} _{t}(u), \quad (t,u) \in [0,T]\times(0,1),\\
	&\partial_u\rho _{t}(0)= -D_{\alpha,\gamma}\rho_t(0)
,\quad t \in [0,T]{,}\\ &\partial_u \rho_{t}(1)=D_{\beta,\delta}\rho_t(1)
	,\quad t \in [0,T] {,}\\
	&\rho(0,\cdot)= f_0(\cdot),
	\end{cases}
	\end{equation}
	if the following two conditions hold: 
	\begin{enumerate}[1.]
		\item $\rho \in L^{2}(0,T;\mathcal{H}^{1})$, 
		
		\item $\rho$ satisfies the weak formulation:
		\begin{equation}\label{eq:Robin_integral}
		\begin{split}
		&F(\rho, G,t)
		:=\langle \rho_{t},  G_{t}\rangle  -\langle f_0 , G_{0} \rangle - \int_0^t\langle \rho_{s},\Big( \partial^2_u + \partial_s\Big) G_{s} \, \rangle {ds}  \\
		&+ \int^{t}_{0}  \Big \{\rho_{s}(1) \partial_u G_{s}(1)-\rho_{s}(0)  \partial_u G_{s}(0) \Big\} \, ds
		 -\int^{t}_{0}G_{s}(1)(D_{\beta,\delta}\rho_s)(1)
		ds
		-\int_0^t G_{s}(0) (D_{\alpha,\gamma}\rho_s)(0) ds=0,
		\end{split}   
		\end{equation}
	for all $t\in [0,T]$, any function $G \in C^{1,2} ([0,T]\times[0,1])$. 
	\end{enumerate}
	
	\medskip
	
We say that  $\rho:[0,T]\times[0,1] \to [0,1]$ is a weak solution of the heat equation with Neumann boundary conditions (this will be obtained in the case $\theta>1$)
\begin{equation}\label{eq:Neumann_equation}
	\begin{cases}
	&\partial_{t}\rho_{t}(u)=  \partial^2_u\, {\rho} _{t}(u), \quad (t,u) \in [0,T]\times(0,1),\\
	&\partial_u\rho _{t}(0)=\partial_u\rho _{t}(1)=0,\quad t \in [0,T]{,} \\
	&\rho(0,\cdot)= f_0(\cdot),
	\end{cases}
	\end{equation}
if conditions 1. and 2. above hold, with \eqref{eq:Robin_integral} replaced by  
\begin{equation}
\label{eq:Neumann_integral}
F(\rho, G,t):=\langle \rho_{t},  G_{t}\rangle  -\langle f_0 , G_{0} \rangle - \int_0^t\langle \rho_{s},\Big( \partial^2_u + \partial_s\Big) G_{s} \, ds\rangle+ \int^{t}_{0}  \Big \{\rho_{s}(1) \partial_u G_{s}(1)-\rho_{s}(0)  \partial_u G_{s}(0) \Big\} \, ds=0.
\end{equation}
\end{definition}

{\begin{rem}
Observe that  since $\rho \in L^{2}(0,T;\mathcal{H}^{1})$, above in \eqref{eq:Robin_integral} the quantities $\rho_s (0)$ and $\rho_s (1)$ are well defined  for almost every time $s$.
\end{rem}}
Throughout the present article we make the following assumption in the case $\theta=1$ (Robin {boundary} conditions),
	\begin{equation}
	\label{H0} 
	\tag{H0} 
	\mbox{The (finite) sequences } \;\; \alpha, \;\;  \gamma,\;\; \beta \;\; {\textrm{and}}\;\; \delta\mbox{ are non-increasing},
	\end{equation}
which ensures uniqueness of the weak solutions of equation \eqref{eq:Robin_equation}:
\begin{lem}\label{lem:uniqueness}[Uniqueness of weak solutions] Consider the notion of weak solution introduced in Definition \ref{def:weak_sol_ Robin}, and fix a measurable initial profile $f_0:[0,1]\to [0,1]$. 
Assuming \eqref{H0},
the weak solution of \eqref{eq:Robin_equation} is unique. Moreover,  the weak solution of \eqref{eq:Neumann_equation} is unique. 
\end{lem}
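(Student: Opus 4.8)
The plan is to prove uniqueness by the standard energy method: suppose $\rho^1$ and $\rho^2$ are two weak solutions with the same initial datum $f_0$, set $\bar\rho=\rho^1-\rho^2$, and show $\bar\rho\equiv 0$. Since both solutions lie in $L^2(0,T;\mathcal H^1)$ and satisfy the weak formulation \eqref{eq:Robin_integral} against every test function $G\in C^{1,2}([0,T]\times[0,1])$, a density argument lets us enlarge the class of admissible test functions. In particular I would aim to use $\bar\rho$ itself (or rather a time-truncation $G_s = \mathbf 1_{\{s\le t\}}\bar\rho_s$, regularized by mollification in time) as a test function, which is the usual mechanism for producing an $L^2$-in-space / $H^1$-in-space Gronwall inequality. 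Plugging this choice into the weak formulation, the interior term $\int_0^t\langle\bar\rho_s,(\partial_u^2+\partial_s)G_s\rangle\,ds$ produces, after an integration by parts in $u$, the dissipation term $-\int_0^t\|\bar\rho_s\|_1^2\,ds$ together with a boundary contribution, and the $\partial_s$ part produces $\tfrac12\|\bar\rho_t\|_{L^2}^2$ (using $\bar\rho_0=0$). The net identity should read, schematically,
\[
\tfrac12\|\bar\rho_t\|_{L^2}^2 + \int_0^t\|\bar\rho_s\|_1^2\,ds
= -\int_0^t \bar\rho_s(1)\big[(D_{\beta,\delta}\rho^1_s)(1)-(D_{\beta,\delta}\rho^2_s)(1)\big]\,ds
+ \int_0^t \bar\rho_s(0)\big[(D_{\alpha,\gamma}\rho^1_s)(0)-(D_{\alpha,\gamma}\rho^2_s)(0)\big]\,ds,
\]
where the boundary terms coming from $\partial_u G_s(0),\partial_u G_s(1)$ have combined with the Robin boundary integrals in \eqref{eq:Robin_integral} precisely because $\rho^i$ satisfy the correct boundary conditions in the weak sense.

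The key point is then to control the two boundary terms on the right by the left-hand side. Here is where hypothesis \eqref{H0} enters. Writing $a=\rho^1_s(1)$, $b=\rho^2_s(1)$, one needs a \emph{one-sided} (monotonicity) estimate of the form $(a-b)\big[(D_{\beta,\delta}a)-(D_{\beta,\delta}b)\big]\le C\,(a-b)^2$, i.e. the map $u\mapsto D_{\beta,\delta}(u) - Cu$ should be non-increasing on $[0,1]$ for a suitable constant $C$, and likewise at $u=0$ with $D_{\alpha,\gamma}$ but with the opposite sign convention. Expanding $D_{\lambda,\sigma}(u)=\sum_{x=1}^K\{\lambda_x(1-u)u^{x-1}-\sigma_x u(1-u)^{x-1}\}$ and differentiating, one checks that the monotone-decreasing summands dominate provided the coefficients $\lambda_x,\sigma_x$ are non-increasing in $x$ — this is exactly \eqref{H0}. (For $K=1$ this is automatic and one recovers the linear-Robin estimate of \cite{bmns}; for $K=2$ with $\alpha_2=\gamma_2$, $\beta_2=\delta_2$ the problem reduces to linear Robin as noted in the text.) Given such a bound, the identity becomes
\[
\tfrac12\|\bar\rho_t\|_{L^2}^2 + \int_0^t\|\bar\rho_s\|_1^2\,ds \le C\int_0^t\big(\bar\rho_s(0)^2+\bar\rho_s(1)^2\big)\,ds,
\]
and one finishes with a trace/interpolation inequality: for $g\in\mathcal H^1([0,1])$, $g(0)^2+g(1)^2\le \varepsilon\|g\|_1^2 + C_\varepsilon\|g\|_{L^2}^2$. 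Choosing $\varepsilon$ small absorbs the dissipation term, leaving $\|\bar\rho_t\|_{L^2}^2\le C\int_0^t\|\bar\rho_s\|_{L^2}^2\,ds$, whence $\bar\rho\equiv0$ by Gronwall. The Neumann case \eqref{eq:Neumann_equation} is the same argument with the boundary terms absent, so it is strictly simpler and needs no assumption on the parameters.

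The main obstacle I anticipate is the regularization needed to legitimately use $\bar\rho$ as a test function: $\bar\rho$ is only $L^2(0,T;\mathcal H^1)$, not $C^{1,2}$, and in particular one has no a priori regularity in time, so one must mollify in the time variable (e.g. via a Steklov average $\bar\rho^\varepsilon_s=\tfrac1\varepsilon\int_s^{s+\varepsilon}\bar\rho_r\,dr$), verify that the mollified function is an admissible test function up to boundary-in-time corrections, derive the energy inequality for the mollification, and pass to the limit $\varepsilon\to0$ using the $L^2(0,T;\mathcal H^1)$ bound and the continuity in $t$ of $t\mapsto\|\bar\rho_t\|_{L^2}^2$ (which itself must be established, typically by showing $\partial_t\bar\rho\in L^2(0,T;(\mathcal H^1)^*)$ so that $\bar\rho\in C([0,T];L^2)$). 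This is routine but somewhat delicate; a common alternative, which I would also be prepared to use, is to argue directly from the weak formulation with smooth test functions by expanding in a basis of eigenfunctions of the Neumann Laplacian and testing against finitely many of them — but then reinserting the nonlinear boundary terms is awkward, so the Steklov-average route is cleaner. I would also double-check the sign bookkeeping at $u=0$: the boundary condition there is $\partial_u\rho(0)=-D_{\alpha,\gamma}\rho(0)$, so after integration by parts the $\rho_s(0)\partial_uG_s(0)$ term in \eqref{eq:Robin_integral} must recombine with $-\int G_s(0)(D_{\alpha,\gamma}\rho_s)(0)$ with the correct orientation; this is where a careless sign would break the monotonicity argument, so it deserves care but is not a genuine difficulty.
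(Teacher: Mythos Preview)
Your energy-method approach is correct in principle and would succeed once the regularization you flag is carried out, but the paper follows a genuinely different route, namely Filo's duality method. Instead of testing against $\bar\rho$ (which, as you note, is not in $C^{1,2}$), the paper first factors the boundary nonlinearity as $D_{\alpha,\gamma}\rho^{(1)}-D_{\alpha,\gamma}\rho^{(2)}=-w\,V_{\alpha,\gamma}(\rho^{(1)},\rho^{(2)})$ (this is Lemma~\ref{lem:D1-D2}; assumption \eqref{H0} is precisely what makes $V_{\alpha,\gamma}\geq 0$), and then tests $w$ against the solution $\varphi_k\in C^{1,2}$ of a \emph{backward linear} heat equation with linear Robin data $\partial_u\varphi_k(s,0)=b_k(s,0)\varphi_k(s,0)$, where $b_k$ is a smooth $L^p$-approximation of the measurable coefficient $V_{\alpha,\gamma}(0,\cdot)$. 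Since $(\partial_s+\partial_u^2)\varphi_k=0$ the bulk term vanishes identically, the boundary terms are $O(\varepsilon)$ by construction of $b_k$, and letting the terminal datum approximate $\mathbf 1_{\{w_t>0\}}$ forces $w\equiv 0$. The trade-off: your argument is conceptually more direct and avoids the two auxiliary linear lemmas on the dual problem, but you must rigorously justify the Steklov-average passage (in particular the continuity of $t\mapsto\|\bar\rho_t\|_{L^2}^2$); the paper sidesteps that work by keeping the test function genuinely smooth throughout.

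Two side remarks on your write-up. First, there is a sign slip in your displayed identity: both boundary contributions on the right-hand side carry a $+$ sign, and under \eqref{H0} the factorization gives $\bar\rho_s(u)\,[D\rho^1_s-D\rho^2_s](u)=-\bar\rho_s(u)^2\,V(u,s)\leq 0$ at each endpoint, so the right-hand side is already non-positive and you conclude $\bar\rho\equiv 0$ immediately, with no need for the trace inequality or Gronwall. Second, if you wish to bypass time-mollification altogether, the paper (for another purpose) establishes that weak and mild solutions coincide and that the mild representation $S\rho$ is smooth; using $\bar\rho=S\rho^{(1)}-S\rho^{(2)}$ would then let you test directly.
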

The proof of the first statement is postponed to Appendix \ref{ap:uni_weak}. The Neumann case is classical and for that reason it is omitted, but the proof can be found in \cite{patriciaantigo}. For the sake of concision, we do not recall for each of our main results that assumption \eqref{H0} is made, however since it guarantees uniqueness of weak solutions, this assumption is made throughout the article  whenever $\theta=1$.
\begin{rem}
For $ \beta=\gamma\equiv j $ and $ \delta=\alpha\equiv0 $  we recover the boundary conditions of \cite{DMP12}.	
\end{rem}
\begin{rem}
For $ K=2 $, \eqref{eq:Robin_equation} rewrites as
 \begin{equation}\label{eq:Robin_equation_new_new}
 \begin{cases}
 &\partial_{t}\rho_{t}(u)=  \partial^2_u\, {\rho} _{t}(u), \quad (t,u) \in [0,T]\times(0,1),\\
 &\partial_u\rho _{t}(1)= \beta_1-(\beta_1+\delta_1)\rho_{t}(1)+(\delta_2-\beta_2)(\rho_t^2(1)-\rho_t(1)),\quad t \in [0,T] {,}\\ &\partial_u\rho_{t}(0)=\rho_t(0)(\alpha_1+\gamma_1)-\alpha_1-(\gamma_2-\alpha_2)(\rho_{t}^2(0)-\rho_t(0)),\quad t \in [0,T] {,}\\
 &\rho(0,\cdot)= f_0(\cdot),
 \end{cases}
 \end{equation}
and for $\alpha_2=\gamma_2$ and $\beta_2=\delta_2$ we recover the linear Robin boundary conditions as in \cite{bmns} and when  $\alpha_2=\gamma_2=\beta_2=\delta_2=0$ and $\beta_1=1-\delta_1=\beta$ and   $\alpha_1=1-\alpha_1=\alpha$ we deal with exactly the same model of \cite{bmns} and we recover their result.
{Further note that the weak solution of} \eqref{eq:Robin_equation_new_new} when $\alpha_2=\gamma_2$ and $\beta_2=\delta_2$ (corresponding to linear Robin boundary conditions) is shown in \cite{bmns} to be unique. 
\end{rem}

\subsection{Hydrodynamic limit}
\label{sec:HL}

In this section we state the hydrodynamic limit of the process $\{\eta_{{tN^2}}\}_{t\geq{0}}$. Note the scaling factor $N^2$ whose purpose is to accelerate the process to a diffusive time scale. Let ${\mc M}^+$ be the space of positive measures on $[0,1]$ with total mass bounded by $1$ equipped with the weak topology. For any configuration  $\eta \in \Omega_{N}$ we define the empirical measure ${\pi^{N}(\eta,\cdot)\in{\mc M}^+}$ on $[0,1]$ as
\begin{equation}\label{MedEmp}
\pi^{N}(\eta, du)=\dfrac{1}{N-1}\sum _{x\in \Lambda_{N}}\eta(x)\delta_{\frac{x}{N}}\left( du\right),
 \end{equation}
where $\delta_{a}$ is a Dirac mass on $a \in [0,1]${. Given the trajectory $\{\eta_{{tN^2}}\}_{t\geq{0}}$ of the \emph{accelerated} process, we further introduce 
$\pi^{N}_{t}(du):=\pi^{N}(\eta_{tN^2}, du)$ the empirical measure at the macroscopic time $t$.} 
Below, and in what follows, we use the notation $\langle \pi^{{N}}_t, G\rangle$ to denote the integral of $G$ w.r.t. the measure $\pi_t^{{N}}$. This notation should not be confused with the inner product in $L^2([0,1])$.
Fix $T>0$ and $\theta\geq 0$. We denote by $\PP _{\mu _{N}}$ the probability measure in the Skorohod space $\mathcal D([0,T], \Omega_N)$ induced by the  Markov process $\{\eta_{{tN^2}}\}_{t\geq{0}}$ and the initial probability measure $\mu_N$ and  $\EE _{\mu _{N}}$ denotes the expectation w.r.t. $\PP_{\mu _{N}}$.  

\begin{definition}\label{def:meas_ass}
{We say that} a sequence of probability measures $\lbrace\mu_{N}\rbrace_{N\geq 1 }$ on $\Omega_{N}$  is associated with a profile {$\rho_{0}:[0,1]\rightarrow[0,1]$} if for any continuous function $G:[0,1]\rightarrow \mathbb{R}$  and every $\delta > 0$ 
\begin{equation}\label{assoc_mea}
  \lim _{N\to\infty } \mu _{N}\Big( \eta \in \Omega_{N} : \big|\langle {\pi^N}, G\rangle - \langle G,\rho_{0}\rangle\big|    > \delta \Big)= 0.
\end{equation}
\end{definition}

Our first result is the hydrodynamic limit for the process introduced above and it is stated as follows. 

\begin{thm}
\label{th:hyd_ssep}
Let {$f_0:[0,1]\rightarrow[0,1]$} be a measurable function and let $\lbrace\mu _{N}\rbrace_{N\geq 1}$ be a sequence of probability measures in $\Omega_{N}$ associated with {$f_0$} {in the sense of Definition \ref{def:meas_ass}}. Then, for any $t\in[0,T]$ {and every $\delta>0$},
\begin{equation*}\label{limHidreform}
 \lim _{N\to\infty } \PP_{\mu _{N}}\Big( \big| \langle \pi^N_t, G\rangle - \langle G,\rho_{t}\rangle\big|   > \delta \Big)= 0,
\end{equation*}
where  $\rho_{t}(\cdot)$ is the unique weak solution, in the sense of Definition \ref{def:weak_sol_ Robin}, of \eqref{eq:Robin_equation} for $\theta=1$, resp. \eqref{eq:Neumann_equation} for $\theta>1$.
\end{thm}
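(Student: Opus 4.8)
The plan is to prove Theorem \ref{th:hyd_ssep} via the classical entropy method of \cite{GPV}, adapted to handle the irreversible slow boundary dynamics. The overall scheme has two main parts: first, show that the sequence of distributions $Q_N$ of the empirical measures $\{\pi^N_t\}_{t\in[0,T]}$ on $\mathcal D([0,T],\mathcal M^+)$ is tight (this is Proposition \ref{prop:tight}), so that any subsequence has a convergent sub-subsequence; second, characterize any limit point $Q$ by showing it is concentrated on trajectories $\pi_t(du)=\rho_t(u)du$ where $\rho$ is a weak solution of \eqref{eq:Robin_equation} (for $\theta=1$) or \eqref{eq:Neumann_equation} (for $\theta>1$), in the sense of Definition \ref{def:weak_sol_ Robin} (this is Proposition \ref{prop:charac}). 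Since by Lemma \ref{lem:uniqueness} (and assumption \eqref{H0} when $\theta=1$) the weak solution is unique, all limit points coincide and the whole sequence converges to the deterministic trajectory $\rho$, which yields the stated law of large numbers at each fixed time $t$ since the evaluation map $\pi\mapsto\langle\pi,G\rangle$ is continuous.

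For the characterization, I would start from the Dynkin martingale associated with the test function $G\in C^{1,2}([0,T]\times[0,1])$, namely
\[
M^N_t(G)=\langle\pi^N_t,G_t\rangle-\langle\pi^N_0,G_0\rangle-\int_0^t\big(\partial_s+N^2\mathcal L_N\big)\langle\pi^N_s,G_s\rangle\,ds,
\]
together with its quadratic variation. A discrete summation by parts on $N^2\mathcal L_{N,0}$ produces the discrete Laplacian of $G$ acting on the density (converging to $\langle\rho_s,\partial_u^2 G_s\rangle$) plus the discrete boundary gradient terms $\rho_s(1)\partial_u G_s(1)-\rho_s(0)\partial_u G_s(0)$; the boundary generator $\tfrac{N^2}{N^\theta}\mathcal L_{N,b}$ produces, for $\theta=1$, terms of the form $\pm G_s(0)$ (resp. $G_s(1)$) times the empirical average of the boundary rates $c^\pm_x(\eta_s)$, which are local functions of the configuration near the boundary involving products like $\eta(1)\cdots\eta(x-1)(1-\eta(x))$; for $\theta>1$ these boundary contributions carry a prefactor $N^{1-\theta}\to0$ and vanish, giving the Neumann weak formulation. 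A standard $L^2$ estimate shows the quadratic variation is $O(1/N)$ uniformly, so $M^N_t(G)\to0$ in probability; an energy estimate (Appendix \ref{appendix:energy}) gives the regularity $\rho\in L^2(0,T;\mathcal H^1)$, which in particular makes $\rho_s(0),\rho_s(1)$ well defined for a.e.\ $s$.

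The crux — and the main obstacle — is to close the weak formulation at the boundary by replacing the local boundary functions by functions of the empirical density, i.e.\ to replace the time-averaged boundary rates by their "expected" values evaluated at $\rho_s(0)$ and $\rho_s(1)$. Concretely one must show, for $\theta=1$, that $\int_0^t \tfrac{1}{N^\theta}\sum_{x\in I^K_-}c^-_x(\eta_{sN^2})\,ds$ is close (in $L^1(\PP_{\mu_N})$) to $\int_0^t (D_{\alpha,\gamma}\bar\eta^{\epsilon N}_s(0))\,ds$ where $\bar\eta^{\epsilon N}$ is a spatial average, and then let $\epsilon\to0$; the nonlinearity in $D_{\alpha,\gamma}$ (for $K\ge2$) means one must replace products $\eta(1)\cdots\eta(x-1)(1-\eta(x))$ by powers of the local density, which requires a replacement lemma handling correlations near the boundary. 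This is where the explicit correlation terms mentioned in the introduction enter and where the argument departs from \cite{bmns}: one chooses a reference product measure $\nu^N$ of slowly-varying-profile type that is close to the stationary state, bounds the relative entropy $H(\mu_N t | \nu^N)\lesssim N$, and uses the entropy inequality together with a sharp comparison between the Dirichlet form $\langle -N^2\mathcal L_N\sqrt f,\sqrt f\rangle_{\nu^N}$ and the carré du champ; the factor $N^2$ against $1/N^\theta$ with $\theta\ge1$ is exactly what keeps the boundary entropy production $O(1)$ and makes the replacement lemmas (Appendix \ref{appendix:replacement}) go through. I would prove these replacement lemmas as separate statements — a "bulk" one-block/two-blocks estimate in the interior (to identify $\langle\rho_s,\partial_u^2 G_s\rangle$ and control $\rho_s(0)=\lim_\epsilon \bar\eta^{\epsilon N}(0)$), and a dedicated "boundary" replacement lemma for the nonlinear terms — and then assemble everything: tightness plus vanishing martingale plus replacement lemmas plus energy estimate give that every limit point is supported on weak solutions, and uniqueness finishes the proof.
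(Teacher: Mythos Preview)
Your proposal is correct and follows essentially the same route as the paper: tightness (Proposition \ref{prop:tight}), characterization of limit points via Dynkin's martingale plus the boundary replacement lemmas of Appendix \ref{appendix:replacement} and the energy estimate of Appendix \ref{appendix:energy}, then uniqueness (Lemma \ref{lem:uniqueness}) to conclude. One small simplification relative to your sketch: the reference measure in the paper is a Bernoulli product with \emph{constant} parameter $\alpha$ (not a slowly-varying profile), and since the SSEP is of gradient type no bulk one-block/two-blocks estimate is needed---the discrete Laplacian of $G$ appears directly after summation by parts, and the replacement lemmas are used only for the boundary terms (both the $\partial_u G$ boundary contributions from $\mathcal L_{N,0}$ and the nonlinear terms from $\mathcal L_{N,b}$).
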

{Let $\mathcal D([0,T],\mathcal{M}^{+})$ be the Skorohod space of trajectories in $\mathcal M^+$.}
{Let $\lbrace\mathbb{Q}_{N}\rbrace_{N\geq 1}$ be the  sequence of probability measures on $\mathcal D([0,T],\mathcal{M}^{+})$ induced by the  Markov process $\{\pi_{t}^N\}_{t\geq{0}}$  and $\mathbb{P}_{\mu_{N}}$, {namely} ${\mathbb{Q}_{N}=\mathbb{P}_{\mu_{N}}\circ (\pi^N)^{-1}}$.
To prove Theorem \ref{th:hyd_ssep} we first show that the sequence  $\lbrace\mathbb{Q}_{N}\rbrace_{N\geq 1}$ is tight, and then prove that any of its limit points $\mathbb Q$ is concentrated on trajectories of  measures that are  absolutely continuous with respect to the Lebesgue measure  (this is a consequence of the exclusion dynamics), whose density $\rho_{t}(u)$ is the unique (cf. Lemma \ref{lem:uniqueness}) weak solution of  the hydrodynamic equation.} 
We prove Theorem \ref{th:hyd_ssep} in Section \ref{sec:hyd}.

\subsection{Empirical currents}
\label{sec:currentsthm}
{Our next result is a law of large numbers for the empirical currents of the process}. Let $J^N_t(x)$ denote the process that counts the flux of particles {(in the accelerated process $(\eta_{sN^2})_{s\geq 0}$)} through the bond $\{x,x+1\}$ up to time ${t}$, i.e. the number of  particles that  jumped from the site $x$ to the site $x+1$ minus the number of  particles that  jumped from the site $x+1$ to the site $x$ during the time interval $[0,{t}]$. The empirical measure associated with this  \textit{conservative current} is defined as 
{\begin{equation*}
J^N_t(du)\;:=\; \frac{1}{N^2}\sum_{x=1}^{N-2}J^N_t(x) \delta_{\frac xN}(du)\,.
\end{equation*}}
Notice the normalization factor $N^2$ which is taking into account the diffusive time rescaling and the space normalization. 
For $x\in {I^K_\pm}$, we  denote by $K^N_t(x)$ the non-conservative current at the site $x$ up to time ${t}$, that is,  the number of particles that have been created minus the number of particles that have been {removed from} the system {at} site $x$. The corresponding empirical measure is given by 
{\begin{equation*}
 K^N_t(du) \;:=\frac{1}{N}\sum_{x\in {I_+^K\cup I_-^K}}\; K^N_t(x) \delta_{\frac xN}(du). 
\end{equation*}} For a test function $f$ we use the notation $\langle J^N_t , f\rangle$ and $\langle K^N_t , f\rangle$ to denote, respectively:
\begin{equation*}
 \langle J^N_t, f\rangle \;:=\frac{1}{N^2}\sum_{x=1}^{N-2}J^N_t(x)f(\tfrac xN)
\quad\textrm{and}\quad
 \langle K^N_t, f\rangle \;:=\frac{1}{N}\sum_{x\in {I_+^K\cup I_-^K}}\; K^N_t(x)f(\tfrac xN). 
\end{equation*}
Our second main result is a law of large numbers for the current fields.

\begin{thm}[Law of large Numbers for the current]
\label{thm:current}
For any $t\in[0,T]$, {$f\in C^1([0,1])$ and every $\delta>0$},
\begin{align*}
&\lim_{ N \rightarrow +\infty }
\bb P_{\mu_{N}} \Bigg[ \Big\vert \langle J^N_t, f\rangle - \int_0^t\int_0^1 f(u)\,\partial_u  \rho_s(u)\, du\,ds \Big\vert
> \delta \Bigg] \;=\; 0\,,\\
&\lim_{ N \rightarrow +\infty }
\bb P_{\mu_{N}} \Bigg[ \Big\vert \langle K^N_t, f\rangle - \textbf{1}_{\{\theta=1\}}\int_0^t f(0)(D_{\alpha,\gamma}\rho_s)(0)+f(1)(D_{\beta,\delta}\rho_s)(1)\,ds
 \Big\vert
> \delta \Bigg] \;=\; 0\,,
\end{align*}
where $\rho_t(\cdot)$ is the unique weak solution of \eqref{eq:Robin_equation} if $\theta=1$ (resp. of \eqref{eq:Neumann_equation} if $\theta>1$). In {particular}, writing $ j_t^N=J_t^N+K_t^N $, we have that $ j^N $ converges weakly to $ jdu $, where $ j $ { is given by} $ j=-\nabla\rho $.
\end{thm}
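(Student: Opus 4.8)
The plan is to derive both limits from the hydrodynamic limit (Theorem \ref{th:hyd_ssep}) by expressing the empirical currents $J^N_t(x)$ and $K^N_t(x)$ in terms of the occupation variables $\eta_s$ via Dynkin-type martingale decompositions, and then pass to the limit using the same replacement lemmas that power the proof of Theorem \ref{th:hyd_ssep}. For the conservative current, I would first note the microscopic identity $J^N_t(x) = \int_0^{tN^2}\big(\eta_s(x)-\eta_s(x+1)\big)\,ds + M^N_t(x)$, where $M^N_t(x)$ is a martingale; summing against $f(x/N)/N^2$ and performing a discrete summation by parts, $\langle J^N_t,f\rangle$ becomes $\int_0^t \frac{1}{N}\sum_{x}\nabla_N f(x/N)\,\tau_x\eta_{sN^2}(\cdot)\,ds$ plus a martingale term plus boundary contributions of lower order. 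Here $\nabla_N f$ denotes the discrete gradient, which converges uniformly to $\partial_u f$ since $f\in C^1$. The martingale term vanishes in $L^2$ because its quadratic variation is $O(N^{-2})$ after the relevant normalization (each bond jumps at rate $O(1)$, there are $N$ bonds, the integration is over time $tN^2$, and the prefactor is $N^{-2}$, so the variance is $O(t N\cdot N^{-2})=O(N^{-1})$). It then remains to identify $\lim_N \int_0^t \frac1N\sum_x \partial_u f(x/N)\,\eta_{sN^2}(x)\,ds$ with $\int_0^t\langle \partial_u f, \rho_s\rangle\,ds = -\int_0^t\langle f,\partial_u\rho_s\rangle\,ds$, which follows from Theorem \ref{th:hyd_ssep} together with the uniform bound $0\le\eta\le 1$ (allowing a standard approximation argument exchanging limit and time integral, e.g. by dominated convergence after extracting the hydrodynamic convergence for a.e. $s$), and a final integration by parts valid because $\rho_s\in\mathcal H^1$ for a.e. $s$.

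For the non-conservative current, the microscopic Dynkin formula gives $K^N_t(x)=\int_0^{tN^2}\frac{1}{N^\theta}\,b_x(\eta_s)\,ds + \widetilde M^N_t(x)$, where $b_x(\eta) = c_x^{+}(\eta)-c_x^{-}(\eta)$ is the net creation rate at site $x\in I^K_\pm$ (signed so that creation counts positively). Multiplying by $f(x/N)/N$ and summing over the finitely many boundary sites, the martingale $\widetilde M^N$ again vanishes in $L^2$ (its variance is $O(t\,N^{2}\cdot N^{-2\theta}\cdot N^{-2}) = O(t\,N^{-2\theta})$, which tends to $0$ for $\theta\ge 1$), and $\langle K^N_t,f\rangle$ reduces to $\frac{1}{N^{1+\theta}}\int_0^{tN^2}\sum_{x\in I^K_\pm} f(x/N)\,b_x(\eta_s)\,ds = N^{1-\theta}\int_0^{t}\sum_{x\in I^K_\pm} f(x/N)\,b_x(\eta_{sN^2})\,ds$. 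When $\theta>1$ the prefactor $N^{1-\theta}$ vanishes while the sum is bounded (finitely many bounded terms), giving the limit $0$, consistent with the indicator $\textbf{1}_{\{\theta=1\}}$. When $\theta=1$ the prefactor is $1$, and one must replace the time average of $b_x(\eta_{sN^2})$ by its expectation under the local equilibrium, i.e. $b_x(\eta)\rightsquigarrow (D_{\beta,\delta}\rho_s)(1)$-type terms: precisely, since $f(x/N)\to f(0)$ (resp.\ $f(1)$) for each fixed $x\in I^K_-$ (resp. $I^K_+$), and since the rates $c_x^\pm$ are polynomials in the $\eta(y)$ for $y$ in a fixed window near the boundary, one uses the boundary replacement lemmas (Appendix \ref{appendix:replacement}) to replace products like $\eta_{sN^2}(1)\cdots\eta_{sN^2}(x-1)(1-\eta_{sN^2}(x))$ by the corresponding product of the empirical density $\rho_s(0)$ (which is legitimate precisely because the boundary correlations are controlled on the diffusive scale when $\theta\ge1$), yielding $\sum_{x=1}^K\{\alpha_x(1-\rho_s(0))\rho_s^{x-1}(0) - \gamma_x\rho_s(0)(1-\rho_s(0))^{x-1}\} = (D_{\alpha,\gamma}\rho_s)(0)$, and symmetrically at the right boundary. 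Summing and accounting for the sign convention (particles created at the left boundary correspond to $+(D_{\alpha,\gamma}\rho_s)(0)$ as in the statement, and at the right boundary to $+(D_{\beta,\delta}\rho_s)(1)$) gives the claimed limit.

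The main obstacle is the $\theta=1$ identification of the non-conservative current, because it genuinely requires the boundary replacement lemmas rather than just the hydrodynamic limit as a black box — one needs to control the time-integrated boundary polynomials $b_x(\eta_{sN^2})$, and this is exactly where the closeness of the reference product measure to the stationary state (valid only for $\theta\ge1$) is used, as emphasized in the introduction. The remaining points are routine: the martingale estimates are standard Dynkin/optional-stopping computations, the summation by parts and the passage $\nabla_N f\to\partial_uf$ are elementary for $C^1$ test functions, and the final combination $j^N_t = J^N_t+K^N_t \Rightarrow -\nabla\rho\,du$ follows by adding the two limits and using the weak formulation \eqref{eq:Robin_integral} (resp.\ \eqref{eq:Neumann_integral}) to recognize that $\int_0^t f(0)(D_{\alpha,\gamma}\rho_s)(0) + f(1)(D_{\beta,\delta}\rho_s)(1)\,ds$ together with $\int_0^t\langle f,\partial_u\rho_s\rangle ds$ assembles, after integration by parts, into $\int_0^t\langle f, -\partial_u\rho_s\rangle\,ds$ plus the boundary flux terms — i.e.\ the total current field is simply $-\partial_u\rho$, the sum of bulk and boundary fluxes being encoded by a single gradient in the weak solution.
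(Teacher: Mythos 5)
Your overall strategy coincides with the paper's: decompose $J^N_t(x)$ and $K^N_t(x)$ via Dynkin's formula for the joint generator, kill the martingales through their quadratic variations, sum by parts in the conservative part, and use the replacement lemmas together with Theorem \ref{th:hyd_ssep} to identify the limits. The treatment of $K^N$ is essentially right (up to a harmless slip: the quadratic variation of the boundary martingale is $O(tN^{-\theta})$, not $O(tN^{-2\theta})$, since the jump rate enters linearly and only the jump size $f(x/N)/N$ is squared; either way it vanishes for $\theta\geq 1$).

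There is, however, a genuine error in your treatment of the conservative current. The summation by parts of $\sum_{x=1}^{N-2}f(\tfrac xN)(\eta(x)-\eta(x+1))$ produces, besides the bulk term $\tfrac 1N\sum_x\nabla^-_Nf(\tfrac xN)\eta(x)$, the boundary terms $f(\tfrac1N)\eta(1)-f(\tfrac{N-2}{N})\eta(N-1)$. After the diffusive time change these contribute $\int_0^t f(0)\eta_{sN^2}(1)-f(1)\eta_{sN^2}(N-1)\,ds$, which is of order one, not lower order as you claim; identifying it requires Replacement Lemma \ref{replacement lemma_1} to substitute $\eta_{sN^2}(1)\rightsquigarrow\rho_s(0)$ and $\eta_{sN^2}(N-1)\rightsquigarrow\rho_s(1)$, exactly as in the paper's display \eqref{eq:curr_1}. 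Correspondingly, your identity $\int_0^t\langle\partial_uf,\rho_s\rangle\,ds=-\int_0^t\langle f,\partial_u\rho_s\rangle\,ds$ is false for a general $f\in C^1([0,1])$: integration by parts for $\rho_s\in\mathcal H^1$ gives $\langle\partial_uf,\rho_s\rangle=-\langle f,\partial_u\rho_s\rangle+f(1)\rho_s(1)-f(0)\rho_s(0)$, and it is precisely the boundary terms you discarded that cancel these extra contributions. Your two mistakes compensate, so the final formula $-\int_0^t\int_0^1 f\,\partial_u\rho_s$ is correct, but the argument as written would only be valid for $f$ vanishing at $0$ and $1$; for the theorem as stated you must keep the boundary terms and apply the replacement lemma to them.
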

This theorem is proved in Section \ref{sec:currrent}.

\subsection{Hydrostatic limit}
\textcolor{red}{Let 
$
	\mathfrak{i}_1:=\alpha_1+\beta_1
	$ and $\mathfrak{o}_1:=\gamma_1+\delta_1,$
	and observe that under the conditions $ \mathfrak{i}_1\neq0 $ and $\mathfrak{o}_1\neq 0 $ the Markov process $\eta_t$ is irreducible on its finite state space $\Omega_N $. 
	To see this, it is enough to note that if  $ \mathfrak{i}_1=0 $ (resp. or $\mathfrak{o}_1=0 )$ and by taking a configuration with a single particle at distance one from the boundary (resp. or  a configuration with a single hole at a distance one from the boundary), the  empty (resp. full) configuration is absorbing.
	 For future reference, we introduce
	\begin{align}
	\mathfrak{i}_1\neq 0 \quad \textrm{and}\quad\mathfrak{o}_1\neq 0 \tag{H1}\label{H1}.
	\end{align}
Under \eqref{H1}, we shall denote the unique stationary measure of the process by $ \mu_N^{ss} $.} Our third main result concerns the hydrostatic limit for the dynamics, which gives the macroscopic behavior of our model starting from the stationary state  $ \mu_N^{ss} $. One important ingredient in our proof is the uniqueness of the stationary solution of the hydrodynamic equation. 
In the regime $\theta=1$, on the other hand, to establish the hydrostatic limit, it is sufficient to show that there is a unique stationary solution to the hydrodynamic equation \eqref{eq:Robin_equation}. 
In the  regime  $ \theta>1 $ any constant profile is a stationary solution to \eqref{eq:Neumann_equation}, however under suitable assumptions, this constant can be uniquely determined as the mass $ m^*$ to which the microscopic system relaxes on subdiffusive timescales.

To establish the different uniqueness results above, we will need in the two cases ($\theta=1$ and $\theta>1$) further assumptions. For this reason, we introduce

\begin{align}
	\textcolor{red}{\eqref{H1}\quad \text{and} \quad }
	\delta_1\leq \alpha_1
	,\;
	\beta_1\leq\gamma_1,
	&\quad\text{or}\quad
	\textcolor{red}{\eqref{H1}\quad \text{and} \quad }
		\delta_1\geq \alpha_1
	,\;
		\beta_1\geq\gamma_1,\tag{H2}\label{H2}\\
	\textcolor{red}{\eqref{H1}\quad \text{and} \quad }
	\alpha+\beta 
	,\;
	\gamma+\delta& \quad  \mbox{are non increasing.} \tag{H3}\label{H3}
\end{align}
We are now ready to state our third main result.
\begin{thm}
\label{thm:hydrostatics}
For $\theta=1$, assuming \eqref{H2}  there exists a unique stationary solution $\rho^*$ of \eqref{eq:Robin_equation},  {such that}  $\mu ^{ss}_{N}$  is associated with it in the sense of Definition \ref{def:meas_ass}, i.e. {for every $\delta>0$ and $G \in C([0,1])$}
\begin{equation*} 
\lim _{N\to\infty } \mu ^{ss}_{N}\Big( \big|\langle {\pi^N}, G\rangle - \langle G,\rho^*\rangle\big|    > \delta \Big)= 0.
\end{equation*}	
For $\theta>1$, assuming \eqref{H3} there exists a unique constant $m^*\in [0,1]$, such that $\mu ^{ss}_{N}$ is associated with the constant profile $\rho^*\equiv m^*$.
\end{thm}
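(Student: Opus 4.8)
The plan is to establish both statements by combining three ingredients: the stationarity of $\mu_N^{ss}$ (which makes the law of $\pi^N_t$ under $\mathbb P_{\mu_N^{ss}}$ independent of $t$), the hydrodynamic limit of Theorem \ref{th:hyd_ssep}, and a large-time analysis of the limiting PDE. Since $\mathcal M^+$ is compact, the family $\{\mu_N^{ss}\circ(\pi^N)^{-1}\}_{N\ge 1}$ is automatically tight; let $\nu$ denote the law of any subsequential limit. By a standard conditioning argument (as in \cite{LT, Tsu}) the law under $\mathbb P_{\mu_N^{ss}}$ of the trajectory $t\mapsto\pi^N_t$ converges along that subsequence to the law of $t\mapsto\rho^g_t$, where $g$ is distributed according to $\nu$ and $\rho^g$ denotes the weak solution of \eqref{eq:Robin_equation} (if $\theta=1$), resp.\ \eqref{eq:Neumann_equation} (if $\theta>1$), with initial datum $g$. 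Stationarity then forces, for every $t\ge 0$,
\begin{equation*}
\mathrm{Law}\big(\rho^g_t\big)=\nu\qquad\text{with }g\sim\nu.
\end{equation*}
Hence it remains to run $t\to\infty$ and use the asymptotics of the PDE, which is where the structural assumptions \eqref{H2}, \eqref{H3} enter.

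For $\theta=1$ I would prove, in Section \ref{sec:statio_uniq}, that under \eqref{H2} the stationary problem $\rho''=0$ with the nonlinear Robin conditions $-\rho'(0)=(D_{\alpha,\gamma}\rho)(0)$, $\rho'(1)=(D_{\beta,\delta}\rho)(1)$ has a unique solution $\rho^*$ — the affine profile $\rho^*(u)=\rho^*(0)+u(\rho^*(1)-\rho^*(0))$, with the two boundary values determined by a pair of algebraic equations whose unique solvability is exactly what the monotonicity in \eqref{H2} guarantees — and, in Section \ref{appendix:convergence}, that $\|\rho^g_t-\rho^*\|_{L^2}\to 0$ as $t\to\infty$ for every admissible initial datum $g$, via an energy estimate for $\rho^g_t-\rho^*$ together with the boundary dissipativity ensured by \eqref{H0}. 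Letting $t\to\infty$ in the displayed identity then yields $\nu=\delta_{\rho^*}$; since every subsequential limit equals $\delta_{\rho^*}$, the whole sequence converges and $\mu_N^{ss}$ is associated with $\rho^*$ in the sense of Definition \ref{def:meas_ass}.

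For $\theta>1$ the Neumann evolution only gives $\rho^g_t\to\int_0^1 g(u)\,du$ as $t\to\infty$, so the argument above shows merely that $\nu$ is supported on constant profiles, with the law $\lambda$ of the value of the constant being invariant under $g\mapsto\int_0^1 g$ — no information. To determine $\lambda$ I would pass to the subdiffusive time scale $N^{1+\theta}$, that is, look at $\pi^N_{tN^{\theta-1}}$: there the bulk diffusion (of order $N^2=o(N^{1+\theta})$) instantaneously homogenizes the profile, while the boundary dynamics, accelerated by a net factor $N^{1+\theta}\cdot N^{-\theta}=N$, moves an order-one amount of mass. The key technical step, carried out in Section \ref{sec:proofhydrostatics}, is a replacement lemma valid on this time scale, replacing the boundary occupation variables by the empirical density, and giving that the total mass $M^N_t:=\langle\pi^N_{tN^{\theta-1}},1\rangle$ converges to the solution $m_t$ of the autonomous ODE
\begin{equation*}
\dot m_t=\Phi(m_t),\qquad \Phi(m):=\sum_{x=1}^{K}\Big\{(\alpha_x+\beta_x)(1-m)m^{x-1}-(\gamma_x+\delta_x)\,m(1-m)^{x-1}\Big\}.
\end{equation*}
Under \eqref{H3} one checks $\Phi(0)=\mathfrak i_1>0$ and $\Phi(1)=-\mathfrak o_1<0$, and the non-increasing hypothesis on $\alpha+\beta$ and $\gamma+\delta$ forces $\Phi$ to change sign exactly once, at a value $m^*\in(0,1)$ which is therefore a globally attracting fixed point, so $\delta_{m^*}$ is the unique invariant probability measure of the flow. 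Running the stationary dynamics on the subdiffusive scale and using stationarity exactly as in the first paragraph, $\lambda$ must be invariant for this flow, whence $\lambda=\delta_{m^*}$ and $\nu=\delta_{\rho^*\equiv m^*}$.

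I expect the main obstacle to be regime-dependent. For $\theta=1$ it is purely a PDE matter: proving, under \eqref{H2}, both the uniqueness of the nonlinear stationary profile and its global attractivity for the Robin evolution. For $\theta>1$ it is the replacement lemma on the subdiffusive time scale $N^{1+\theta}$ — the estimates that close the evolution of the total mass genuinely fail on the diffusive scale, and one must exploit the slowness $\theta>1$ to extract enough local equilibration of the bulk before the reservoirs transfer an order-one amount of mass; here the absence of correlation estimates, emphasized throughout the paper, makes this the delicate point.
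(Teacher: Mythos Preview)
Your proposal is correct and follows essentially the same route as the paper: for both regimes you invoke the \cite{LT,Tsu} machinery (tightness of $\mu_N^{ss}\circ(\pi^N)^{-1}$, hydrodynamic limit plus stationarity to force concentration on the set $\mathcal E$ of stationary solutions), then for $\theta=1$ you reduce to uniqueness and $L^2$-attractivity of the Robin stationary profile (Sections \ref{sec:statio_uniq} and \ref{appendix:convergence}), while for $\theta>1$ you pass to the subdiffusive scale $N^{1+\theta}$, close the equation for the total mass via a replacement lemma, and identify $m^*$ as the unique globally attracting zero of $\Phi=D_{\mathfrak i,\mathfrak o}$. Your identification of the main obstacles --- the PDE analysis under \eqref{H2} for $\theta=1$ and the subdiffusive replacement lemma (Corollary \ref{replacement lemma_mass}) for $\theta>1$ --- matches the paper exactly.
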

\begin{rem}
[On assumptions \eqref{H2} and \eqref{H3}]
Assumption \eqref{H2} is used in the case $ \theta=1 $ to guarantee uniqueness of the stationary solution. Assumption \eqref{H3} is used for $ \theta>1 $ to prove convergence of the mass of the system to a defined constant.
As we will see through the article, one could weaken these assumptions, yet we elected to settle for assumptions \eqref{H2} and \eqref{H3} to provide the reader with a working case, since finding optimal bounds for both of the cases is a non-trivial algebraic problem that goes beyond the scope of this article.
\end{rem}
The proof of Theorem \ref{thm:hydrostatics} is the purpose of Section \ref{sec:hs}.

\section{Proof of Theorem \ref{th:hyd_ssep}}\label{sec:hyd}

In this section we present the proof of the hydrodynamic limit and we start by giving an heuristic argument in order to deduce the notion of weak solution given in Definition \ref{def:weak_sol_ Robin}.  To simplify the exposition we present the proof  for the case $K=2$ but  emphasize that the general case follows straightforwardly. 
\subsection{Heuristic argument}
We start by briefly outlining the argument before detailing the relevant steps of the proof in Section \ref{sec:characterization}.
Let us fix a test function $G\in C^{1,2}([0,T]\times [0,1])$. Following from Dynkin's formula and simple computations, 
\begin{equation}\label{Dynkin'sFormula}
\begin{split}
{M}_{t}^{N}(G)&:= {\langle \pi_{t}^{N},G_t\rangle -\langle \pi_{0}^{N},G_0\rangle-\int_{0}^{t}\langle \pi^N_s,(\partial_s+N^2\mathcal{L}_N)G_s\rangle\, ds}\\
&= \langle \pi_{t}^{N},G_t\rangle -\langle \pi_{0}^{N},G_0\rangle-\int_{0}^{t}\langle \pi^N_s,(\partial_s+\Delta_N)G_s\rangle\, ds\\
&-\int_0^t\nabla _N^+G_s(0)\eta_{sN^2}(1)-\nabla_N^-G_s(1)\eta_{{sN^2}}(N-1)\, ds\\
&-{N^{1-\theta}}\int_0^t
G_s(\tfrac1N)
\{
\alpha_1-\eta_{{sN^2}}(1)(\alpha_1+\gamma_1)
\}
+G_s(\tfrac{N-1}{N})
\{
\beta_1-\eta_{{sN^2}}(N-1)(\beta_1+\delta_1)
\}ds\\
&-{N^{1-\theta}}\int_0^t
G_s(\tfrac2N)
\{
\alpha_2\eta_{{sN^2}}(1)-\gamma_2\eta_{{sN^2}}(2)-\eta_{{sN^2}}(1)\eta_{{sN^2}}(2)(\alpha_2-\gamma_2)
\}ds
\\&-{N^{1-\theta}}\int_0^tG_s(\tfrac{N-2}{N})
\{
\beta_2\eta_{{sN^2}}(N-1)-\delta_2\eta_{{sN^2}}(N-2)-\eta_{{sN^2}}(N-1)\eta_{{sN^2}}(N-2)(\beta_2-\delta_2)
\}ds
\end{split}
\end{equation}
is a martingale with respect to the natural filtration  $\{\mathcal{F}_{t}\}_{ t\ge 0}$, where for each $t\ge 0$, $\mathcal{F}_t:=\sigma({\eta_{sN^2}}: s \leq  t)$. 
{Above, for $x\in\Lambda_N$,  the discrete derivatives  of  $G$ are defined by
$$\nabla^+_{N} G_{s}(\tfrac{x}{N})= N \big[ G (\tfrac{x+1}{N}) -G (\tfrac{x}{N}) \big],$$ $\nabla^-_{N} G(\tfrac{x}{N})= \nabla^+G(\tfrac {x-1}{N})$ and its discrete laplacian is defined by $$\Delta_{N} G(\tfrac{x}{N})= N^2 \big[ G (\tfrac{x+1}{N}) -2G (\tfrac{x}{N})+G (\tfrac{x+1}{N})  \big].$$
}
\begin{rem}
	For fixed $ K\geq2 $, the expression above can be compactly written by introducing the operators $ D^{N,\pm}_{\cdot,\cdot} $ defined by
	\begin{align}\label{def:D^N}
	\begin{split}
		(D_{\lambda,\sigma}^{N,-}f)(x)&=\Big\{{ \lambda_x}f(1)\dots f(x-1) (1-f(x))-{\sigma_x}(1-f(1))\dots (1-f(x-1)) f(x) \Big\}1_{x\in I_-^K}\\
		(D_{\lambda,\sigma}^{N,+}f)(x)&=\Big\{ \lambda_x (1-f(x))f(x+1)\dots f(N-1)-\sigma_x f(x)(1-f(x+1))\dots (1-f(N-1)) \Big\}1_{x\in I_+^K}			
	\end{split}
	\end{align}
for $ f:\mathbb{Z}\to\mathbb{R} $ and $ \lambda=(\lambda_1,\dots,\lambda_K), \sigma=(\sigma_1\dots,\sigma_K) $. {With this notation,} Dynkin's formula takes the form
\begin{equation*}
\begin{split}
{M}_{t}^{N}(G)&= \langle \pi_{t}^{N},G_t\rangle -\langle \pi_{0}^{N},G_0\rangle-\int_{0}^{t}\langle \pi^N_s,(\partial_s+\Delta_N)G_s\rangle\, ds\\
&-\int_0^t
\Big\{
\nabla _N^+G_s(0)\eta_{{sN^2}}(1)-\nabla_N^-G_s(1)\eta_{{sN^2}}(N-1) 
\Big\}
ds\\
&-{N^{2-\theta}}\int_0^t
\Big\{
\inner{\pi^N(D_{\alpha,\gamma}^{N,-}\eta_{{sN^2}},\cdot),G_s}
+\inner{\pi^N(D_{\beta,\delta}^{N,+}\eta_{{sN^2}},\cdot),G_s} 
\Big\}ds.
\end{split}
\end{equation*}
In fact, the {main} technical issue with the proof of the hydrodynamic limit is, as we will also see ahead, the proof of the replacement lemmas which {roughly states}
$ \tfrac{N^2}{N^\theta}\inner{\pi^N(D_{\alpha,\gamma}^{N,-}\eta_{{sN^2}},\cdot),G_s}\xrightarrow{\mathbb{P}_{\mu_N}}(D_{\alpha,\gamma}\rho_s)(0)G_s(0) $ for $ \theta=1 $, with $ \rho(\cdot) $ being the unique weak solution to \eqref{eq:Robin_equation}.
\end{rem}
As we will show, $\mathbb{E}_{\mu_N} \Big[ \big(M_{t}^{N}(G)\big)^2\Big]$ vanishes as $N\to\infty$. We now focus on the  integral terms above. Let us start with the  boundary term coming from the bulk dynamics, that is, the  term on the second line of the previous display. By Theorem \ref{replacement lemma_1}, with the choice $\psi\equiv 1$,    we are able to replace {the time integral of $\eta_{{s}}(1)$ (resp.  $\eta_{{s}}(N-1)$) by the  time integral of the } average in a box  of size $\lfloor\varepsilon N\rfloor$ {to the right of site} $1$ (resp. {to the left of site} $N-1$):
  \begin{equation}\label{eq:emp_aver}
  \overrightarrow{\eta}_{{s}}^{\varepsilon N}(1):=\frac{1}{\varepsilon N}\sum_{x=2}^{1+\varepsilon N}\eta_{{s}}(x), \quad \overleftarrow{\eta}_{{s}}^{\varepsilon N}(N-1):=\frac{1}{\varepsilon N}\sum_{x=N-2}^{N-1-\varepsilon N}\eta_{{s}}(x),\end{equation} then,
since, for $N $ sufficiently big, 
$
\overrightarrow{\eta}^{\varepsilon N}_{{s}}(1)\sim\rho_s(0)$ (resp. $ \overrightarrow{\eta}_{{s}}^{\varepsilon N}(N-1)\sim\rho_s(1))$
 in a sense which will be explained later on, and by a Taylor expansion on  the {test} function $G$, we arrive  at
\begin{equation*}
\begin{split}
\int_0^t\partial_u G_s(0)\rho_s(0)-\partial_u G_s(1)\rho_s(1) \,ds.
\end{split}
\end{equation*}
which is exactly the fourth term at the right hand side of \eqref{eq:Robin_integral}. 
By abuse of notation, above and below $\varepsilon N$ denotes $\lfloor\varepsilon N\rfloor$.
 Now we analyse the terms coming from the boundary dynamics.  We start with the terms on the fourth line on the right hand-side of \eqref{Dynkin'sFormula}. Note that when $\theta>1$, since $G$ and $\eta$ are bounded,  these terms are of order $O(N^{1-\theta})$ and so they vanish as $N\to+\infty$.  When $\theta=1$ and using again Theorem \ref{replacement lemma_1}, with the choice $\psi\equiv 1$, those terms are going to contribute to the integral formulation with 
 \begin{equation*}
\int_0^t G_s(0)(\alpha_1-(\alpha_1+\gamma_1)\rho_{s}(0))+  G_s(1)(\beta_1-(\beta_1+\delta_1)\rho_{s}(1)) \, ds.
\end{equation*}
 Now we look at the fifth and sixth terms at the right hand-side of \eqref{Dynkin'sFormula}. We focus on the terms on the fifth line, but we note that the analysis is completely analogous for the terms in the sixth line. As before, for $\theta>1$ those terms are of order $O(N^{1-\theta})$ and so they vanish as $N\to+\infty$. When $\theta=1$,  from Theorem \ref{replacement lemma_0}, with the choice $\varphi\equiv 1$, we can replace, for any term that does not involve the product of $\eta(1)$ and $\eta(2)$, $\eta(2)$ by $\eta(1)$ and from Theorem \ref{replacement lemma_1} (with  $\psi\equiv 1$), replace $\eta(1)$ by  $ {\eta}^{\varepsilon N}(1)$. For the quadratic terms in $\eta(1)\eta(2)$ we first apply Theorem \ref{replacement lemma_1} (with $\psi(\eta)=\eta(1) $), to replace $\eta(2)$ by ${\eta}^{\varepsilon N}(1)$. In the resulting term $\eta(1)\eta^{\varepsilon N}(1)$, we then replace $\eta(1)$ by ${\eta}^{\varepsilon N}(1)$ by applying Theorem \ref{replacement lemma_1} (with $\psi(\eta)={\eta}^{\varepsilon N}(1)$).
From this we conclude that the terms on the fifth line of \eqref{Dynkin'sFormula} contribute to the integral formulation with 
 \begin{equation*}
\int_0^t G_s(0)(\alpha_2-\gamma_2)(\rho_{s}^2(0)-\rho_{s}(0)) \, ds.
\end{equation*}

Recall that we defined after Theorem \ref{th:hyd_ssep} the distribution $\mathbb{Q}_N$ of the trajectory of the empirical measure $\pi^N$. Assuming that one proves that  the sequence $ \{\mathbb{Q}_N\}_{N\geq 1} $ is tight (which is done in Section \ref{sec:tight}), the arguments above prove that any of its limit points is a Dirac measure supported on the trajectory $\pi_t(du)=\rho_t(u)du$ where $\rho_t(\cdot) $ is the unique weak solution of  \eqref{eq:Robin_equation}. These arguments are carried out in further detail in the next subsections.

\subsection{Tightness}\label{sec:tight}

\begin{prop}\label{prop:tight}
	The sequence $ \{\mathbb{Q}_N\}_{N\geq 1} $ is \textit{tight} under the \textit{Skorohod} topology of $ \mathcal{D}([0,T],\mathcal{M}^+) $.
\end{prop}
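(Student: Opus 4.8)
The proof of tightness for $\{\mathbb{Q}_N\}_{N\geq 1}$ follows the standard Aldous--Kipnis--Varadhan criterion adapted to the space $\mathcal M^+$. Since $\mathcal M^+$ is compact (positive measures on $[0,1]$ of total mass at most $1$, with the weak topology), it suffices to show that for every $G$ in a dense subset of $C([0,1])$ — we take $G\in C^2([0,1])$ — the sequence of real-valued processes $\{\langle \pi^N_\cdot, G\rangle\}_{N\geq 1}$ is tight in $\mathcal D([0,T],\mathbb R)$. For the latter, by Aldous' criterion, I need to verify two things: first, that $\langle \pi^N_t, G\rangle$ is bounded uniformly in $N$ and $t$ (immediate, since $|\langle \pi^N_t,G\rangle|\leq \|G\|_\infty$ by the exclusion constraint and the normalization $1/(N-1)$), and second, the Aldous equicontinuity condition: for every $\varepsilon>0$,
\begin{equation*}
\lim_{\gamma\to 0}\;\limsup_{N\to\infty}\;\sup_{\substack{\tau\in\mathcal T_T\\ \bar\gamma\leq\gamma}}\;\mathbb P_{\mu_N}\Big(\big|\langle\pi^N_{\tau+\bar\gamma},G\rangle-\langle\pi^N_\tau,G\rangle\big|>\varepsilon\Big)=0,
\end{equation*}
where $\mathcal T_T$ is the set of stopping times bounded by $T$ and $\tau+\bar\gamma$ is understood as $(\tau+\bar\gamma)\wedge T$.

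\textbf{Key steps.} To control the increment over $[\tau,\tau+\bar\gamma]$ I use the Dynkin martingale decomposition \eqref{Dynkin'sFormula}: writing $M^N_t(G)$ for the martingale, the increment $\langle\pi^N_{\tau+\bar\gamma},G\rangle-\langle\pi^N_\tau,G\rangle$ equals $M^N_{\tau+\bar\gamma}(G)-M^N_\tau(G)$ plus the integral of the drift terms over $[\tau,\tau+\bar\gamma]$. I bound each piece separately. For the drift terms: the bulk term $\int_\tau^{\tau+\bar\gamma}\langle\pi^N_s,\Delta_N G_s\rangle\,ds$ is bounded in absolute value by $\bar\gamma\,\|\Delta_N G\|_\infty\leq \bar\gamma\,C\|G''\|_\infty$ since $\Delta_N G$ converges uniformly to $G''$; the discrete-gradient boundary terms $\nabla^+_N G_s(0)\eta_{sN^2}(1)$ etc. are bounded by $\bar\gamma\,C\|G'\|_\infty$; and the boundary-dynamics terms, which carry a prefactor $N^{1-\theta}\leq N^0=1$ (here $\theta\geq 1$ is used crucially) and involve only bounded occupation variables, are bounded by $\bar\gamma\,C\|G\|_\infty$. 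Hence the total drift contribution is at most $\bar\gamma\,C(G)$ with $C(G)$ independent of $N$, and by Markov's inequality its contribution to the Aldous probability tends to $0$ as $\gamma\to0$ uniformly in $N$. For the martingale part, I use the optional stopping theorem and Chebyshev to reduce to estimating $\mathbb E_{\mu_N}\big[(M^N_{\tau+\bar\gamma}(G)-M^N_\tau(G))^2\big]=\mathbb E_{\mu_N}\big[\langle M^N(G)\rangle_{\tau+\bar\gamma}-\langle M^N(G)\rangle_\tau\big]$, where the predictable quadratic variation is computed explicitly from the generator as
\begin{equation*}
\langle M^N(G)\rangle_t=\int_0^t N^2\Big(\mathcal L_N\langle\pi^N_s,G_s\rangle^2-2\langle\pi^N_s,G_s\rangle\,\mathcal L_N\langle\pi^N_s,G_s\rangle\Big)\,ds.
\end{equation*}
A direct computation shows the bulk contribution to the integrand is $\tfrac{1}{(N-1)^2}\sum_{x=1}^{N-2}(\nabla^+_N G_s(x/N))^2(\eta(x)-\eta(x+1))^2=O(1/N)$ (one factor $N^2$ from the time scaling is absorbed by two factors $1/N$ from the normalization squared and $(1/N)^2$ from each discrete gradient — net $O(1/N)$), and the boundary contribution carries the prefactor $N^{2-\theta}\cdot \tfrac{1}{(N-1)^2}=O(N^{-\theta})=O(1/N)$. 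Therefore $\langle M^N(G)\rangle_{\tau+\bar\gamma}-\langle M^N(G)\rangle_\tau\leq \bar\gamma\cdot C(G)/N\to 0$, which kills the martingale contribution as well.

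\textbf{Main obstacle.} None of the steps is genuinely hard here; the proof is a routine application of the Aldous criterion, and the only place where care is needed — and where the hypothesis $\theta\geq 1$ enters — is in checking that the boundary terms in both the drift and the quadratic variation do not blow up. The boundary-dynamics drift terms carry $N^{1-\theta}$ and the quadratic-variation boundary terms carry $N^{2-\theta-2}=N^{-\theta}$; for $\theta\geq 1$ both are $O(1)$ and $O(N^{-1})$ respectively, uniformly bounded, so everything goes through. I would also note that a uniform control of the initial data is automatic since $\{\mu_N\}$ is associated with $f_0$ (and in any case $\langle\pi^N_0,G\rangle$ lives in the compact interval $[-\|G\|_\infty,\|G\|_\infty]$), so no extra assumption on $\mu_N$ is needed. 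Finally, tightness of each coordinate projection $\langle\pi^N_\cdot,G\rangle$ for $G$ ranging over the dense class $C^2([0,1])$, together with compactness of $\mathcal M^+$, yields tightness of $\{\mathbb Q_N\}$ in $\mathcal D([0,T],\mathcal M^+)$ by the standard criterion (e.g. as in \cite{GPV} or Kipnis--Landim), completing the proof.
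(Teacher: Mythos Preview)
Your proof is correct and follows essentially the same route as the paper: Aldous' criterion applied to the real-valued processes $\langle\pi^N_\cdot,G\rangle$ for $G\in C^2([0,1])$, the Dynkin decomposition into drift plus martingale, uniform bounds on the drift using $|\Delta_N G|\lesssim\|G''\|_\infty$, $|\nabla_N^\pm G|\lesssim\|G'\|_\infty$ and the factor $N^{1-\theta}\leq 1$ on the boundary terms, and control of the martingale increment via its explicit quadratic variation $\int B^N_s(G)\,ds$, whose bulk part is $O(N^{-1})$ and whose boundary part is $O(N^{-\theta})$. Your bound $O(N^{-\theta})$ on the boundary quadratic variation is in fact slightly sharper than the paper's stated $O(N^{1-\theta})$, but both suffice for $\theta\geq 1$.
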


\begin{proof}
	From Chapter 4 of \cite{KL}, in order to prove tightness  it is enough to show that 
		\begin{align*}
	\lim_{\gamma \to 0}\limsup_{N\to+\infty}\sup_{\tau\in\mathcal{T}_N,{\lambda}\leq\gamma}\mathbb{P}_{\mu_N}^N\left(\eta_{\cdot}\in\mathcal{D}([0,T],\Omega_N):\abs{\langle\pi^N_{\tau+\lambda},G\rangle-\langle\pi^N_{\tau},G\rangle}>\epsilon	\right)=0,
	\end{align*}
	for any continuous function  $G:[0,1]\to\bb R$ {and every $\epsilon >0$}.  Above $ \mathcal{T}_N $ is the set of stopping times bounded by $ T $.  In  fact, we are going to prove the result for functions in $C^2([0,1])$, but then, by an $L^1$ approximation it is simple to extend the result to continous functions. By Proposition 4.1.7 in \cite{KL} it is enough to show the result for every
function G in a dense subset of C([0; 1]), with respect to the uniform topology.
	From now on we assume that $G\in C^2([0,1])$. From Dynkin's formula, plus 
	 Chebyshev's and Markov's inequality, we can bound the previous probability by
	\begin{align*}\begin{split}
	& \frac{{2}}{\epsilon} \mathbb{E}_{\mu_N}\left[\abs{\int_{\tau}^{\tau+\lambda}N^2\mathcal{L}_N\langle\pi_s^N,G\rangle ds}\right]
	+\frac{{4}}{\epsilon^2}\mathbb{E}_{\mu_N}
	\left[\left(M_{\tau}^{N}(G)-M_{\tau+\lambda}^{N}(G)\right)^2\right].
	\end{split}\end{align*}
	Observe that since $G\in C^2([0,1])$ we have that {$ \abs{\Delta_N G(\tfrac{x}{N})}\leq 2\norm{G''}_{\infty} $ }
	and 
	$ \abs{\nabla_N^\pm G(\tfrac{x}{N})}\leq \norm{G'}_\infty $.
	{In particular, since}  there is at most one particle per site, {and recovering from \eqref{Dynkin'sFormula} the expression for $\mathcal{L}_n\langle\pi_s^N,G\rangle$, we obtain straightforwardly} 

	\begin{align*}\begin{split}
	|N^2\mathcal{L}_n\langle\pi_s^N,G\rangle|
	\lesssim 
	\tfrac{1}{N^{\theta-1}}\norm{G'}_{\infty}+\tfrac{1}{N^{\theta-1}}\norm{G''}_{\infty},
	\end{split}\end{align*} 
	where the notation $ \lesssim $ means "less than a constant times". 
	{As a consequence}, for $ \theta\geq1 $
	\begin{align*}\begin{split}
	\lim_{\gamma\to\infty}\limsup_{N\to\infty}\sup_{\tau\in\mathcal{T}_T,\lambda\leq\gamma} \mathbb{E}_{\mu^N}\left[\abs{\int_{\tau}^{\tau+\lambda}N^2\mathcal{L}_N\langle\pi_s^N,G\rangle ds}\right]=0.
	\end{split}\end{align*}
	
	Now we treat the remaining term. From Dynkin's formula, $ (M_t^{N}(G))^2-\langle M^{N}(G)\rangle_t $ is a (mean zero) martingale with respect to the natural filtration $ \mathcal{F}_t $.  From \cite{KL} (Appendix $ 1.6 $) one obtains that its quadratic variation is $ \langle M^{N}(G)\rangle_t:= \int_0^tB_s^{N}(G)ds $ , where 
	\begin{align*}\begin{split}
	B_s^{N}(G):=N^2\left(\mathcal{L}_N\inner{\pi^N(\eta_s),G}^2-2\inner{\pi^N(\eta_s),G}\mathcal{L}_N\inner{\pi^N(\eta_s),G}\right).
	\end{split}\end{align*}	
 {This yields}
		\begin{align*}\begin{split}
	\mathbb{E}_{\mu_N}\left[\left(M_{\tau}^{N,H}-M_{\tau+\lambda}^{N,G}\right)^2\right]=
	\mathbb{E}_{\mu_N}\left[\int_{\tau}^{\tau+\lambda}B_s^{N}(G)ds\right].
	\end{split}\end{align*}	

	{We can split  $ B_s^{N}(G):=B_{s,-}^{N}(G)+B_{s,0}^{N}(G)+B_{s,+}^{N}(G) $, where each term corresponds to the contribution of  $ \mathcal{L}_{N,-},\mathcal{L}_{N,0},\mathcal{L}_{N,+} $, respectively}.
	Now note that 
	\begin{align*}\begin{split}
	B_{s,0}^{N}(G)
	&=N^2\sum_{x\in\Lambda_N}\left(\inner{\pi^N(\eta_s^{x,x+1}),H}-\inner{\pi^N(\eta_s),H}\right)^2\\
	&=\sum_{x\in\Lambda_N}(\eta_s(x)-\eta_s(x+1))^2(G(\tfrac{x+1}{N})-G(\tfrac{x}{N}))^2 \leq \tfrac{N-1}{N^2} \norm{(G')^2}_{\infty} .
	\end{split}\end{align*}
	For the boundary dynamics, we bound the rates in the generator by a constant, {which yields} 
	\begin{align*}\begin{split}
	B_{s,-}^{N}(G)&\lesssim \frac{1}{N^{\theta-1}}\norm{G}_{\infty}^2\textrm {and }B_{s,+}^{N}(G)\lesssim\tfrac{1}{N^{\theta-1}}\norm{G}_{\infty}^2
	\end{split}\end{align*} {and concludes the proof.}
\end{proof}

\subsection{Characterization of the limit point}\label{sec:characterization}
{We now characterize the limit points  of $ \{\mathbb{Q}_N\}_{N\geq1} $ and show that they concentrate on trajectories satisfying the weak form of the hydrodynamic equation}. 
\begin{prop}\label{prop:charac}
For any limit point $ \mathbb{Q} $ of $ \{\mathbb{Q}_N \}_{N\geq1} $, it holds
	\[\mathbb{Q}\left(\pi_\cdot\in \mathcal{D}([0,T],{\mathcal{M}^+}):{R_T\mbox{ and }}F(\rho, G,t)=0	\right)=1\]
	where  {$R_T$ is the event on which $\pi_\cdot$ is absolutely continuous w.r.t the Lebesgue measure and with density in $\mathcal{H}^1$  {on the time segment $[0,T]$} and } $ F$ is given in \eqref{eq:Robin_integral} for $ \theta=1 $ and \eqref{eq:Neumann_integral} for $ \theta>1 $.
\end{prop}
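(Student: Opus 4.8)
The plan is to characterize the limit points of $\{\mathbb{Q}_N\}_{N\geq 1}$ by showing that any limit point $\mathbb{Q}$ is concentrated on trajectories $\pi_\cdot$ that are absolutely continuous with respect to Lebesgue measure, with density in $L^2(0,T;\mathcal{H}^1)$, and that satisfy the weak formulation $F(\rho,G,t)=0$. First I would establish the regularity statement $R_T$: absolute continuity follows directly from the exclusion constraint (each $\pi^N_t$ has density bounded by $(N-1)^{-1}\cdot N/(N-1) \leq$ a constant, so any weak limit is absolutely continuous with density in $[0,1]$ a.e.), and the $\mathcal{H}^1$-regularity in the spatial variable follows from an energy estimate — this is exactly what Appendix \ref{appendix:energy} is dedicated to, so I would invoke that estimate here to conclude $\mathbb{Q}(R_T)=1$. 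With $R_T$ in hand, on the event $R_T$ we may write $\pi_t(du)=\rho_t(u)\,du$ and all the terms $\rho_s(0)$, $\rho_s(1)$ appearing in \eqref{eq:Robin_integral} are well defined for a.e. $s$.

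Next I would pass to the limit in Dynkin's martingale formula \eqref{Dynkin'sFormula}. Fix $G\in C^{1,2}([0,T]\times[0,1])$ and $\delta>0$; the goal is to show $\mathbb{Q}(\sup_{t\leq T}|F(\rho,G,t)|>\delta)=0$. Since $\mathbb{E}_{\mu_N}[(M_t^N(G))^2]$ vanishes as $N\to\infty$ (computed via the quadratic variation estimate already carried out in the proof of Proposition \ref{prop:tight}), the martingale term is negligible. It then remains to control each integral term in \eqref{Dynkin'sFormula}. The bulk term $\int_0^t\langle\pi^N_s,(\partial_s+\Delta_N)G_s\rangle\,ds$ converges to $\int_0^t\langle\rho_s,(\partial_s+\partial_u^2)G_s\rangle\,ds$ by uniform convergence of $\Delta_N G$ to $\partial_u^2 G$ and the portmanteau theorem for the weak topology (with a standard care: since the map $\pi_\cdot\mapsto F(\rho,G,t)$ is not continuous at fixed $t$ in the Skorohod topology, I would integrate $t$ over a small interval, or equivalently use that $\mathbb{Q}$-a.s. $t$ is a continuity point). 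The discrete-gradient boundary term $\int_0^t\{\nabla_N^+G_s(0)\eta_{sN^2}(1)-\nabla_N^-G_s(1)\eta_{sN^2}(N-1)\}\,ds$ is handled by Theorem \ref{replacement lemma_1} with $\psi\equiv 1$: replace $\eta_{s}(1)$ (resp. $\eta_s(N-1)$) by the mesoscopic average $\overrightarrow{\eta}_s^{\varepsilon N}(1)$ (resp. $\overleftarrow{\eta}_s^{\varepsilon N}(N-1)$), identify this with $\rho_s(0)$ (resp. $\rho_s(1)$) via the already-established $\mathcal{H}^1$-regularity and a continuity-of-the-trace argument as $\varepsilon\to 0$, and Taylor-expand $\nabla_N^\pm G$ to obtain $\int_0^t\{\partial_u G_s(1)\rho_s(1)-\partial_u G_s(0)\rho_s(0)\}\,ds$.

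The heart of the argument is the treatment of the boundary-dynamics terms, i.e. the last three lines of \eqref{Dynkin'sFormula} carrying the prefactor $N^{1-\theta}$. When $\theta>1$ these are $O(N^{1-\theta})\to 0$ since $G$ and $\eta$ are bounded, yielding the Neumann weak formulation \eqref{eq:Neumann_integral}; this case is immediate. When $\theta=1$, I would proceed exactly as in the heuristic subsection: for the linear terms use Theorem \ref{replacement lemma_1} with $\psi\equiv 1$ to replace $\eta_s(1)$, $\eta_s(N-1)$ by their $\varepsilon N$-averages and then by $\rho_s(0)$, $\rho_s(1)$; for the terms $\alpha_2\eta(1)-\gamma_2\eta(2)$ use Theorem \ref{replacement lemma_0} with $\varphi\equiv 1$ to replace $\eta(2)$ by $\eta(1)$ and then Theorem \ref{replacement lemma_1}; for the quadratic term $\eta(1)\eta(2)(\alpha_2-\gamma_2)$ apply Theorem \ref{replacement lemma_1} first with $\psi(\eta)=\eta(1)$ to turn $\eta(2)$ into $\eta^{\varepsilon N}(1)$ and then with $\psi(\eta)=\eta^{\varepsilon N}(1)$ to turn $\eta(1)$ into $\eta^{\varepsilon N}(1)$, so that the product becomes $(\eta^{\varepsilon N}(1))^2 \to \rho_s(0)^2$. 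Summing all contributions reconstructs precisely $\int_0^t G_s(0)(D_{\alpha,\gamma}\rho_s)(0)\,ds + \int_0^t G_s(1)(D_{\beta,\delta}\rho_s)(1)\,ds$ in the case $K=2$ (the general $K$ being identical with heavier notation). The main obstacle is making rigorous the identification of $\overrightarrow{\eta}_s^{\varepsilon N}(1)$ with $\rho_s(0)$: this is a double limit, first $N\to\infty$ along the subsequence defining $\mathbb{Q}$ (which only gives $\varepsilon^{-1}\int_{(0,\varepsilon)}\rho_s(u)\,du$), then $\varepsilon\to 0$, and the second step relies crucially on the $L^2(0,T;\mathcal{H}^1)$ regularity from the energy estimate so that the trace $\rho_s(0)=\lim_{\varepsilon\to 0}\varepsilon^{-1}\int_0^\varepsilon\rho_s(u)\,du$ makes sense for a.e. $s$; controlling the error uniformly in $N$ while interchanging these limits, together with handling the fact that $F(\rho,G,\cdot)$ is evaluated at a fixed time in a Skorohod-discontinuous functional, is where the care is needed.
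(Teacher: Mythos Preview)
Your proposal is correct and follows essentially the same route as the paper: establish $R_T$ via the energy estimate, replace the boundary values $\rho_s(0),\rho_s(1)$ (and their powers) by local averages $\langle\pi_s,\iota_\epsilon\rangle$ using the $\mathcal{H}^1$ regularity, mollify the indicators to apply Portmanteau, and then close each resulting microscopic term via Dynkin's formula together with the replacement lemmas exactly as you describe. The only cosmetic differences are that the paper works directly with $\sup_{t\le T}|F(\rho,G,t)|>\delta$ (which becomes an open set once the boundary values are written as continuous functionals of $\pi$), rather than integrating in $t$, and that for the quadratic term the paper compares to a product of averages over two \emph{disjoint} $\varepsilon$-boxes $\langle\pi_s,\overrightarrow{\iota}_\epsilon^0\rangle\langle\pi_s,\overrightarrow{\iota}_\epsilon^{\epsilon}\rangle$ rather than $(\overrightarrow{\eta}^{\varepsilon N}(1))^2$; both choices converge to $\rho_s(0)^2$ and are handled by the same replacement lemmas.
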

\begin{proof}
We present the proof for the case $ \theta=1 $ and $K=2$, since for $ \theta>1 $ it is analogous. We present a remark at the end regarding the extension to other values of $K$. Fix a limit point $ \mathbb{Q} $ of $ \{\mathbb{Q}_N \}_{N\geq1} $. As a consequence of Corollary \ref{cor:EnergyEstimate},  we have that $\mathbb Q(R_T)=1$. To present the argument as simply and concisely as possible, assume that $G$ is time independent, but the same arguments apply when this is not the case. To prove the Proposition, we show that for any $  \delta>0 $ {and any $G\in C^2([0,1])$}:
	\begin{align}\label{limQ}
	&\mathbb{Q}\Big({R_T \mbox{ and }}\sup_{0\leq t\leq T}|F(\rho,G,t)|>\delta	
	\Big){=0}, 
	\end{align}  that is 
	\begin{align}\begin{split}\label{limQ_1}
	\mathbb{Q}&\left({R_T\mbox{ and }}
	\mid \sup_{0\leq t\leq T}\mid 
	\langle \rho_{t},  G\rangle  -\langle f_0 , G \rangle + \int_0^t\langle \rho_{s}, \triangle G\rangle \, ds  \right.+ \int^{t}_{0}  \Big \{\rho_{s}(1) \partial_u  G(1)-\rho_{s}(0)  \partial_u  G(0) \Big\} \, ds\\
	& -\int^{t}_{0}\Big\{G(1)(\beta_1-(\beta_1+\delta_1)\rho_{s}(1)+(\delta_2-\beta_2)(\rho_s^2(1)-\rho_s(1)))\,\Big\}ds \\ 
	&\left.- \int_0^t \Big\{ G(0)(\alpha_1-(\alpha_1+\gamma_1)\rho_s(0)+(\gamma_2-\alpha_2)(\rho_{t}^2(0)-\rho_t(0)) )\Big\}ds\mid>\delta	
	\right){=0.}
	\end{split}\end{align}
	Due to the boundary terms,  the  set inside the probability above is  not an open set in the Skorohod space. As a consequence, at this point, we cannot apply \textit{Portmanteau's theorem}.  To solve this problem, we take the following functions:
	$\overleftarrow{\iota}_\epsilon^u(v)=\frac{1}{\epsilon}1_{(u-\epsilon,u]}(v)
$ and $
	\overrightarrow{\iota}_\epsilon^u(v)=\frac{1}{\epsilon}1_{[u,u+\epsilon)}(v),
	$
	and we use the notation
	$\inner{\pi_s,\overleftarrow{\iota}_\epsilon^u}=\frac{1}{\epsilon}\int_{u-\epsilon}^u\rho_s(v)dv$ and $
	\inner{\pi_s,\overrightarrow{\iota}_\epsilon^u}=\frac{1}{\epsilon}\int_{u}^{u+\epsilon}\rho_s(v)dv.
	$
	Now observe that 
since $ \rho\in L^2(0,T;\mathcal{H}^1) $, it is easy to prove 
 for all $ \eps>0 $ that
	$\abs{\rho_s(u)-\inner{\pi_s,\overleftarrow{\iota}_\epsilon^u}}\leq \frac12\epsilon\norm{\partial_u \rho}_2^2$.
	As a consequence of the last result, we can bound the {probability on the left-hand side of \eqref{limQ_1} by} 
	\begin{align}\begin{split}\label{Q com iota}
	\mathbb{Q}&\Big({R_T\mbox{ and }}
	\sup_{0\leq t\leq T}\Big|
	\langle \rho_{t},  G_{}\rangle  -\langle f_0 , G_{} \rangle - \int_0^t\langle \rho_{s}, \partial^2_u G\rangle \, ds + \int^{t}_{0}  \Big \{
	\inner{\pi_s,\overleftarrow{\iota}_{\epsilon}^{1}}\partial_u  G(1)
	-\inner{\pi_s,\overrightarrow{\iota}_{\epsilon}^{0}}  \partial_u  G(0) \} \, ds\\
	&\hspace{1cm} -\int^{t}_{0}\Big\{G(1)(\beta_1-(\beta_1+\delta_1)\inner{\pi_s,\overleftarrow{\iota}_{\epsilon}^{1}}+(\delta_2-\beta_2)\inner{\pi_s,\overleftarrow{\iota}_{\epsilon}^{1}}
	(\inner{\pi_s,\overleftarrow{\iota}_{\epsilon}^{1-\epsilon}}
	-1)\,\Big\}ds\\
	& {-\int^{t}_{0}\Big\{G(0)(\alpha_1-(\alpha_1+\gamma_1)\inner{\pi_s,\overrightarrow{\iota}_{\epsilon}^{0}}+(\gamma_2-\alpha_2)\inner{\pi_s,\overrightarrow{\iota}_{\epsilon}^{0}}
	(\inner{\pi_s,\overrightarrow{\iota}_{\epsilon}^{\epsilon}}
	-1)\,\Big\}ds	\Big|>{\delta/2}\Big)+o_\varepsilon(1)}.
	\end{split}\end{align}

	To finally apply Portmanteau's theorem, we argue that we can approximate $ \overleftarrow{\iota}_\epsilon^u , \overrightarrow{\iota}_\epsilon^u $ by continuous functions in such a way that the error vanishes as $ \epsilon\to0 $. Then,  we apply Portmanteau's theorem 
	and {bound  the first term in \eqref{Q com iota} from above by $\liminf_{N\to\infty} \mathbb{Q}_N(A(T,G, \delta, \epsilon))$, where we shortened $A(T,G, \delta, \epsilon)$ for the event in \eqref{Q com iota}}.
Summing and subtracting $ \int_0^t \mathcal{L}_N\inner{\pi_s^N,{G}}ds $ {inside the absolute value in $A(T,G, \delta, \epsilon)$}, recalling  \eqref{Dynkin'sFormula}, {we obtain that $\mathbb{Q}_N(A(T,G, \delta, \epsilon))$ is less than the sum of the following contributions}
{
	\begin{equation*}
	P_1:=\mathbb{P}_{\mu^N}\left(
	\sup_{0\leq t\leq T} \abs{M^N_t}\geq \frac{\delta}{14}\right),
	\end{equation*}
	\begin{equation*}
	P_2:=\mathbb{P}_{\mu^N}\left(\sup_{0\leq t\leq T}
	\abs{\int_0^t\inner{\pi_s^N,\triangle_N G}- \langle \pi_{s}^N, \partial^2_u G\rangle ds}>
	\frac{\delta}{14}\right){,}
	\end{equation*}
	\begin{equation*}
	P_3:=\mathbb{P}_{\mu^N}\left(\sup_{0\leq t\leq T}
	\abs{\int_0^t\eta_{{sN^2}}(N-1)\nabla_N^-{G}(1)-
		\inner{\pi_s^N,\overleftarrow{\iota}_{\epsilon}^{1}}\partial_u  G(1)ds}>
	\frac{\delta}{14}\right),
	\end{equation*}
	\begin{equation*}
	P_4:=\mathbb{P}_{\mu^N}\left(\sup_{0\leq t\leq T}
	\abs{\int_0^tG(\tfrac{N-1}{N})(\beta_1-(\beta_1+\delta_1)\eta_{{sN^2}}(N-1))-
		G(1)(\beta_1-(\beta_1+\delta_1)\inner{\pi_s^N,\overleftarrow{\iota}_{\epsilon}^{1}})ds}
	>
	\frac{\delta}{14}\right),
	\end{equation*}
	\begin{multline*}
	P_5:=\mathbb{P}_{\mu^N}\left(\sup_{0\leq t\leq T}
	\Bigg|\int_0^t G(\tfrac {N-2}{N})(\delta_2\eta_{{sN^2}}(N-2)-\beta_2 \eta_{{sN^2}}(N-1)-	G(1)(\delta_2-\beta_2)\inner{\pi_s^N,\overleftarrow{\iota}_{\epsilon}^{1}}ds\Bigg|>\frac{\delta}{14}\right),
	\end{multline*}
	
	\begin{multline*}
	P_6:=\mathbb{P}_{\mu^N}\left(\sup_{0\leq t\leq T}
	\left|\int_0^t G(\tfrac{N-2}{N})(\delta_2-\beta_2)\eta_{{sN^2}}(N-1)\eta_{{sN^2}}(N-2)-\right.\right.\\
\left.	G(1)(\delta_2-\beta_2)\inner{\pi_s^N,\overleftarrow{\iota}_{\epsilon}^{1}}
	(\inner{\pi_s^N,\overleftarrow{\iota}_{\epsilon}^{1-\epsilon}}
	-1)ds\Bigg|
	>\frac{\delta}{14}\right),
	\end{multline*}
	and finally $P_7$, $P_8$, $P_9$ and $P_{10}$, which are the counterparts of $P_3$, $P_4$, $P_5$  and $P_6$ for the left boundary, which we do not explicitly write.}	
	 The first term $P_1$ can be estimated with Doob's inequality, 
\begin{align*}
	\mathbb{P}_{\mu^N}
	\left(
	\sup_{0\leq t\leq T}
	\abs{M_t^{N}}>\frac{\delta}{14}
	\right)&
	\leq \frac{C}{\delta}\mathbb{E}_{\mu^N}\left[\abs{M_T^{N}}^2\right]^{\frac{1}{2}}
	=\frac{C}{\delta} \mathbb{E}_{\mu^N}\left[\int_0^TB_s^{N,H}ds\right]^{\frac{1}{2}},	
\end{align*}
{where $C$ is a constant and  $B_s^{N,H}$ was introduced in Section \ref{sec:tight}. We can now proceed as in the proof of Proposition \ref{prop:tight} to show that $P_1$ vanishes, as $N\to\infty$.
	The second term $P_2$  vanishes, for any $N$ large enough, because $G$ is smooth.}

	To {estimate the remaining probabilities} we apply the replacement lemmas that are stated and proved in Appendix \ref{appendix:replacement}. To properly explain the procedure, recall from \eqref{eq:emp_aver} the definition of  $\overleftarrow{\eta}_{{sN^2}}^{\epsilon N}(N-1)$ and $	\overrightarrow{\eta}_{{sN^2}}^{\epsilon N}(1)$, since $ \overleftarrow{\eta}_{{sN^2}}^{\epsilon N}(N-1)=\inner{\pi_s^N,\overleftarrow{\iota}_\epsilon^1} $
	(resp. $ \overrightarrow{\eta}_{{sN^2}}^{\epsilon N}(1)=\inner{\pi_s^N,\overrightarrow{\iota}_\epsilon^0} $)
	we have $ \overleftarrow{\eta_{{sN^2}}}^{\epsilon N}(N-1)\sim\rho_s(1) $ (resp $ \overrightarrow{\eta_{{sN^2}}}^{\epsilon N}(1)\sim\rho_s(0) $), we show in Appendix \ref{appendix:replacement} that we can exchange  $ \eta_{{sN^2}}(N-1) $ (resp. $ \eta_{{sN^2}}(1) $) by the averages above, and $ \eta_{{sN^2}}(N-2) $ (resp. $ \eta_{{sN^2}}(2) $) by $ \eta_{{sN^2}}(N-1) $ (resp. $ \eta_{{sN^2}}(1) $). 
{In particular, to estimate $P_3$, note that the difference between $\nabla_N^-G(1)$ and $\partial_u G(1)$ is of order $N^{-1}$. Furthermore, applying Lemma \ref{replacement lemma_1} to $ \psi(\eta)=1 $, and using Markov's inequality, we can replace $ \eta_{{s}}(N-1) $ by $ \overleftarrow{\eta}_{{s}}^{\epsilon N}(N-1) $ at the cost of an error of order $ N^{-1} $. This proves that $P_3$ vanishes in the limit $N\to\infty$ and then $\epsilon \to 0$. $P_4$ and $P_5$ are estimated in the exact same way as  $P_3$.}

\medskip

{We now turn to $P_6$: in it, we first replace $ \eta_{{sN^2}}(N-2) $ by $ \overrightarrow{\eta}_{{sN^2}}^{\epsilon N}(N-1) $ by applying  Lemma \ref{replacement lemma_1} and Markov's inequality to $ \psi(\eta)=\eta_{{sN^2}}(N-1) $. Now that we have the term $ \eta_{{sN^2}}(N-1)\overrightarrow{\eta}^{\epsilon N} _{{sN^2}}(N-1)$, We apply Lemma \ref{replacement lemma_1} and Markov's inequality a second time to $ \psi(\eta)=\overrightarrow\eta_{{sN^2}}^{\epsilon N}(N-1) $ which allows us to replace $ \eta_{{sN^2}}(N-1) $ by $ \overrightarrow{\eta}_{{sN^2}}^{\epsilon N}(N-1) $ up to a vanishing error term.
Noting that $ \inner{\pi_s^N,\overrightarrow{\iota}_\epsilon^0}=\overrightarrow{\eta}_{{sN^2}}^{\epsilon N}(1)$ , and $  \inner{\pi_s^N,\overleftarrow{\iota}_\epsilon^1}=\overrightarrow{\eta}_{{sN^2}}^{\epsilon N}(N-1) $ and
		\begin{align*}
		\inner{\pi_s^N,\overrightarrow{\iota}_\epsilon^0}\inner{\pi_s^N,\overrightarrow{\iota}_\epsilon^\epsilon}=
		\overrightarrow{\eta}^{\epsilon N}_{{sN^2}}(1)
		\overleftarrow{\eta}^{\epsilon N}_{{sN^2}}(\epsilon N+1)+O((\epsilon N)^{-1}),
		\end{align*} proves as wanted that $P_6$ vanishes, in the limit $N\to\infty$ and then $\epsilon \to 0$.

The bounds for $P_7$, $P_8$, $P_9$ and $P_{10}$, are analogous to those on $P_3$, $P_4$, $P_5$  and $P_6$.
		Together, all those bounds prove that 
		$\limsup_{\epsilon\to 0}\limsup_{N\to\infty} \mathbb{Q}_N(A(T,G, \delta, \epsilon))=0$, so that \eqref{Q com iota} vanishes in the limit $\limsup_{N\to\infty}$ and then ${\epsilon\to 0}$. This proves Proposition \ref{prop:charac}}.
\end{proof}

	\begin{rem}[Case $ K\geq 2 $]
		For the general case $ K\geq 2 $, the main problem are the terms of the form $ \rho_s^{K-1}(0) $ and $ (1-\rho_s(0))^{K-1} $ (and similarly for the right boundary). A simple way to solve this is to proceed by induction. Since 
		$
		a^2=(a+ b_1-b_1)(a+ b_2-b_2)=(a-b_1)(a-b_2)+b_1(a-b_1)+b_2(a-b_2)+b_1b_2
		$
		and we have that $ b_1b_2a=b_1b_2(a+b_3-b_3)=b_1b_2(a-b_3)+b_1b_2b_3 $, taking $ a\equiv\rho_s(0) $ and $ b_j\equiv\inner{\pi_s,\overrightarrow{\iota}_\epsilon^{(j-1)\epsilon}} $ for $ j\geq0 $, we can replace $ \rho_s^{K-1}(0) $ by $ \prod_{j=0}^{K-2}\inner{\pi_s,\overrightarrow{\iota}_\epsilon^{j\epsilon}} $ plus a sum of terms that vanish when $ \epsilon\to0 $. For the right boundary the argument is analogous.
		\end{rem}
\section{Fick's Law}
\label{sec:currrent}

In this section, we prove Theorem  \ref{thm:current}. Recall the notations set in Section \ref{sec:currentsthm}. Fix $x\in \Lambda_N$. In order to apply Dynkin's formula to the current $J^N_t(x)$ and $K_t^N(x),$ we denote by  {
$\tilde {\mathcal L}^x$}
the joint generator of $\eta,$ $J^N_t(x)$ and $K_t^N(x),$ given by
\begin{equation}
\tilde{ \mathcal L}^x =\tilde{ \mathcal L}^x_{N,0}+\frac{1}{N^\theta}\tilde{ \mathcal L}^x_{N,b}
\end{equation}
where  
\begin{multline*}
\tilde {\mathcal L}^x_{N,0} f(\eta, J)=\sum_{z\in\Lambda_N\setminus \{x\}} (\eta(z)-\eta(z+1))(f(\eta^{z,z+1}, J)-f(\eta, J))\\
+\eta(x)(1-\eta(x+1))(f(\eta^{x,x+1}, J+1)-f(\eta, J))+\eta(x+1)(1-\eta(x))(f(\eta^{x,x+1}, J-1)-f(\eta, J)){,}
\end{multline*}
and $\tilde{ \mathcal L}^x_{N,b}=\tilde{ \mathcal L}^x_{N,+}+\tilde{ \mathcal L}^x_{N,-}$, with
  \begin{multline*}
\tilde {\mathcal L}^x_{N,\pm} f(\eta, K)=\sum_{z\in I_{\pm}^K\setminus \{x\}} c_z^{\pm}(\eta)(f(\eta^{z}, K)-f(\eta, K))\\
+\mathbb{1}_{\{x\in I^K_\pm\}}\Big[(1-\eta(x))c_x^{\pm}(\eta)(f(\eta^{x}, K+1)-f(\eta, K))+\eta(x)c_x^{\pm}(\eta)(f(\eta^{x}, K-1)-f(\eta, K))\Big].
\end{multline*}
Recalling the definition of the operators $ D^{N,\pm}_{\alpha,\gamma} $ and $ D^{N,\pm}_{\beta,\delta} $ in \eqref{def:D^N},  
\[\tilde {\mathcal L}_{N,0}^x J_s^N(x)=\eta_s(x)-\eta_s(x+1), \quad \tilde {\mathcal L}_{N,\pm}^x K_s^N(x)=(D^{N,-}_{\alpha,\gamma}\eta_s)(x) \quad \mbox{ and }\tilde {\mathcal L}_{N,b}^x K_s^N(x)=(D^{N,+}_{\beta,\delta}\eta_s)(x),\]
for $x\in \Lambda_N$, $I^K_-$, $I^K_+$, respectively.

By Dynkin's  formula, \begin{equation*}
\widehat M_t^N(x):=J^N_t(x)-J_0^N(x)-\int_0^{tN^2}(\eta_s(x)-\eta_s(x+1)) \, ds,
\end{equation*}
is a martingale w.r.t. $\mathcal F_t$, so that
for any  test function $f\in C^1(0,1)$ 
\begin{equation*}
{\widetilde{M}_t^N(f)}:=J^N_t(f)-J_0^N(f)-
\int_0^{tN^2}\sum_{x=1}^{N-2}f(\tfrac{x}{N})(\eta_s(x)-\eta_s(x+1))  \, ds,
\end{equation*}
is a martingale as well. By summation by parts,  the {time integral above rewrites} 
\begin{equation}\label{eq:curr_1}
\begin{split}
&\int_0^{{tN^2}}\frac{1}{N}\sum_{x=1}^{N-1}\nabla_N^-f(\tfrac{x}{N})\eta_s(x)+f(\tfrac{0}{N})\eta_s(1)-f(\tfrac{N-1}{N})\eta_s(N-1)\, ds\\=&\int_0^t\langle\pi_t^N, \nabla_N^-f\rangle\,ds+\int_0^{{tN^2}}f(\tfrac{0}{N})\eta_s(1)-f(\tfrac{N-1}{N})\eta_s(N-1)\, ds.
\end{split}
\end{equation}
A simple computation also based on Dynkin's formula, shows that its quadratic variation is given by
$\langle {\widetilde M^N(f)}\rangle_t=\int_0^t \frac{1}{N^2}\sum_{x\in\Lambda_N}f^2(\tfrac xN)(\eta_s(x)-\eta_s(x+1))^2ds,$
so that the martingale $\widetilde{M}_t^N(f)$ vanishes  in $\bb L^2(\mathbb P_{\mu_N})$, as ${N}\to\infty$. 

Now we analyze the {time integral} above. From  Theorem \ref{th:hyd_ssep} and the Replacement Lemma \ref{replacement lemma_1}
 the expression \eqref{eq:curr_1} converges,  as $N\to\infty$, in $\mathbb P_{\mu_N}$ to  \begin{equation}
\begin{split}
\int_0^t\int_0^1 \partial_u f(u)\,  \rho_s(u)\, du\,ds+\int_0^t f(0)\rho_s(0)-f(1)\rho_s(1)\, ds
=-\int_{0}^t\int_0^1 f(u)\partial_u \rho_s(u)\,ds.
\end{split}
\end{equation} 
Now we look at the non-conservative current. Using the same argument as above, we have that 
\begin{equation*}
\begin{split}
K^N_t(f)-K_0^N(f)-&
N^{1-\theta}\int_0^{{tN^2}}\sum_{x\in I_-^K}f(\tfrac xN)
(D_{\alpha,\gamma}^{N,-}\eta_{{s}})(x) ds
-N^{1-\theta} \int_0^{{tN^2}} \sum_{x\in I_+^K}f(\tfrac xN)
(D_{\beta,\delta}^{N,+}\eta_{{s}})(x) ds
\end{split}
\end{equation*}
is a martingale. As above it can be shown that this martingale vanishes  in $\bb L^2(\mathbb P_{\mu_N})$, as ${N}\to\infty$. 
 When $\theta>1$ it is easy to see that the integral term above vanishes as ${N}\to+\infty$. In the case $\theta=1 $,  from repeatedly applications of  the replacement lemmas stated in  Appendix \ref{sec:repl_lemma} the last expression converges, w.r.t. $\mathbb P_{\mu_N}$, as
 $N\to\infty$,   to 
 \begin{align*}
\int_0^t f(0)(D_{\alpha,\gamma}\rho_s)(0)+f(1)(D_{\beta,\delta}\rho_s)(1)\,ds,
\end{align*}
which finishes the proof.

\section{Hydrostatic limit}\label{sec:hs}

\subsection{ Proof of  Theorem \ref{thm:hydrostatics}}
\label{sec:proofhydrostatics}

Given an integer $N$, and $\theta\geq 1$, define the distribution $ \mathcal{P}_N$ of the stationary empirical measure, on $ \mathcal{M}^+ $, as
$\mathcal{P}_N:=\mu_N^{ss}\circ (\pi_N)^{-1}$, where both $ \mathcal{M}^+ $ and the empirical measure $\pi_N$ were introduced at the beginning of Section \ref{sec:HL}. Recall from Definition \ref{def:weak_sol_ Robin} the definition of the functional $F(\rho, G, t)$ (depending on $\theta$). Define 
\[ {\mathcal{E}_T}:=\left\{\pi\in\mathcal{M}^+:\pi(du)=\rho^*(u)du \textit{ $ , $ } F(\rho^*,G,t)=0, \;\forall t\in [0,T],\; \forall G \in C^{1,2} ([0,T]\times[0,1]) \right\}, \]
which is the set of weak stationary solutions to the hydrodynamic equation.
{Since $T$ is fixed, from here on, we will drop the subscript and simply write $\mathcal E=\mathcal E_T$.}
Now let $ d $ be the distance defined on { $ \mathcal{M}^+ $} under which this space is a Polish space (see \cite{KL}, Chapter $ 4 $ for an example). The following result, which is analogous to e.g. Theorem $ 2.2 $ in \cite{LT}, is the main ingredient to prove Theorem \ref{thm:hydrostatics}.
\begin{prop}\label{hsconcentration}
	$ \{\mathcal{P}_N\}_{N\in\mathbb{N}} $ is concentrated in $ \mathcal{E} $, \textit{i.e.,} $ \forall \delta>0 $,
	\begin{align*}
	\lim_{N\to\infty}
	\mathcal{P}_N
	\left(
	{\pi}\in\mathcal{M}^+:\inf_{\tilde{\pi}\in\mathcal{E}}d({\pi},\tilde{\pi})\geq\delta
	\right)=0.
	\end{align*}
\end{prop}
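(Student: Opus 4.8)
The strategy is the standard one for deducing a hydrostatic limit from the hydrodynamic limit together with a uniqueness statement, following \cite{LT} and \cite{Tsu}. The stationary measure $\mu_N^{ss}$ is a measure on $\Omega_N$, and since the empirical measure lives in the compact space $\mathcal M^+$, the sequence $\{\mathcal P_N\}_{N\geq 1}$ of laws on $\mathcal M^+$ is automatically tight (indeed $\mathcal M^+$ is compact in the weak topology). So it suffices to show that every subsequential limit of $\{\mathcal P_N\}$ is concentrated on the set $\mathcal E$ of weak stationary solutions. Fix such a limit point $\mathcal P$, obtained along a subsequence $N_k\to\infty$.

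\textbf{Key steps.} First I would run the system started from $\mu_N^{ss}$, i.e. consider $\PP_{\mu_N^{ss}}$ on $\mathcal D([0,T],\Omega_N)$, and let $\mathbb Q_N^{ss}=\PP_{\mu_N^{ss}}\circ(\pi^N)^{-1}$ be the law of the trajectory of the empirical measure. Because $\mu_N^{ss}$ is invariant for the accelerated dynamics, the one-time marginal of $\mathbb Q_N^{ss}$ at every $t\in[0,T]$ equals $\mathcal P_N$. Next I would observe that the proof of the hydrodynamic limit (Proposition \ref{prop:tight} and Proposition \ref{prop:charac}) did not use any specific property of the initial law beyond the fact that it puts mass $1$ on $\Omega_N$ — tightness of $\{\mathbb Q_N^{ss}\}$ follows verbatim from the computations in Proposition \ref{prop:tight}, and characterization of limit points follows from Proposition \ref{prop:charac} together with the energy estimate of Corollary \ref{cor:EnergyEstimate}. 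Hence any limit point $\mathbb Q^{ss}$ of $\{\mathbb Q_{N_k}^{ss}\}$ (passing to a further subsequence if needed) is concentrated on trajectories $\pi_\cdot(du)=\rho_\cdot(u)du$ with $\rho\in L^2(0,T;\mathcal H^1)$ satisfying the weak formulation $F(\rho,G,t)=0$ for all $t$ and all $G\in C^{1,2}$. The only missing ingredient compared with Theorem \ref{th:hyd_ssep} is identification of the initial profile; but here we do not need it, because we will exploit stationarity instead. Indeed, by invariance the law of $\rho_t$ under $\mathbb Q^{ss}$ is the same for every $t$, and equals (the pushforward of) $\mathcal P$.

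\textbf{Using uniqueness to conclude.} Now fix a $t_0\in(0,T)$ at which $\rho_{t_0}$ is $\mathbb Q^{ss}$-a.s. a genuine element of $\mathcal H^1$ (a.e. $t_0$ works), and start a \emph{fresh} weak solution from the random profile $\rho_{t_0}$: by the PDE well-posedness (Lemma \ref{lem:uniqueness}, which requires \eqref{H0} when $\theta=1$), there is a unique weak solution $\bar\rho$ of \eqref{eq:Robin_equation} (resp. \eqref{eq:Neumann_equation}) on $[0,T]$ with $\bar\rho(0,\cdot)=\rho_{t_0}(\cdot)$; on the other hand, by stationarity the shifted trajectory $s\mapsto\rho_{t_0+s}$ (extended by running the system slightly past $T$, or just reducing the horizon) is itself a weak solution with the same initial datum, hence $\rho_{t_0+s}=\bar\rho_s$ a.s., and in particular the law of $\bar\rho_s$ is $\mathcal P$ for every $s$. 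Letting $s\to 0^+$ and using continuity in time of weak solutions shows $\bar\rho_s\to\rho_{t_0}$, which is consistent, but the real point is the following: by stationarity $\rho_s$ has the same law for all $s$, so for a.e. pair $s_1<s_2$ we have $\|\rho_{s_2}-\rho_{s_1}\|_{L^2}\overset{d}{=}\|\rho_{s_2-s_1}-\rho_0\|_{L^2}$; letting $s_2-s_1\to 0$ and using the $L^2(0,T;\mathcal H^1)$ regularity plus the weak formulation to get equicontinuity of $s\mapsto \rho_s$ in $L^2$, one forces $\rho_\cdot$ to be $\mathbb Q^{ss}$-a.s.\ \emph{constant in time}. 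A time-independent weak solution of \eqref{eq:Robin_equation} (resp. \eqref{eq:Neumann_equation}) is precisely an element of $\mathcal E$ (its density $\rho^*$ satisfies $F(\rho^*,G,t)=0$ for all $t,G$); thus $\mathbb Q^{ss}$ is concentrated on constant trajectories valued in $\mathcal E$, and therefore $\mathcal P=\mathbb Q^{ss}\circ(\pi_{t_0})^{-1}$ is concentrated on $\mathcal E$. Since $\mathcal P$ was an arbitrary limit point, $\{\mathcal P_N\}$ converges to the set of measures concentrated on $\mathcal E$, which is exactly the statement
\[
\lim_{N\to\infty}\mathcal P_N\Big(\pi\in\mathcal M^+:\inf_{\tilde\pi\in\mathcal E}d(\pi,\tilde\pi)\geq\delta\Big)=0.
\]

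\textbf{Main obstacle.} The delicate point is the claim that stationarity of the process forces the limiting hydrodynamic trajectory to be constant in time, i.e. ruling out a nontrivial time-periodic or oscillating weak solution carrying the invariant law. The clean way to do this is to invoke uniqueness of weak solutions (Lemma \ref{lem:uniqueness}) as above: if $\rho_\cdot$ is a weak solution whose time-$0$ marginal has law $\mathcal P$, then so is every time shift $\rho_{\cdot+s}$, and uniqueness for the initial datum $\rho_s$ identifies all of them, after which an argument on the equicontinuity of $s\mapsto\rho_s$ in $L^2$ — coming from the energy estimate $\rho\in L^2(0,T;\mathcal H^1)$ and the weak equation controlling $\partial_t\rho$ in a negative Sobolev norm — collapses the trajectory to a constant. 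In the Neumann case $\theta>1$ there is an additional subtlety: every constant profile solves \eqref{eq:Neumann_equation}, so this argument alone only shows $\mathcal P$ is supported on constants; pinning down \emph{which} constant $m^*$ is exactly the content of the separate subdiffusive-timescale analysis of Section \ref{sec:proofhydrostatics} and is not needed for Proposition \ref{hsconcentration} itself. All remaining steps — tightness, reuse of Propositions \ref{prop:tight} and \ref{prop:charac}, the energy estimate — are routine given the earlier parts of the paper.
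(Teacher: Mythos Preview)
Your overall architecture is sound: pass from $\mathcal P_N$ to the law $\mathbb Q_N^{ss}$ of the trajectory started from $\mu_N^{ss}$, use tightness and the characterization of limit points to get that any $\mathbb Q^{ss}$ is supported on weak solutions with time-marginals all equal to $\mathcal P$, and then conclude. However, the step you flag as ``the delicate point'' is not actually closed, and the argument you give there does not work.

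Concretely: stationarity plus uniqueness tells you that under $\mathbb Q^{ss}$ the trajectory is $\rho_t=S_t(\rho_0)$ (with $S_t$ the solution semigroup) and that $S_t(\rho_0)\overset{d}{=}\rho_0$ for every $t$. This says $\mathcal P$ is $S_t$-invariant, \emph{not} that $\mathcal P$ is supported on fixed points of $S_t$. Your claim that ``equicontinuity of $s\mapsto\rho_s$ in $L^2$ \ldots\ collapses the trajectory to a constant'' is a non sequitur: continuity (even H\"older continuity) of $s\mapsto\rho_s$ together with stationarity of increments does not yield $\rho_s\equiv\rho_0$. A priori, an $S_t$-invariant $\mathcal P$ could be supported on a periodic orbit or some other nontrivial invariant set; ruling this out requires a dissipativity/attractor argument that you have not supplied.

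The paper's approach plugs exactly this hole by taking as the second ingredient the convergence of weak solutions to a stationary profile (Proposition~\ref{prop:statio_conv} for $\theta=1$; classical for the Neumann case). Once you know $d(S_t(\pi),\mathcal E)\to 0$ for every $\pi$, the argument is immediate: by $S_t$-invariance of $\mathcal P$,
\[
\mathcal P\big(d(\pi,\mathcal E)\geq\delta\big)=\mathcal P\big(d(S_t(\pi),\mathcal E)\geq\delta\big)\xrightarrow[t\to\infty]{}0.
\]
This is precisely the scheme of Theorem~2.2 in \cite{LT}, which is why the paper does not reprove it and simply records hydrodynamics and long-time convergence as the two inputs. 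Your proposal, as written, is missing the long-time convergence input and cannot be completed without it (or some equivalent Lyapunov-type statement showing that the only $S_t$-invariant probability measures on $\mathcal M^+$ are supported on $\mathcal E$).
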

To prove Proposition \ref{hsconcentration}, one needs two ingredients: 
\begin{enumerate}[i)]
\item The empirical measure  {is} macroscopically governed by a hydrodynamic equation (i.e. the hydrodynamic limit, Theorem \ref{th:hyd_ssep}, proved in Section \ref{sec:hyd}). 
\item The existence of a unique solution of \eqref{eq:Robin_equation} (cf. Lemma \ref{lem:uniqueness}) and its convergence, w.r.t. the $\bb L^2$ norm, as time goes to infinity, to a stationary solution, which is a consequence of Proposition \ref{prop:statio_conv} below for $\theta=1$. In the case $\theta>1$ this convergence is classical, but  the interested reader can straightforwardly adapt the argument we present below for the Robin  case , derived when $\theta=1$, to the Neumann  case, derived when $\theta>1$.
\end{enumerate}
We will not prove this proposition, because once those two ingredients are obtained, it is a straightforward adaptation of  Theorem $ 2.2 $ in \cite{LT}.

We now use Proposition \ref{hsconcentration} to prove Theorem \ref{thm:hydrostatics}. In the case $\theta=1$, under assumptions  \eqref{H0} and \eqref{H2}, we check in Section \ref{sec:statio_uniq} that the set
$\mathcal{E} $ above is a singleton (more precisely, $\mathcal{E}=\{\rho^*(u)du\} $ where $ \rho^*(u)=(1-u)\rho^*(0)+u\rho^*(1) $ with its value at the boundary determined by the unique solution of the nonlinear system of equations $ \rho^*(1)-\rho^*(0)=-D_{\alpha,\gamma}\rho^*(0)={D_{\beta,\delta}}\rho^*(1) $), so that the first assertion of Theorem \ref{thm:hydrostatics} follows immediately from Proposition  \ref{hsconcentration}. 
For details we refer the reader to  \cite{Tsu}.

\medskip

We now turn to the second assertion of Theorem \ref{thm:hydrostatics}, i.e. the case $\theta>1$. In this case, {since the hydrodynamic equation is governed by the heat equation with Neumann boundary conditions,} any constant solution is a stationary solution to the hydrodynamic equation, so that
\[ \mathcal{E}:=\left\{\pi\in\mathcal{M}^+:\pi(du)=m du, \; m\in [0,1]\right\}. \]
{An intriguing question is to find the particular constant that the system choses to stabilize.} Therefore, we further need to prove that  $\mathcal{P}_N$ only charges, in the limit $N\to\infty$, a single value $m^*$ from all the possible constant values. For that purpose we use the method developed  in \cite{Tsu} {(therein applied to the present model with $ K=1 $)}, which consists 
in studying the evolution of the process started from its stationary state in a subdiffusive time scale $ N^{1+\theta}$.  In this subdiffusive time scale, the total mass of the system evolves via the boundary dynamics {and is macroscopically described by an ordinary differential equation. Moreover, the time-independent solution to this ODE} 
is exactly the constant  $m^*$ we are looking for. The non-linear boundary terms pose some extra technical difficulties w.r.t.  \cite{Tsu}, which are solved in Corollary \ref{replacement lemma_mass}.

{To simplify the exposition we consider the case $K=2$, and we make  the appropriate remarks in  the general case of $K$.

We now consider the process on the \emph{subdiffusive time-scale} $\{\eta^N_t:=\eta_{tN^{1+\theta}}\}_{t\geq0} $, with initial distribution $ \mu_N^{ss}$.
For each $ t\geq0 $ and $ \theta>1 $ fixed, we define the mass of the system  as
	\begin{align}
	\label{eq:defmt}
	m^N_t=\dfrac{1}{N-1}\sum_{x\in\Lambda_N}{\eta^N_{t}}(x),
	\end{align}
	and for each $ T>0 $ we let $ \mathcal{D}([0,T],\mathbb{R}) $ be the set of c\'adl\'ag trajectories $ m_\cdot:[0,T]\to\mathbb{R} $ with respect to the Skorohod topology. For each $ N\in\mathbb{N} $, denote by $ \mathcal{Q}_N $ the distribution of the trajectory $ (m^N_t)_{t\in [0,T]} $ on $ \mathcal{D}([0,T],\mathbb{R}) $, with $\eta$ started from the stationary distribution $\mu^{ss}_N$.  
A straightforward adaptation (to the subdiffusive timescale) of Proposition \ref{prop:tight} shows that the sequence $ \{\mathcal{Q}_N\}_{N\geq1} $ is tight.  From the stationarity of $ \mu_N^{ss} $, for any $t\geq0$, and any $I=(a,b)\subset[0,1]$,
\[ \mathcal{P}_N(\pi^N_\cdot:\inner{\pi^N,1}\in I)=Q_N(m_\cdot^N:m_t\in I).\]
One may then bound, for any limit points $\mathcal{P}^*$, $\mathcal{Q}^*$ of the sequences  $\{\mathcal{P}_N\}_{N\geq1}$, $ \{\mathcal{Q}_N\}_{N\geq1}$ (for more details, cf. \cite{Tsu}, p.11)  and for any fixed time $t\geq 0$,
\begin{equation}
\label{eq:PQstar}
\mathcal{P}^*(\pi(du)=mdu, \;m\in I)\leq \mathcal{Q}^*(m_\cdot:m_t\in \overline{I}).
\end{equation}
To conclude, we now only need to prove the following result
\begin{lem}
\label{lem:concricatti}
 There exists $m^*\in [0,1]$ such that , for any $\varepsilon>0$ 
\begin{equation}
\label{eq:convRicatti}
\mathcal{Q}^*(m_\cdot:|m_t-m^*|\geq \varepsilon)\underset{t\to\infty}{\longrightarrow}0{.}
\end{equation}
\end{lem}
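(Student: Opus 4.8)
The plan is to obtain a closed evolution equation for the total mass $m^N_t$ on the subdiffusive time scale $N^{1+\theta}$, pass to the limit to get an autonomous scalar ODE, and then analyze that ODE. First I would apply Dynkin's formula to $m^N_t=\tfrac1{N-1}\sum_{x\in\Lambda_N}\eta^N_t(x)$. Since the bulk generator $\mathcal L_{N,0}$ conserves the number of particles, only the boundary part contributes, and using the operators $D^{N,\pm}_{\cdot,\cdot}$ of \eqref{def:D^N} one gets
\[
m^N_t=m^N_0+\int_0^t\tfrac{N}{N-1}\Big(\sum_{x\in I_-^K}(D^{N,-}_{\alpha,\gamma}\eta^N_s)(x)+\sum_{x\in I_+^K}(D^{N,+}_{\beta,\delta}\eta^N_s)(x)\Big)\,ds+\mathcal M^N_t,
\]
where $\mathcal M^N_t$ is a martingale with respect to the natural filtration. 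Computing the carr\'e du champ of $\eta\mapsto\tfrac1{N-1}\sum_x\eta(x)$ and bounding the boundary rates by a constant gives $\mathbb E_{\mu^{ss}_N}\big[(\mathcal M^N_t)^2\big]=\langle\mathcal M^N\rangle_t=O(t/N)$, so the martingale term vanishes in $\mathbb L^2$ as $N\to\infty$.

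Second, I would use the subdiffusive replacement lemma, Corollary \ref{replacement lemma_mass}, to replace the boundary functionals above by a function of $m^N_s$ alone. On the time scale $N^{1+\theta}$ the diffusive bulk dynamics runs for a macroscopic time of order $N^{\theta-1}\to\infty$, so the empirical density homogenizes and, near each boundary, the box averages $\overrightarrow{\eta}^{\varepsilon N}_s(1)$ and $\overleftarrow{\eta}^{\varepsilon N}_s(N-1)$ are both close to $m^N_s$; the products $\prod_{j<x}\eta(j)(1-\eta(x))$ appearing in the boundary rates are then handled by successive replacements exactly as in the proof of Proposition \ref{prop:charac}, turning $\sum_{x=1}^K(D^{N,-}_{\alpha,\gamma}\eta^N_s)(x)$ into $(D_{\alpha,\gamma}m^N_s)$ and $\sum_{x\in I_+^K}(D^{N,+}_{\beta,\delta}\eta^N_s)(x)$ into $(D_{\beta,\delta}m^N_s)$, up to errors that vanish after $N\to\infty$ and then $\varepsilon\to0$. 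Combined with the tightness of $\{\mathcal Q_N\}_{N\geq1}$, this shows that every limit point $\mathcal Q^*$ is concentrated on absolutely continuous trajectories $m_\cdot$ solving the autonomous ODE $\dot m_t=\mathfrak F(m_t)$, where $\mathfrak F(m):=(D_{\alpha,\gamma}m)+(D_{\beta,\delta}m)$, with initial datum $m_0\in[0,1]$.

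Third, I would analyze this scalar ODE. Evaluating the polynomial $\mathfrak F$ at the endpoints, only the $x=1$ terms survive, so $\mathfrak F(0)=\alpha_1+\beta_1=\mathfrak{i}_1>0$ and $\mathfrak F(1)=-(\gamma_1+\delta_1)=-\mathfrak{o}_1<0$ under \eqref{H1}; hence $[0,1]$ is forward invariant and $\mathfrak F$ has at least one zero in $(0,1)$. Under \eqref{H3} one checks---this is the algebraic input, carried out using the monotonicity of $\alpha+\beta$ and $\gamma+\delta$---that this zero $m^*$ is unique, so that $\mathfrak F>0$ on $[0,m^*)$, $\mathfrak F<0$ on $(m^*,1]$, and $m^*$ is globally asymptotically stable on $[0,1]$: writing $\phi_t$ for the flow of the ODE, $\sup_{m_0\in[0,1]}|\phi_t(m_0)-m^*|\to0$ as $t\to\infty$. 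Consequently, since $\mathcal Q^*$ is concentrated on the trajectories $m_\cdot=\phi_\cdot(m_0)$ with $m_0\in[0,1]$, for every $\varepsilon>0$ one has $\mathcal Q^*(m_\cdot:|m_t-m^*|\geq\varepsilon)\leq\mathbf 1\{\sup_{m_0\in[0,1]}|\phi_t(m_0)-m^*|\geq\varepsilon\}$, which equals $0$ for all $t$ large enough; this proves \eqref{eq:convRicatti}. (Alternatively, one may observe that by stationarity of $\mu^{ss}_N$ the law of $m^N_t$ does not depend on $t$, so $\mathcal Q^*$ is supported on constant trajectories, which must be fixed points of $\mathfrak F$, hence on $m_\cdot\equiv m^*$.) The main obstacle is the subdiffusive replacement lemma of Corollary \ref{replacement lemma_mass}: establishing the homogenization of the density near the two boundaries at the time scale $N^{1+\theta}$ and controlling the nonlinear products $\prod_{j<x}\eta(j)$ in the boundary rates is the genuinely new difficulty with respect to the case $K=1$ treated in \cite{Tsu}; a secondary, more elementary, difficulty is the purely algebraic verification that $\mathfrak F$ has a unique zero in $[0,1]$ under assumption \eqref{H3}.
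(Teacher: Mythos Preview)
Your proposal is correct and follows essentially the same route as the paper: Dynkin's formula for the total mass, vanishing martingale, replacement via Corollary~\ref{replacement lemma_mass} to close the equation in $m^N_s$, and analysis of the limiting scalar ODE $\dot m=\mathfrak F(m)$ with $\mathfrak F=D_{\alpha,\gamma}+D_{\beta,\delta}=D_{\mathfrak i,\mathfrak o}$. The paper makes your ``algebraic input'' explicit via Lemma~\ref{lem:D1-D2}, which factors $D_{\mathfrak i,\mathfrak o}f_1-D_{\mathfrak i,\mathfrak o}f_2=-(f_1-f_2)V_{\mathfrak i,\mathfrak o}(f_1,f_2)$ with $V_{\mathfrak i,\mathfrak o}\ge\underline v_K>0$ under \eqref{H3}, yielding both strict monotonicity of $\mathfrak F$ (hence uniqueness of $m^*$) and the explicit rate $|m_t-m^*|\le e^{-\underline v_K t}$; your stationarity alternative is a valid shortcut not used in the paper.
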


\begin{proof}[Proof of Lemma \ref{lem:concricatti}]{
We start by characterizing the typical trajectories of $\mathcal{Q}^*$.
We apply Dynkin's formula as in \eqref{Dynkin'sFormula} and we take $G\equiv 1$ to obtain that }
\begin{multline}
\label{eq:dynkin_mass}
m^N_t=m^N_0+M_t^N(1)+
\int_0^{{t}}
\alpha_1+\beta_1
+\eta_s^N(1)(\alpha_2-(\alpha_1+\gamma_1))
+\eta_s^N(N-1)(\beta_2-(\beta_1+\delta_1))ds
\\
-\int_0^{{t}}\gamma_2\eta_s^N(2)+\delta_2\eta_s^N(N-2)
+\eta_s^N(1)\eta_s^N(2)(\alpha_2-\gamma_2)
+\eta_s^N(N-1)\eta_s^N(N-2)(\beta_2-\delta_2)ds.
\end{multline}
Simple computations similar to the ones used in the proof of Proposition \ref{prop:tight} also show that {the quadratic variation of  $M_t^N (1)$ is of order $ O(N^{-1}) $} so that  $ M_t^N (1)$ vanishes as $ N\to\infty $, with respect to the $\mathbb L^2(\mathbb P_{\mu^{ss}_N})$-norm. Moreover, from Corollary \ref{replacement lemma_mass} we are able to replace $ \eta_{s}^N(1)$ and $\eta_{s}^N(2) $ (resp. $ \eta_{s}^N(N-1) $ and $ \eta_{s}^N(N-2) $) by $ m^N_s $. From this we get that
\begin{multline}\label{eq:dynkin_mass_new}
m^N_t=m^N_0+M_t^N(1)+
\int_0^{tN^{1+\theta}}
\alpha_1+\beta_1
+m^N_s(\alpha_2-(\alpha_1+\gamma_1))
+m^N_s(\beta_2-(\beta_1+\delta_1))ds
\\-\int_0^{{tN^{1+\theta}}}\gamma_2m^N_s+\delta_2m^N_s
+(m^N_s)^2(\alpha_2-\gamma_2)
+(m^N_s)^2(\beta_2-\delta_2)ds.
\end{multline}
To simplify notation let $ \mathfrak i_x=\alpha_x+\beta_x $ and $ \mathfrak o_x=\gamma_x+\delta_x $, for $ x=\{1,\dots,K\} $. By taking the limit, when $N\to+\infty$, in the previous identity, we obtain that any limit point $\mathcal{Q}^*$ is concentrated on solutions of the Ricatti Equation
\begin{align}\label{diffeq}
\mathcal{Q}^*\pa{m_\cdot:  \;m_t=m_0+ \int_0^t
\mathfrak{i}_{1}+(\mathfrak{i}_{2}-\mathfrak{o}_{2}-(\mathfrak{i}_{1}+\mathfrak{o}_{1}))m_s-(\mathfrak{i}_{2}-\mathfrak{o}_{2})m^2_sds}=1.
\end{align}

\begin{rem}
Observe that the equation above is equivalent to $ m_t=m_0+\int_0^tD_{\mathfrak i,\mathfrak o}m_sds $. In fact, as for the hydrodynamic limit, the same arguments show that for general values of $ K>2 $ we have an analogous integral equation, where now $ D_{\mathfrak i,\mathfrak o} $ induces a $ K-$degree polynomial. For the case  $ K>2 $, the proof above is indeed identical in this case as long as we assume  \eqref{H3}. 
 The only extra technical difficulty is that in  \eqref{eq:dynkin_mass} we shall have higher order polynomials in $\eta$, so that we have to apply Corollary \ref{replacement lemma_mass} a certain number of times to get closed equations in $m^N_s$. 
\end{rem}
To conclude the proof of Lemma \ref{lem:concricatti}, we need to show the uniqueness of solutions for the Ricatti equation and that all solutions  uniformly converge to the same constant.  We first state the following technical lemma. 
\begin{lem}\label{lem:D1-D2}
	Let $ \lambda=(\lambda_1,\dots,\lambda_K) $ and $ \sigma=(\sigma_1,\dots,\sigma_K) $ and $ D_{\lambda,\sigma} $ be the operator defined in \eqref{exp:D} for $ K\geq1 $ fixed. Then for $ f_i:\mathbb{R}\to{[0,1]} $ with $ i=1,2 $, we have
	\begin{align}\label{D_decomp}
		D_{\lambda,\sigma}f_1-D_{\lambda,\sigma}f_2=-(f_1-f_2)V_{\lambda,\sigma}(f_1,f_2),
	\end{align}
	where $ V_{\lambda,\sigma}(f_1,f_2)=V_{\lambda}(f_1,f_2)+V_{\sigma}(1-f_1,1-f_2) $ with the operator $ V_\phi $, for any sequence $ \phi=(\phi_x)_{1\leq x\leq K} $, is acting on functions { $(f_1,f_2):\mathbb{R}\times \mathbb{R}\to\mathbb{R}^+_0\times\mathbb{R}^+_0 $, as
	\begin{align*}
		V_\phi (f_1, f_2)(u_1, u_2)=\sum_{x=1}^K(\phi_x-\phi_{x+1})v_{x}(f_1(u_1),f_2(u_2))
	\end{align*}
	where {$\phi_{K+1}:=0$ by convention and} {
	\begin{align*}
		v_x(y,z)=\begin{cases}
		1,&x=1,\\
		y+z,&x=2,\\
		y^{x-1}+z^{x-1}+
		\sum_{i=1}^{x-2}z^iy^{x-1-i},& 3\leq x\leq K.
		\end{cases}
	\end{align*}
}}
	In particular, if  $ \lambda $ and $ \sigma $ are {positive}, non-increasing and {$ f_i\in(0,1) $ for $ i=1,2 $ } then there is a constant $ \underline{v}_K(\lambda,\sigma)>0 $ such that 
	\begin{align}\label{v>0}
		V_{\lambda,\sigma}f\geq \underline{v}_K(\lambda,\sigma).
	\end{align}
{	Moreover, if $ f_i\in(0,1) $, $ \lambda $ (resp. $ \sigma $) is the constant zero sequence and $ \sigma $ (resp. $ \lambda $) is non-negative and non-increasing, $ V_{\lambda,\sigma}f $ is also bounded from below by a positive constant.}
\end{lem}

We can now conclude the proof of Lemma \ref{lem:concricatti}. From Lemma \ref{lem:D1-D2} it is simple to see that $ m $ is locally Lipschitz. By iteration we can extend to all times up to time $ t>0 $, thus showing uniqueness. 
	To see that there exists  a unique solution $ m\equiv m^*\in[0,1] $ to the equation defined by the mass, 
	$ D_{\mathfrak {i},\mathfrak{o}} m $, observe that an immediate consequence of Lemma \ref{lem:D1-D2} is that 
	$ D_{\mathfrak {i}, \mathfrak {o}}f $ is both continuous and monotone decreasing on $ f\in[0,1] $. In particular, letting  $\textbf{0}$ (resp. $ \textbf{1}$ ) be the constant $ 0 $ (resp. constant $ 1$) function on $ [0,1] $, \textcolor{red}{and recalling  \eqref{H1}, ie $ \mathfrak{i}_1\neq 0$ and $\mathfrak{o}_1\neq0 $,} we have that 
	\begin{align*}
	-\mathfrak {o}_1=D_{\mathfrak {i}, \mathfrak {o}}\textbf {1}\leq D_{\mathfrak {i}, \mathfrak {o}}f\leq D_{\mathfrak {i}, \mathfrak {o}}\textbf {0}=\mathfrak {i}_1,
	\end{align*}
	thus, by the intermediate value theorem there exists a (unique)	{$ f\equiv m^*\in[0,1] $ such that  $ D_{\mathfrak {i}, \mathfrak{o} }m^*=0 $}.

Now fix a solution $m $ to  $$m_t=m_0+\int_0^tD_{\mathfrak i,\mathfrak o}m_sds, $$ and define $\widehat{m}=m-m^*$, \eqref{D_decomp}  yields 
\[\widehat{m}_t=\widehat{m}_0-\int_0^t\widehat{m}_sV_{\mathfrak i,\mathfrak o}(m_s,m^*)ds,\] 
so that $ \widehat{m}_t=\widehat{m}_0 e^{-\int_0^tV_{\mathfrak i,\mathfrak o}(m_s,m^*)ds}.$ Using  \eqref{v>0} then yields  $|\widehat{m}_t|\leq e^{-\underline{v}_K(\mathfrak i,\mathfrak o)t}$ which proves \eqref{eq:convRicatti}.
\end{proof}
We now prove the technical Lemma.
\begin{proof}[Proof of Lemma \ref{lem:D1-D2}]
	For $ u_1,u_2\in\mathbb{R} $, let $ y:=f_1(u_1) $ and $ z:=f_2(u_2) $. Then
 \begin{align*}
 D_{\lambda,\sigma}y-D_{\lambda,\sigma}z
 =\sum_{x=1}^K\lambda_x
 \Big(
 (1-y)y^{x-1}-(1-z)z^{x-1}
 \Big)
 -\sigma_x
 \Big(
 y(1-y)^{x-1}-z(1-z)^{x-1}
 \Big).
 \end{align*}
 Observing that 
 \begin{align*}
 (1-y)y^{x-1}-(1-z)z^{x-1}&=(y^{x-1}-z^{x-1})-(y^x-z^x)
 ,\\
 y(1-y)^{x-1}-z(1-z)^{x-1}&=
 -\Big\{
 ((1-y)^{x}-(1-z)^{x})-((1-y)^{x-1}-(1-z)^{x-1})
 \Big\}.
 \end{align*}
{For $y,z$ fixed,  let } $ h_0(x)=y^x-z^x $ and $ h_1(x)=(1-y)^x-(1-z)^x $. Performing a summation by parts, we have
 \begin{align}\label{eq:pre_D}
\begin{split}
 D_{\lambda,\sigma}y-D_{\lambda,\sigma}z
 &=
 \sum_{x=0}^{K-1}\sigma_{x+1}(h_1(x+1)-h_1(x))
 -\sum_{x=0}^{K-1}
 \lambda_{x+1}(h_0(x+1)-h_0(x))
 \\&=-(\lambda_Kh_0(K)-\sigma_Kh_1(K))
 -\sum_{x=1}^{K-1}
 \Big\{
 (\lambda_x-\lambda_{x+1})h_0(x)-(\sigma_x-\sigma_{x+1})h_1(x)\Big\}.
\end{split}
 \end{align} 
{Since for  any integer $ x\geq3 $ it } { holds
	 \begin{align*}
	a^x-b^x=(a-b)\left(a^{x-1}+b^{x-1}+
	\sum_{i=1}^{x-2}b^ia^{x-1-i}
	\right),\qquad a\geq b\geq 0,
	\end{align*} 
 we have the following decomposition
 \begin{align*}
 h_0(x)=(y-z)
 \left\{
 1_{x=1}+(y+z)1_{x=2}+1_{x\geq 3}
 \left(y^{x-1}+z^{x-1}+
 \sum_{i=1}^{x-2}z^iy^{x-1-i}
 \right)
 \right\}.
 \end{align*}
 Note that for $ K=2 $ we do not have the sum over $ i $ above, and if either $ y $ or $ z $ equal $ 0 $, the identity trivially holds. For {$ h_1, $} we  replace $ y $ and $ z $ by $ 1-y $ and $ 1-z $, respectively. For $ y,z\in (0,1) $, replacing $ h_0(x) $ and $ h_1(x) $ in \eqref{eq:pre_D} and rearranging terms, the proof of \eqref{D_decomp} ends. For  \eqref{v>0}, since 
 \begin{align*}
 	V_{\lambda,\sigma}(y,z)=\lambda_1-\lambda_2+\sigma_1-\sigma_2
 	+\sum_{\substack{x=2\\ \sigma_{K+1},\lambda_{K+1}:=0}}^K(\lambda_x-\lambda_{x+1})v_{x}(y,z)+(\sigma_x-\sigma_{x+1})v_{x}(1-y,1-z),
 \end{align*}
 with $ v_x $ defined as in the statement, for decreasing sequences $ \lambda,\sigma $ the sum is bounded  from below by zero, while $ \lambda_1-\lambda_2+\sigma_1-\sigma_2>0 $. For constant sequences, \textit{i.e.,} $ \lambda_x=\lambda_1,\sigma_x=\sigma_1 $ for $ x=2,\dots,K $, we have only the last term of the sum, $ V_{\lambda,\sigma}(y,z)=\lambda_1v_K(y,z)+\sigma_1v_K(1-y,1-z) $, which is bounded from below by a positive constant if $ y,z\neq 0,1 $ and either $ \lambda_1 $ or $ \sigma_1 $ is not zero.
}	
%
%

\end{proof}
We end the present section with some observations. 
{\begin{rem}
	Note that $ v_x(y,z) $, for $ y,z\in\{0,1\} $ is well defined for any $ x=1,\dots,K $. Moreover, we have $ v_x(0,0)=1_{x=1},\; v_x(1,0)=v_x(0,1)=1 $ and $ v_x(1,1)=x $. This implies that we can indeed find positive lower bounds for $ V_{\lambda,\sigma}(y,z) $ with $ y,z\in\{0,1\} $ under some minor restrictions. More precisely, $ V_{\lambda,\sigma}(1,0)=V_{\lambda,\sigma}(0,1)=\lambda_K+\sigma_K $ (enough to consider $ \lambda_K>0 $ or $ \sigma_K>0 $), while $ V_{\lambda,\sigma}(0,0)=\lambda_1-\lambda_2+\sum_{x=1}^Kx(\sigma_x-\sigma_{x+1}) $ (enough to consider $ \lambda,\sigma $ constant (not zero) or decreasing; $ \sigma $ constant sequence $ 0 $ and $ \lambda_1\neq \lambda_2  $; or $ \lambda_1=\lambda_2 $ and $ \sigma $ constant (not zero) or decreasing ), while for the restrictions where $ V_{\lambda,\sigma}(1,1)>0 $ it is enough to consider the restrictions for $ V_{\lambda,\sigma}(0,0) $ with $ \lambda $ and $ \sigma $  exchanged.
\end{rem}
}
\begin{rem}[Case $ K=2 $.]
\ccl{Fixing $ K=2 $, the Ricatti equation with constant coefficients in  \eqref{diffeq}, that is
	\begin{align*}
		m_t=m_0+ \int_0^t
		\mathfrak{i}_{1}+(\mathfrak{i}_{2}-\mathfrak{o}_{2}-(\mathfrak{i}_{1}+\mathfrak{o}_{1}))m_s-(\mathfrak{i}_{2}-\mathfrak{o}_{2})m^2_sds,
		\end{align*}
		is explicitly solvable. For simplicity, let us write 
	\begin{equation*}
	a=-(\mathfrak{i}_{2}-\mathfrak{o}_{2}),\qquad
	b=\mathfrak{i}_{2}-\mathfrak{o}_{2}-(\mathfrak{i}_{1}+\mathfrak{o}_{1}),\qquad
	c=\mathfrak{i}_{1},
	\end{equation*}
so that  $ \frac{dm_t}{dt}=am_t^2+bm_t+c $. If $a=0$, one obtains the explicit expression $m_t=(m_0+\frac{c}{b})e^{bt}-\frac{c}{b}$, which converges to $m^*:=-c/b=\mathfrak{i}_{1}(\mathfrak{i}_{1}+\mathfrak{o}_{1})$ because   $ b\leq 0$ by assumption \eqref{H1}.

We now assume $a\neq 0$. Let 	$\Delta=b^2-4ac$. One easily checks that $\Delta=(\mathfrak{i}_{1}-\mathfrak{o}_{1}+\mathfrak{i}_{2}-\mathfrak{o}_{2})^2+4\mathfrak{i}_{1}\mathfrak{o}_{1}>0$ thanks to assumption \eqref{H1} and that $ 2a+b,b\leq 0$ by assumption \eqref{H3} which guarantees  $\mathfrak{i}_{1}\geq \mathfrak{i}_{2}$, $\mathfrak{o}_{1}\geq \mathfrak{o}_{2}$.
	Define $ \kappa^\pm=\tfrac{ -b\pm \sqrt{\Delta}}{2a} $, we find that $ am_s^2+bm_s+c=a(m_s-\kappa^+)(m_s-\kappa^-)$.
	Observe that 
	\begin{equation}\label{eq:decompms}
	\frac{1}{am_s^2+bm_s+c}
	=\frac{1}{a}\frac{1}{(m_s-\kappa^-)(m_s-\kappa^+)}
	=\frac{1}{\sqrt{\Delta}}\left\{\frac{1}{m_s-\kappa^+}
	-\frac{1}{m_s-\kappa^-}\right\}.
	\end{equation}
	Assume that $m_0\neq{\kappa^\pm}.$
	Since $m_t-\kappa^\pm$ cannot change sign in time, (otherwise it would cross the value $\kappa^\pm$ which is a fixed point for the Ricatti equation)
	\begin{align*}
	\int_0^t\frac{dm_s}{m_s-\kappa^\pm}
	=\log\frac{m_t-\kappa^\pm}{m_0-\kappa^\pm}.
	\end{align*}
After some computations, this, together with \eqref{eq:decompms}, finally yields
	\begin{equation*}
		m_t=
		\frac{\kappa^--\varepsilon_t \kappa^+}{1-\varepsilon_t}=\kappa^++\frac{\kappa^--\kappa^+}{1-\varepsilon_t}, \quad  \quad \mbox{where} \quad \varepsilon_t=\frac{\kappa^--m_0}{\kappa^+-m_0}e^{-\sqrt{\Delta}t}.
	\end{equation*}
In particular, $m_t$ is monotonous, starts at $m_0$, and converges to $m^*:=\kappa^-$. We now only need to check that $\kappa^-\in [0,1]$, which is straightforward by separating the cases $a> 0$ and $a< 0$, and using that $b, \; 2a+b\leq 0$.}
\end{rem}
\begin{rem}[Case $ K> 2 $] 
To finish, we observe that for the model introduced in \cite{DMP12}, \textit{i.e.,} taking above $ \beta_x=\gamma_x=j $ and $ \alpha_x=\delta_x=0 $ for $ x=\{1,\dots,K\} $, a simple computation shows that the solution of $ D_{\mathfrak i,\mathfrak i}m^*=0 $ is, in fact, $ m^*=1/2. $ The model in \cite{DMP12} is a particular case of considering $ \mathfrak i=\mathfrak o $. For the latter, $ D_{\mathfrak i,\mathfrak i}m^*=0 $ can be solved for $ m\equiv m^* $ by noticing that $ D_{\mathfrak i,\mathfrak i}m=D_{\mathfrak i,0}m-D_{\mathfrak i,0}(1-m)=-(m-(1-m))V_{\mathfrak i,0}(m,1-m)=0\Leftrightarrow m^*=1/2 $. 
Perhaps more interesting is the case when $ \mathfrak i_x=\mathfrak i $ , $ \mathfrak o_x = \mathfrak o $ for $ x\in\{1,\dots, K\} $ , \textit{i.e.,} the rates are constant in $x$. Under these conditions, we have 
	\begin{align*}
		\frac{1-(m^*)^K}{1-(1-m^*)^K}=\frac{\mathfrak o}{\mathfrak i}.
	\end{align*}
If $ \mathfrak o = \mathfrak i $ then the mass stabilizes to the middle of point of $ [0,1] $, but for fixed $ \mathfrak i $ (resp. $ \mathfrak o $), as the rate of removal (resp. injection) of particles increases (resp. decreases)  that is $ \mathfrak o \nearrow $ (resp. $ \mathfrak i\searrow $), then the mass decreases \textit{exponentially} in $ K $.
\end{rem}

\subsection{On the uniqueness of the stationary macroscopic profile for $ \theta=1 $}
\label{sec:statio_uniq}

We now prove that, under our assumptions, the hydrodynamic limit for $ \theta=1 $ admits a unique stationary solution.

The same idea used for $ \theta>1 $ can be used now to guarantee uniqueness for the stationary solution  of the  heat equation with Robin boundary conditions. {Throught the proof we will assume \eqref{H2}, but we divide in two cases. First assume that $ \alpha_1\wedge\delta_1=\delta_1 $ and $ \gamma_1\wedge\beta_1=\beta_1 $. Start} by observing that for every $ \rho^*(1) $ fixed, there exists  a unique solution for $ \rho^*(0) $ of the equation $ -D_{\alpha,\gamma}\rho^*(0)=D_{\beta, \delta}\rho^*(1) $, which is equivalent to $ D_{\alpha,\gamma}\rho^*(0)=D_{\delta,\beta}(1-\rho^*)(1) $. From the previous observations we know that $ \forall f\in[0,1] $ we have $ D_{\delta,\beta}f\in[-\beta_1,\delta_1] $, and also that $ \forall u\in[-\gamma_1,\alpha_1] $ there is one $ f\in[0,1]: D_{\alpha,\gamma}f=u. $ In this way, there exists a unique $ \rho^*(0)\in[0,1] $ such that, for any fixed $ \rho^*(1) $ with $ D_{\delta,\beta}(1-\rho^*)(1)=u\in[-(\gamma_1\wedge\beta_1),\alpha_1\wedge\delta_1], $ we have $ D_{\alpha,\gamma}\rho^*(0)=u $. More precisely, by monotonicity, we have that 
	\begin{align*}
		\rho^*(0)&\in[D^{-1}_{\alpha,\gamma}(\alpha_1\wedge\delta_1),D^{-1}_{\alpha,\gamma}(-(\gamma_1\wedge\beta_1))]{\subseteq}[0,1]{,}\\
		\rho^*(1)&\in[D^{-1}_{\beta,\delta}(\gamma_1\wedge\beta_1),D^{-1}_{\beta,\delta}(-(\alpha_1\wedge\delta_1))]{=}[0,1],
	\end{align*} 
	{Where we remark that the equality on the second line above is due to $ \alpha_1\wedge\delta_1=\delta_1 $ and $ \gamma_1\wedge\beta_1=\beta_1 $.} To study $ \rho^*(1)-\rho^*(0)=D_{\beta,\delta}\rho^*(1) $, 
	let us first define the (solution) map $ \rho^*(1)\mapsto \phi_{\alpha,\gamma,\beta,\delta}(\rho^*(1)) $ where $ \phi_{\alpha,\gamma,\beta,\delta}(\rho^*(1)):=\rho^*(0) $ is the (unique for fixed $ \rho^*(1) $) solution (function of $ \rho^*(1) $) to the equation $ -D_{\alpha,\gamma}\rho^*(0)=D_{\beta,\delta}\rho^*(1) $. Moreover, define the function $ T $ as 
	\begin{align*}
		T:[D^{-1}_{\beta,\delta}(\gamma_1\wedge\beta_1),D^{-1}_{\beta,\delta}(-(\alpha_1\wedge\delta_1))]\to \mathbb{R},\quad u\mapsto T(u)=u-\phi_{\alpha,\gamma,\beta,\delta}(u)-D_{\beta,\delta}u.
	\end{align*} 
	{From our hypotesis on the parameters on the beginning of the proof, it is clear that $ Dom(T)=[0,1] $.} 	
	We claim that $ T $ is (Lipschitz) continuous and monotone increasing on $ [0,1] $. Assuming this, {we have}
	\begin{align*}
		-(\phi_{\alpha,\gamma,\beta,\delta}(0)+\beta_1)=T(0)\leq T(u)\leq T(1)=1+\delta_1-\phi_{\alpha,\gamma,\beta,\delta}(1)
	\end{align*} and thus there exists a  unique $ \rho^*(1) $ such that $ T\circ\rho^*(1)=0 $, and we are done. To prove the claim, consider $ g,h\in[0,1] $ and shorten $ g^*:=\phi_{\alpha,\gamma,\beta,\delta}(g) $ and $ h^*:=\phi_{\alpha,\gamma,\beta,\delta}(h) $. Then,
	\begin{align}\label{eq:T}
			T(g)-T(h)=g-h-(g^*-h^*)+(g-h)V_{\beta,\delta}(g,h).
	\end{align} 
	We now relate $ (g^*-h^*) $ and $ (g-h) $. From the definition of $ \phi $ we have that $ D_{\alpha,\gamma}g^*+D_{\beta,\delta}g=0 $ and $ D_{\alpha,\gamma}h^*+D_{\beta,\delta}h=0 $. Subtracting the two previous equations we observe that
	\begin{align*}
	(g^*-h^*)V_{\alpha,\gamma}(g^*,h^*)=-(g-h)V_{\beta,\delta}(g,h).
	\end{align*}
	In this way, replacing in \eqref{eq:T} the expression for $ g^*-h^* $ given in last display we arrive at 
	\begin{align*}
	T(g)-T(h)=(g-h)\left(
	1+V_{\beta,\delta}(g,h)+\frac{V_{\beta,\delta}(g,h)}{V_{\alpha,\gamma}(g^*,h^*)}
	\right)
	\end{align*}
	and since from Lemma \ref{lem:D1-D2} there are universal constants such that the  $ V_{\cdot,\cdot} $ terms above are bounded from below, and since $ g,h,g^*,h^*\in[0,1] $ they are also bounded from above, the claim is shown.\par
	{For the case $ \alpha_1\wedge\delta_1=\alpha_1 $ and $ \gamma_1\wedge\beta_1=\gamma_1 $, the proof is completely analogous. Observe that for every $ \rho^*(0) $ fixed, there exists  a unique solution for $ \rho^*(1) $ of the equation $ -D_{\alpha,\gamma}\rho^*(0)=D_{\beta, \delta}\rho^*(1) $. Following the same argument as on the previous case, we now have 
	\begin{align*}
	\rho^*(0)&\in[D^{-1}_{\alpha,\gamma}(\alpha_1\wedge\delta_1),D^{-1}_{\alpha,\gamma}(-(\gamma_1\wedge\beta_1))]=[0,1],\\
	\rho^*(1)&\in[D^{-1}_{\beta,\delta}(\gamma_1\wedge\beta_1),D^{-1}_{\beta,\delta}(-(\alpha_1\wedge\delta_1))]\subseteq[0,1],
	\end{align*} 	
	and we set to show that $ \rho^*(1)-\rho^*(0)=-D_{\alpha,\gamma}\rho^*(0) $ have one solution $ \rho^*(0) $, instead of $ \rho^*(1) $. For that, we now define the map $ \rho^*(0)\mapsto\phi_{\alpha,\gamma,\beta,\delta}(\rho^*(0)) $ where $ \phi_{\alpha,\gamma,\beta,\delta}(\rho^*(0))=:\rho^*(1) $ is the (unique for fixed $ \rho^*(0) $) solution (function of $ \rho^*(0) $) to the equation $ -D_{\alpha,\gamma}\rho^*(0)=D_{\beta,\delta}\rho^*(1) $, and define the map $ [0,1]\ni u \mapsto T(u)=u-\phi_{\alpha,\gamma,\beta,\delta}(u)-D_{\alpha,\gamma}u\in\mathbb{R}. $ Proceeding as on the previous case we have that $ T $ is Lipschitz continuous monotone increasing on $ [0,1] $ and exists a unique $ \rho^*(0) $ such that $ T\circ\rho^*(0)=0 $. 
	  }

\subsection{Uniqueness and convergence of \eqref{eq:Robin_equation}  to stationary solutions}
\label{appendix:convergence}
In this section, we assume \eqref{H2}, and we show the convergence of the weak solution of \eqref{eq:Robin_equation} to the unique stationary solution, denoted here by $ \rho^* $, investigated in the previous section. Existence and uniqueness of such a stationary solution is proved in Section \ref{sec:statio_uniq}. The main difficulty on showing this result lies on the fact that the weak solution is not regular enough w.r.t. time in order to have an integration by parts formula as we do w.r.t. space. To solve this issue, our approach is to relate the weak formulation and the mild formulation of \eqref{eq:Robin_equation}. We then show, following \cite{LT}, that mild and weak solutions are equivalent \textit{in some sense}, which indirectly gives us a regular enough version of the weak solution to then proceed with energy estimates. With a few adjustments from \cite{dptv} (Section $ 2.3 $), we first define the notion of mild solutions of \eqref{eq:Robin_equation}.

\begin{definition}[Mild solution of \eqref{eq:Robin_equation}]
	\label{def:mild_sol_ Robin}
	We call \emph{mild solution of } \eqref{eq:Robin_equation} any function $\rho:[0,T]\times[0,1]\to[0,1]$ satisfying $ M(\rho,t):=\rho_t-S\rho_t=0 $, with
\begin{align*}
	S\rho_t(u)=\int_0^1P_t(u,v)f_0(v)dv+
	\int_0^t \Big\{
	P_{t-s}(u,0)(D_{\alpha,\gamma}\rho_s)(0)+
	P_{t-s}(u,1)(D_{\beta,\delta}\rho_s)(1)\Big\}ds=0,
\end{align*}	
	where $P_t(u,v)=\sum_{w\in\Psi^{-1}(v)}\Phi_t(u,w)$ is the density kernel generated by the Laplacian $ \partial^2_u $ on $ [0,1] $ with reflecting Neumann boundary conditions, related to the heat kernel 
	\begin{align}\label{eq:heat_kernel}
		\Phi_t(u,w)=\frac{1}{(4\pi t)^{1/2}}e^{-\frac{(u-w)^2}{4t}},
	\end{align}
	by the reflection map $ \psi:\mathbb{R}\to[0,1] $ defined as
	\begin{align*}
	\psi(u+k)=
	\begin{cases}
	u,& u\in[0,1], k \text{ even},\\
	1-u,& u\in[0,1], k \text{ odd},
	\end{cases}
	\end{align*}
	extended to $ \mathbb{R} $ by the symmetry $ \psi(v)=\psi(-v) $, for $ v\in\mathbb{R} $.
\end{definition}
\begin{rem}\label{rem:regularity_S}
	Observe that fixed $ u\in[0,1] $, $ S\rho_t(u) $ is differentiable w.r.t. time, and given a smooth initial data $ f_0 $, we have that $ S\rho_t\in C^\infty(0,1) $. Moreover, there exists the limits $ \lim_{u\to 0} \frac{d^n}{du^n} S\rho_t(u)$ and $\lim_{u\to1}\frac{d^n}{du^n} S\rho_t(u) $ for any $ n\in\mathbb{N} $, and for any $ t\in[0,T] $ we have $ S\rho_t\in C[0,1] $. 
\end{rem}
Following \cite{LT}, with some adaptations to account for the fact that $ P_t $ involves here Neumann boundary conditions, we show below that if $ \rho $ is a weak solution of \eqref{eq:Robin_equation}, then $ \rho_t-S\rho_t=0 $ a.e. From the previous remark, $ S\rho $ is regular enough to satisfy $ F(\rho,S\rho,t)=0 $. Moreover, letting $ \rho^* $ be the stationary solution, as mentioned in Theorem \ref{thm:hydrostatics}, from simple energy estimates we can see that $ F(S\rho,S\rho,t)-F(\rho^*,S\rho,t)=0\implies \norm{S\rho_t-\rho^*}_{L^2}=O(e^{-Ct}) $ for positive constant $ C $, which implies weak convergence to the stationary profile. In this way, we have that $ \pi_t\to \pi^* $ in $ \mathcal{M}^+ $ (endowed with the weak topology), since $ \pi_t(du)=\rho_t(u)du, $ with $ \rho_t\xrightarrow{w} \rho^* $ and $\rho^*(u)du=:\pi^*(du) $, and thus we can apply Proposition \ref{hsconcentration}.

\begin{prop}\label{W=M}
If a function $\rho:[0,T]\to[0,1]$ is a weak solution  in the sense of Definition \ref{def:weak_sol_ Robin}, where it satisfies $ F(\rho,G,t)=0 $ for any $ G\in C^{1,2}([0,T]\times[0,1]) $, then $ \rho $ also satisfies $ M(\rho,t)=0 $ a.e. $\forall t>0$. Moreover, if $ \rho:[0,T]\times[0,1] $ is a function satisfying $ \inner{M(\rho,t),G}=0 $ for any $ G\in C^{1,2}([0,T]\times[0,1]) $, then we have $ F(S\rho,G,t)=0 $.		

\end{prop}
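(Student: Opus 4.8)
The plan is to prove the two implications in Proposition \ref{W=M} separately, both times exploiting that $P_t(u,v)$ is the kernel of the semigroup generated by the Neumann Laplacian on $[0,1]$: it satisfies the backward equation $(\partial_s+\partial_u^2)P_{t-s}(\cdot,v)=0$, the Neumann conditions $\partial_u P_r(0,v)=\partial_u P_r(1,v)=0$ for every $r>0$, the symmetry $P_r(u,v)=P_r(v,u)$, and $P_r(\cdot,v)$ is an approximate identity as $r\downarrow 0$.

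For the first implication I would test the weak formulation \eqref{eq:Robin_integral} against a \emph{regularized backward kernel}. Fix $t>0$, $v\in(0,1)$ and $\varepsilon>0$, and take $G_s^\varepsilon(u):=P_{t-s+\varepsilon}(u,v)$, which is $C^{1,2}$ on $[0,t]\times[0,1]$ because $t-s+\varepsilon\geq\varepsilon>0$, and may be extended smoothly to $[0,T]\times[0,1]$. By the backward equation, $\int_0^t\langle\rho_s,(\partial_u^2+\partial_s)G_s^\varepsilon\rangle\,ds=0$, and by the Neumann property of the kernel, $\int_0^t\{\rho_s(1)\partial_u G_s^\varepsilon(1)-\rho_s(0)\partial_u G_s^\varepsilon(0)\}\,ds=0$. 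From $F(\rho,G^\varepsilon,t)=0$ we are left with
\[\langle\rho_t,P_\varepsilon(\cdot,v)\rangle=\langle f_0,P_{t+\varepsilon}(\cdot,v)\rangle+\int_0^t P_{t-s+\varepsilon}(1,v)(D_{\beta,\delta}\rho_s)(1)\,ds+\int_0^t P_{t-s+\varepsilon}(0,v)(D_{\alpha,\gamma}\rho_s)(0)\,ds.\]
Letting $\varepsilon\downarrow 0$: the left side converges to $\rho_t(v)$ for a.e.\ $v$ (Lebesgue differentiation, $P_\varepsilon(\cdot,v)$ being an approximate identity and $\rho_t\in L^1$); the first term on the right converges to $\int_0^1 P_t(v,u)f_0(u)\,du$ by continuity of $r\mapsto P_r$ together with the kernel's symmetry; and the last two integrals converge by dominated convergence, using $\rho_s\in[0,1]$ (so $D_{\alpha,\gamma}\rho_s(0)$, $D_{\beta,\delta}\rho_s(1)$ are bounded) and $\int_0^t P_{t-s}(0,v)\,ds=\int_0^t P_r(0,v)\,dr\lesssim\sqrt t<\infty$, likewise at $1$. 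Comparing with Definition \ref{def:mild_sol_ Robin} yields $\rho_t(v)=S\rho_t(v)$ for a.e.\ $v$, i.e.\ $M(\rho,t)=0$ a.e., for every $t>0$.

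For the second implication, the hypothesis $\langle M(\rho,t),G\rangle=0$ for all $G$ means $\rho_t=S\rho_t$ a.e., so in particular the boundary traces of $\rho$ coincide with those of the smooth function $S\rho$ and the operators $D_{\alpha,\gamma},D_{\beta,\delta}$ in the definition of $S$ are unambiguous. It then suffices to check $F(S\rho,G,t)=0$ directly. By Remark \ref{rem:regularity_S}, $S\rho\in C^{1,2}((0,T]\times[0,1])$; differentiating the Duhamel formula shows $\partial_s S\rho_s=\partial_u^2 S\rho_s$ on $(0,1)$, and $\partial_u S\rho_s(0)=-(D_{\alpha,\gamma}\rho_s)(0)$, $\partial_u S\rho_s(1)=(D_{\beta,\delta}\rho_s)(1)$ — these last identities being precisely the statement that differentiating the boundary convolution $\int_0^t P_{t-s}(\cdot,0)(D_{\alpha,\gamma}\rho_s)(0)\,ds$ at $u=0$ (resp.\ at $u=1$) recovers the prescribed flux, the contribution of $P_t f_0$ vanishing there by the Neumann property of the kernel. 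Granting this, I would integrate $\langle S\rho_s,G_s\rangle$ by parts in time over $[0,t]$, replace $\partial_s S\rho_s$ by $\partial_u^2 S\rho_s$, integrate by parts twice in space, and substitute the two boundary identities; each resulting term then cancels against the corresponding term of \eqref{eq:Robin_integral}, so $F(S\rho,G,t)=0$. (The Neumann case $\theta>1$ is identical with the $D$-terms set to zero.)

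The main obstacle is the boundary-flux identity $\partial_u S\rho_s(0)=-(D_{\alpha,\gamma}\rho_s)(0)$: it is \emph{not} a naive interchange of $\partial_u$ and $\int_0^t ds$ — the pointwise $u$-derivative of $P_r(\cdot,0)$ at the boundary vanishes for each fixed $r>0$ — but a concentration/singular-integral statement as $u\downarrow 0$, which must be proved using the explicit Gaussian form \eqref{eq:heat_kernel} of $\Phi_t$ and a scaling change of variables in the time integral. A secondary nuisance is the loss of regularity of $S\rho$ up to $s=0$ when $f_0$ is only measurable, together with the mild $s\uparrow t$ singularity of the Duhamel kernel; I would handle both either by first proving the statement for smooth $f_0$ and passing to the limit, or by working on $[\tau,T]$ and letting $\tau\downarrow 0$, relying on the uniform bounds $\rho,S\rho\in[0,1]$ and the integrability of $r\mapsto P_r(0,v),P_r(1,v)$ near $0$.
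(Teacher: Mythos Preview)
Your first implication is essentially the paper's argument: both test the weak formulation against a backward Neumann heat kernel. The paper uses $G_s(u)=(P_{t-s}g)(u)$ for arbitrary $g\in C^2([0,1])$ and handles the singularity at $s=t$ by a linear cutoff on $[t,t+\eps]$, whereas you use the pointwise kernel $P_{t-s+\varepsilon}(\cdot,v)$ and regularize by time-shifting. Either regularization works and the substance is the same.

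Your second implication takes a genuinely different route. You argue that $S\rho$ is a \emph{classical} solution of the heat equation satisfying the Robin conditions $\partial_u S\rho_s(0)=-(D_{\alpha,\gamma}\rho_s)(0)$ pointwise, then verify $F(S\rho,G,t)=0$ by moving derivatives onto $S\rho$ via integration by parts. As you correctly flag, the hard step is the boundary-flux identity for the single-layer potential $\int_0^t P_{t-s}(u,0)\phi(s)\,ds$, which requires the Gaussian scaling analysis. The paper avoids this entirely: it differentiates $\langle\rho_t,g\rangle$ in $t$ directly from the mild formula, integrates $(\partial_u^2 P_t)g$ by parts in the \emph{kernel} variable (using only the Neumann property of $P_t$), substitutes $P_tf_0=\rho_t-\int_0^t\{\cdots\}$, and then one more integration by parts makes the residual $\int_0^t\langle\partial_tP_{t-s}(\cdot,0)\cdots,g\rangle\,ds$ terms cancel. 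This never touches $\partial_u S\rho$ at the boundary. Your approach buys the stronger conclusion that $S\rho$ is a strong solution; the paper's buys a shorter, purely weak-form computation with no layer-potential theory.
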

\begin{proof}
Let us fix $ g\in C^2([0,1]) $ and $ T\geq t\geq0 $ and define $ G^\eps: [0,T]\times[0,1]\to [0,1] $ as
\begin{align*}
G_s^\eps(u)=
(P_{t-s}g)(u)1_{\{s\in[0,t]\}}+
g(u)\frac{t+\eps-s}{\eps}1_{\{s\in [t,t+\eps]\}}+
(1-1_{\{s\in[t+\eps,T]\}}). 
\end{align*}
Above, for $  t\geq 0$ and $ u\in[0,1] $,  $ (P_tf)(u)=\int_0^1P_t(u,v)f(v)dv\equiv \inner{P_t(u,\cdot),f} .$  Assume now that $\rho$ is a weak solution  in the sense of Definition \ref{def:weak_sol_ Robin}. Recalling that for fixed $ v $,  $ P_t(u,v) $ satisfies the heat equation, with Neumann boundary conditions, taking $ G^\eps\in C^{1,2}([0,T]\times[0,1]) $ as our test function we have
\begin{align*}
F(\rho,G^\eps,T)&=-\inner{f_0,P_tg}
-\int_0^t
\Big\{
(D_{\beta,\delta}\rho_s)(1)(P_{t-s}g)(1)+(D_{\alpha,\gamma}\rho_s)(0)(P_{t-s}g)(0)
\Big\}ds+\int_t^{t+\eps}\inner{\rho_s,g}\eps^{-1}ds\\
-&\int_t^{t+\eps}
\Big\{
\inner{\rho_s,\partial^2_u g}
+(D_{\beta,\delta}\rho_s)(1)g
+(D_{\alpha,\gamma}\rho_s)(0)g
+\rho_s(1)\partial_u g(1)
+\rho_s(0)\partial_u g(0)
\Big\}\eps^{-1}(t+\eps-s)ds.
\end{align*}
Letting $ \eps\to0 $, $ G^\eps $ converges in the sup norm to $ G_s(u):=P_{t-s}g(u) $. As a consequence,  we have that $ F(\rho,G^\eps,T) $  converges to
\begin{align*}
\inner{\rho_t,g}-\inner{P_tf_0,g}
-\int_0^t\inner{(D_{\beta,\delta}\rho_s)(1)P_{t-s}(\cdot,1),g}ds
-\int_0^t\inner{(D_{\alpha,\gamma}\rho_s)(0)P_{t-s}(\cdot,0),g}ds=0.
\end{align*}
Approximating $G^{\eps}$ by a sequence $ (G^{\eps}_k)_{k\geq1} $ in $ C^{1,2}([0,T]\times[0,1]) $ w.r.t. the  $ L^1 $ norm, and since $ g\in C^2([0,1]) $ is arbitrary, proves as wanted that $M(\rho,t)=0 $ a.e. 

{For the converse, as in \cite{LT} and  for a better exposition, we consider a test function $ g $ independent of time, and we remark that the extension to a time dependent  function } is completely analogous.  Let us assume that $ \rho:[0,T]\times[0,1]\to[0,1] $ satisfies $ \rho=S\rho $ \textit{a.e.} $\forall t$. 
Then, for any $ g\in C^{2}([0,1]) $ it must satisfy 
\begin{align*}
	\inner{\rho_t,g}=\inner{P_tf_0,g}+\int_0^t\inner{P_{t-s}(\cdot,1)(D_{\beta,\delta}\rho_s)(1)+P_{t-s}(\cdot,0)(D_{\alpha,\gamma}\rho_s)(0),g}ds.
\end{align*}
In particular, differentiating the expression above with respect to time, we have
\begin{align}\label{eq:time_diff_inner}
\begin{split}
	\frac{d}{dt}\inner{\rho_t,g}=&
	\inner{(\partial^2_u P_t)g,f_0}
	+D_{\beta,\delta}\rho_t(1)g(1)+D_{\alpha,\gamma}\rho_t(0)g(0)\\+&\int_0^t\inner{\partial_tP_{t-s}(\cdot,1)D_{\beta,\delta}\rho_s(1)+\partial_tP_{t-s}(\cdot,0)D_{\alpha,\gamma}\rho_s(0),g}ds.
\end{split}
\end{align}
We now integrate by parts $ (\partial^2_u P_t)g(u)$ twice, that is:
\begin{align*}
	(\partial^2_u P_t)g(u) =\int_0^1g(v) \partial_v^2 P_t(u,v)dv
	&=\int_0^1 P_t(u,v)\partial_v^2 g(v)dv 
	-P_t(u,1)\partial_u g(1) + P_t(u,0)\partial_u g(0).
\end{align*}
 Above,  we used the fact that $ P_t $ satisfies Neumann boundary conditions. Since by assumption $M(\rho,t)=0$ a.e., we have a.e. $P_tf_0=\rho_t-
	\int_0^t \Big\{
	P_{t-s}(\cdot,0)(D_{\alpha,\gamma}\rho_s)(0)+
	P_{t-s}(\cdot,1)(D_{\beta,\delta}\rho_s)(1)\Big\}ds$. In particular, replacing the expression above in \eqref{eq:time_diff_inner}, yields
\begin{align*}
\frac{d}{dt}\inner{\rho_t,g}
&=
\inner{\rho_t,\partial^2_u g}
+(D_{\beta,\delta}\rho_t)(1)g(1)+(D_{\alpha,\gamma}\rho_t)(0)g(0)
+\partial_u g(0)\rho_t(0)-\partial_u g(1)\rho_t(1)\\&\quad
-\int_0^t \inner{P_{t-s}(\cdot,0)D_{\alpha,\gamma}\rho_s(0)+P_{t-s}(\cdot,1)D_{\beta,\delta}\rho_s(1)g(1),\partial^2_u g} ds\\&\qquad
-\partial_u g(0)\int_0^t P_{t-s}(0,0)D_{\alpha,\gamma}\rho_s(0)+P_{t-s}(0,1)D_{\beta,\delta}\rho_s(1)ds\\&\quad\qquad
+\partial_u g(1)\int_0^tP_{t-s}(1,0)D_{\alpha,\gamma}\rho_s(0)+P_{t-s}(1,1)D_{\beta,\delta}\rho_s(1)ds\\&\qquad\qquad
+\int_0^t\inner{\partial_tP_{t-s}(\cdot,1)D_{\beta,\delta}\rho_s(1)+\partial_tP_{t-s}(\cdot,0)D_{\alpha,\gamma}\rho_s(0),g}ds.
\end{align*}
Integrating by parts the fourth line on the previous display twice, using the fact that $ P_t $ satisfies the heat equation with Neumann boundary conditions, yields
\begin{align*}
	\frac{d}{dt}\inner{\rho_t,g}=\inner{\rho_t,\partial^2_u g}
	+(D_{\beta,\delta}\rho_t)(1)g(1)+(D_{\alpha,\gamma}\rho_t)(0)g(0)
	+\partial_u g(0)\rho_t(0)-\partial_u g(1)\rho_t(1).
\end{align*}
Integrating in time this identity yields as wanted that $ F(\rho, g, t)=0$ for any $g\in C^2([0,1])$. The case when $g$ is replaced by a time-dependent function $G$ is similar, we omit it.
\end{proof}
We now show that any weak solution converges exponentially to the stationary solution  $\rho^*$.
\begin{prop}
\label{prop:statio_conv}
There exists a constant $C>0$ such that any weak solution $\rho$  of \eqref{eq:Robin_equation} satisfies 
\[\norm{\rho-\rho^*}^2_{L^2}\leq e^{-2Ct},\]
where $\rho^*$ is the unique stationary solution to \eqref{eq:Robin_equation}.
\end{prop}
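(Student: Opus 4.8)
The plan is to run a direct $L^2$ energy estimate on the difference $\rho-\rho^*$. The only two delicate points are the time-regularity needed to differentiate $\norm{\rho_t-\rho^*}^2_{L^2}$ in $t$, and the coercivity of the boundary contributions; the first is supplied by passing to the mild formulation, the second by Lemma \ref{lem:D1-D2}.

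First I would replace $\rho$ by its regular representative. By Proposition \ref{W=M} the weak solution satisfies $\rho_t=S\rho_t$ for a.e.\ $(t,u)$, and by Remark \ref{rem:regularity_S} the function $u_t:=S\rho_t$ is, for every $t>0$, smooth on $(0,1)$, continuous up to $[0,1]$ together with its spatial derivatives, and differentiable in $t$; it solves $\partial_t u_t=\partial^2_u u_t$ on $(0,1)$ with $\partial_u u_t(0)=-(D_{\alpha,\gamma}u_t)(0)$ and $\partial_u u_t(1)=(D_{\beta,\delta}u_t)(1)$. Recall from Section \ref{sec:statio_uniq} that $\rho^*$ is the affine profile $\rho^*(v)=(1-v)\rho^*(0)+v\rho^*(1)$ with $\rho^*(1)-\rho^*(0)=-(D_{\alpha,\gamma}\rho^*)(0)=(D_{\beta,\delta}\rho^*)(1)$, so $\partial^2_u\rho^*\equiv 0$ and $\rho^*$ satisfies the same boundary identities. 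Setting $w_t:=u_t-\rho^*$, we get $\partial_t w_t=\partial^2_u w_t$, and applying the decomposition \eqref{D_decomp} of Lemma \ref{lem:D1-D2} at the two endpoints,
\begin{align*}
\partial_u w_t(0)&=-\big((D_{\alpha,\gamma}u_t)(0)-(D_{\alpha,\gamma}\rho^*)(0)\big)=w_t(0)\,V_{\alpha,\gamma}\big(u_t(0),\rho^*(0)\big),\\
\partial_u w_t(1)&=(D_{\beta,\delta}u_t)(1)-(D_{\beta,\delta}\rho^*)(1)=-w_t(1)\,V_{\beta,\delta}\big(u_t(1),\rho^*(1)\big).
\end{align*}

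Next I would differentiate in time and integrate by parts in space:
\[
\tfrac12\tfrac{d}{dt}\norm{w_t}^2_{L^2}=\int_0^1 w_t\,\partial^2_u w_t\,dv= w_t(1)\partial_u w_t(1)-w_t(0)\partial_u w_t(0)-\norm{\partial_u w_t}^2_{L^2},
\]
which by the two identities above equals
\[
-w_t(1)^2 V_{\beta,\delta}\big(u_t(1),\rho^*(1)\big)-w_t(0)^2 V_{\alpha,\gamma}\big(u_t(0),\rho^*(0)\big)-\norm{\partial_u w_t}^2_{L^2}\leq 0.
\]
By \eqref{v>0} in Lemma \ref{lem:D1-D2} (under the standing assumptions \eqref{H0} and \eqref{H2}) there is $c_0>0$ with $V_{\alpha,\gamma},V_{\beta,\delta}\geq c_0$, so, with $c:=\min(c_0,1)>0$, the right-hand side is at most $-c\big(w_t(0)^2+\norm{\partial_u w_t}^2_{L^2}\big)$. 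Writing $w_t(v)=w_t(0)+\int_0^v\partial_r w_t\,dr$ and using Cauchy--Schwarz gives the elementary trace bound $\norm{w_t}^2_{L^2}\leq 2w_t(0)^2+2\norm{\partial_u w_t}^2_{L^2}$, whence $\tfrac{d}{dt}\norm{w_t}^2_{L^2}\leq -c\,\norm{w_t}^2_{L^2}$. Integrating this on $[\eps,t]$ and letting $\eps\downarrow 0$ — using $\norm{u_\eps-\rho^*}^2_{L^2}\to\norm{f_0-\rho^*}^2_{L^2}\leq 1$ since $f_0,\rho^*$ are $[0,1]$-valued and $[0,1]$ has length $1$ — yields $\norm{w_t}^2_{L^2}\leq e^{-ct}$. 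Since $\rho_t=u_t$ a.e., this is the claim with $C:=c/2$.

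The main obstacle is not the computation but making the energy identity legitimate: the weak solution is a priori only in $L^{2}(0,T;\mathcal H^1)$, so one cannot differentiate $\norm{\rho_t-\rho^*}^2_{L^2}$ directly — the passage through Proposition \ref{W=M} to the mild representative $S\rho$, which by Remark \ref{rem:regularity_S} is smooth and solves the equation classically for $t>0$, is exactly what makes the argument rigorous, and one must be slightly careful near $t=0$ for non-smooth $f_0$ (hence the cut-off at $\eps$ and the $L^2$-continuity of the heat semigroup). The one genuinely model-specific input is Lemma \ref{lem:D1-D2}: it turns the nonlinear Robin mismatch $(D_{\alpha,\gamma}u_t)(0)-(D_{\alpha,\gamma}\rho^*)(0)$ into the dissipative boundary term $-w_t(0)^2 V_{\alpha,\gamma}\leq -c_0 w_t(0)^2$, after which a standard trace/Poincar\'e inequality upgrades the $H^1$ dissipation to an $L^2$ one and Gr\"onwall closes the estimate. (Alternatively, as hinted in the text, one obtains the same identity by subtracting the weak formulations $F(S\rho,S\rho,t)=0$ and $F(\rho^*,S\rho,t)=0$, which is admissible because $S\rho\in C^{1,2}$ is a legitimate test function by Remark \ref{rem:regularity_S}.)
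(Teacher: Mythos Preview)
Your proof is correct and follows essentially the same route as the paper: pass to the regular representative $S\rho$ via Proposition \ref{W=M}, use Lemma \ref{lem:D1-D2} to turn the nonlinear boundary mismatch into a dissipative term $-w_t(0)^2 V_{\alpha,\gamma}-w_t(1)^2 V_{\beta,\delta}$, then combine the energy identity with a trace/Poincar\'e bound and Gr\"onwall. The only cosmetic differences are that the paper reaches the energy identity by subtracting the weak formulations $F(S\rho,w,t)-F(\rho^*,w,t)=0$ (the alternative you mention) rather than invoking the classical PDE directly, and it carries a free parameter $A$ in the Poincar\'e inequality which it then optimizes, whereas you simply take the special case $\norm{w_t}^2_{L^2}\le 2w_t(0)^2+2\norm{\partial_u w_t}^2_{L^2}$; since the statement only asks for some $C>0$, your fixed choice is entirely sufficient.
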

\begin{proof}
 From Lemma \ref{lem:D1-D2} we know that 
$ D_{\alpha,\gamma}\rho_t(0)-D_{\alpha,\gamma}\rho^*(0)=-(\rho_t(0)-\rho^*(0))V_{\alpha,\gamma}(\rho_t,\rho^*)(0,0). $ Let us write $ V(0,t):=V_{\alpha,\gamma}(\rho_t,\rho^*)(0,0) $ and $ V(1,t)=V_{\beta,\delta}(\rho_t,\rho^*)(1,1)$.
Both $ S\rho $ and $ \rho^* $ satisfy the weak formulation and from Remark \ref{rem:regularity_S} letting $ w_t(u):=S\rho_t(u)-\rho^*(u) $ we have $ F(S\rho,w,t)-F(\rho^*,w,t)=0 $, which rewrites as
\begin{multline*}
\inner{w_{t}, w_{t}}=  \inner{w_0,w_0}+
\int_0^t\inner{w_{s},\Big( \partial^2_u + \partial_s\Big) w_{s}} ds  
+	\int_0^t
w_{s}(0)\Big(
\partial_u  w_{s}(0)
-
w_s(0)
V_{\alpha,\gamma}(0,s)
\Big)
ds\\
-\int_0^t
w_{s}(1)
\Big(
\partial_u  w_{s}(1)
+
w_s(1)
V_{\beta,\delta}(1,s)
\Big)
ds.
\end{multline*}
Differentiating w.r.t. time we have
\begin{align*}
\frac{d}{dt}\inner{w_t,w_t}=\inner{w_t,(\partial_u^2+\partial_t)w_t}
+w_t(0)\left(\partial_uw_t(0)-w_t(0)V(0,t)\right)
-w_t(1)\left(
\partial_uw_t(1)+w_t(1)V(1,t)	
\right).
\end{align*}
Integrating by parts the first term on the r.h.s. of the previous display once in space, we obtain
\begin{equation}
\label{eq:clem11}
\frac{d}{dt}\inner{w_t,w_t}=\inner{w_t,\partial_tw_t}
-\inner{\partial_uw_t,\partial_uw_t}
-(w_t(0))^2V(0,t)
-(w_t(1))^2V(1,t).
\end{equation}
Now note that $ \inner{w_t,\partial_tw_t}=\frac12\frac{d}{dt}\inner{w_t,w_t} $. {We now need to derive a Poincar\'e-type inequality. For $ v\in[0,1] $ we have, {from Cauchy-Schwarz inequality, }
	\begin{align*}
		\int_0^v(\partial_uw_t(u))^2du
		\geq \Big(\int_0^v\partial_u w_t(u)du\Big)^2
		=(w_t(v))^2+(w_t(0))^2-2w_t(0)w_t(v),
	\end{align*}
	which implies
	\begin{align*}
		\int_0^1(w_t(v))^2dv
		&\leq \int_0^1(\partial_uw_t(u))^2du-(w_t(0))^2+2w_t(0)\int_0^1w_t(v)dv.
	\end{align*}
	From Young and  { Cauchy-Schwarz inequalities,} for any $ A>0 $
	\begin{align*}
		w_t(0)\int_0^1w_t(v)dv
		\leq \frac{1}{2A}(w_t(0))^2+\frac{A}{2}\int_0^1(w_t(v))^2dv,
	\end{align*}
	and {plugging this inequality in the previous display, we obtain}
	\begin{align*}
		(1-A)\int_0^1(w_t(v))^2dv
		\leq \int_0^1(\partial_uw_t(u))^2du+\Big(\frac{1}{A}-1\Big)(w_t(0))^2.
	\end{align*}
	Note that the inequality above allow us to trade some control over the boundary by some control over the bulk. The goal is to find an optimal $ A $ that still allow us to have exponential decay, w.r.t. the $ L^2 $ norm, to the steady state. In this way, {we obtain from \eqref{eq:clem11}, $V(1,t)$ being non-negative,}
	\begin{align*}
		0&=\frac12 \frac{d}{dt}\inner{w_t,w_t}
		+\inner{\partial_uw_t,\partial_uw_t}
		+(w_t(0))^2V(0,t)
		+(w_t(1))^2V(1,t)\\
		&\geq
		\frac12 \frac{d}{dt}\inner{w_t,w_t}
		+\inner{\partial_uw_t,\partial_uw_t}
		+(w_t(0))^2V(0,t)\\
		&\geq
		\frac12 \frac{d}{dt}\inner{w_t,w_t}
		+(1-A)\inner{w_t,w_t}
		+(w_t(0))^2 (1-\frac{1}{A}+ V(0,t)).
	\end{align*}
Recall that from Lemma \ref{lem:D1-D2} we know that $ V(0,t),V(1,t) $ are larger than some positive constant. We now need to choose $ A $ such that $ 1-\frac{1}{A}+ V(0,t)\geq0 $ for all $ t $, with $ A<1 $. Letting $ V(0,t)\geq \underline{v}_K>0 $ for some constant $ \underline{v}_K $ as in Lemma \ref{lem:D1-D2}, 
	{and choosing  $A=\frac{1}{1+\underline{v}_K}<1,$ we conclude that 
	\begin{align*}
		0\geq \frac12\frac{d}{dt}\norm{w_t}^2_{L^2}+\frac{\underline{v}_K}{1+\underline{v}_K}\norm{w_t}^2_{L^2}.
	\end{align*}
	From Gronwall's inequality  and since $\norm{w_0}^2_{L^2}\leq 1$, we conclude that $\norm{w_t}^2_{L^2}\leq \exp{-2\frac{\underline{v}_K}{1+\underline{v}_K}t}.$} Recalling that $ S\rho=\rho $ a.s. ends the proof.}

\end{proof}

\appendix
\section{Replacement Lemmas}
\label{appendix:replacement}
In this section we prove the replacements lemmas that are needed along the arguments presented above. We start by obtaining an estimate relating  the Dirichlet form and the carr\'e  du champ operator for this model.   As above, for simplicity of the presentation,  we state and prove the results for the case $K=2$, but the extension to the general case is completely analogous.   
\subsection{Dirichlet forms}

Let $\rho:[0,1]\rightarrow [0,1]$ be a measurable  profile and let $\nu^{N}_{\rho(\cdot)}$ be the Bernoulli product measure on $\Omega_{N}$ defined  by
\begin{equation}\label{eq:bern_prof}
\nu^{N}_{\rho(\cdot)}(\eta:\eta(x)=1)= \rho(\tfrac{x}{N}).
\end{equation}
For a probability measure $\mu$ on $\Omega_{N}$ and a density $f:\Omega_{N} \rightarrow \mathbb{R}$ with respect to $\mu$, the Dirichlet form  is defined as 

\begin{equation}\label{eq:dir_form}
\langle f, -\mc L_N f\rangle_\mu= \langle f, -\mc L_{N,0}f \rangle_{\mu}+{\tfrac{1}{N^\theta}}\langle f, -\mc L_{N,b}f \rangle_{\mu},
\end{equation}
and the  carr\'e du champ is defined by:
\begin{equation}\label{eq:false_dir}
D_N(\sqrt{f}, \mu):=D_{N,0}(\sqrt{f},\mu)+ {\tfrac{1}{N^\theta}}D_{N,b}(\sqrt{f},\mu),
\end{equation}
where  
\begin{equation*}
D_{N,0}(\sqrt{f},\mu)\;:=\;\sum_{x=1}^{n-2} \int_{\Omega_N}\left[ \sqrt{f(\eta^{x,x+1})}- \sqrt{f(\eta)}\right]^{2} d\mu,
\end{equation*}
\begin{equation}\label{eq:dn_b_L}
\begin{split}
D_{N,\pm}(\sqrt{f},\mu)=&\sum_{x\in I_\pm^K}\int
c_x^\pm(\eta)
\big[\sqrt{f(\eta^{x})}-\sqrt{f(\eta)}\big]^{2}\,d\mu,
\end{split}
\end{equation}
where we recall the rates $ c^\pm_x $ defined in \eqref{rates:boundary}
, and $ D_{N,b}=D_{N,-}+D_{N,+} $.
We claim that  for $\theta\geq 1$ and for  $\rho:[0,1]\rightarrow [0,1]$ a constant profile equal to, for example, $\alpha$, the following bound holds
\begin{equation}\label{eq:error_DF}
\begin{split}
\langle {\mc L}_N\sqrt{f},\sqrt{f} \rangle_{\nu^N_\alpha} &\lesssim  -D_N(\sqrt{f},\nu^N_\alpha) + O(\tfrac{1}{ N}).
\end{split}
\end{equation}
From Lemma 5.1 and Lemma 5.2 of  \cite{bmns} it is only necessary to control the contribution from the non-linear part of the boundary dynamics. To do that, it is enough to apply Lemma 5.1 of  \cite{BGJO} and the result follows. We leave these computations to the reader.
{\begin{rem}[On the bound of the Dirichlet form]
		Note that for any $ a,b $ we have the identity $ ab-b^2=-\tfrac12 (a-b)^2+\tfrac12 (a^2-b^2) $. This implies directly that 
		\begin{align*}
		\inner{\mathcal{L}_N\sqrt{f},\sqrt{f}}_{\mu}=-\frac{1}{2}D_N(\sqrt{f},\mu)+\frac{1}{2}E_{\mu}\left[(\mathcal{L}_Nf)(\eta)\right]
		\end{align*}
		for any measure $ \mu $. If  $ \mu $ is the invariant measure of the system, then the last term on the right hand side of last display  vanishes.  For this model we have no information about the stationary measure and by taking $ \mu=\nu_\alpha^N $ this term is non zero and has to be controlled. For the bulk dynamics, when computing that term with respect to the $\nu_\alpha^N$, it clearly vanishes, but the same is not true for the boundary dynamics. Since we are dealing with the case $\theta\geq 1$ we are reduced to bound $ N^{-\theta}E_{\nu_\alpha^N}[(\mathcal{L}_{N,b}f)(\eta)] $ and in this case the bounds provided by Lemma 5.1 of  \cite{BGJO} are enough. Nevertheless, in the case $\theta<1$ these bounds do not work any more and a new argument is needed. 
	\end{rem}}
\subsection{Replacement Lemmas}
\label{sec:repl_lemma}
We start this section by proving the next lemma which is the basis for the  replacement lemmas that are presented next.
\begin{lem}\label{lem:useful_0}
	Let $x<y\in\Lambda_N$ and let {$\varphi:\Omega_N\to\Omega_N$ be a}  bounded function which satisfies $\varphi(\eta)=\varphi(\eta^{z,z+1})$ for any $z=x,\cdots, y-1$.  For any density $f$ with respect to $\nu_{\alpha}$ and any positive constant $A$, it holds that 
	$$\left| \langle\varphi(\eta)( \eta(x)-\eta(y)), f \rangle_{\nu^N_\alpha} \right| \lesssim \tfrac{1}{A} D_N(\sqrt{f},\nu^N_\alpha) + A(y-x).$$
\end{lem}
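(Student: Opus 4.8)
The plan is to reduce to a single bond by telescoping and then run the classical symmetrization argument against the product reference measure. I would first write
\[
\langle\varphi(\eta)(\eta(x)-\eta(y)),f\rangle_{\nu^N_\alpha}=\sum_{z=x}^{y-1}\langle\varphi(\eta)(\eta(z)-\eta(z+1)),f\rangle_{\nu^N_\alpha},
\]
so that it suffices to estimate each summand and add the bounds. The decisive point --- and the reason the lemma is stated with the \emph{constant} profile $\alpha$ --- is that $\nu^N_\alpha$ is a Bernoulli product measure with site-independent marginal, hence invariant under the transposition $\eta\mapsto\eta^{z,z+1}$ for every $z$.

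Fix $z\in\{x,\dots,y-1\}$. Performing the change of variables $\eta\mapsto\eta^{z,z+1}$ in $\langle\varphi(\eta)(\eta(z)-\eta(z+1)),f\rangle_{\nu^N_\alpha}$, and using the invariance of $\nu^N_\alpha$, the hypothesis $\varphi(\eta^{z,z+1})=\varphi(\eta)$, and the identity $\eta^{z,z+1}(z)-\eta^{z,z+1}(z+1)=-(\eta(z)-\eta(z+1))$, the summand equals
\[
\tfrac12\int\varphi(\eta)\,(\eta(z)-\eta(z+1))\,\big(\sqrt{f(\eta)}-\sqrt{f(\eta^{z,z+1})}\big)\big(\sqrt{f(\eta)}+\sqrt{f(\eta^{z,z+1})}\big)\,d\nu^N_\alpha,
\]
where I have factored $f(\eta)-f(\eta^{z,z+1})=(\sqrt{f(\eta)}-\sqrt{f(\eta^{z,z+1})})(\sqrt{f(\eta)}+\sqrt{f(\eta^{z,z+1})})$. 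Since $\lvert\varphi\rvert\le\|\varphi\|_\infty$ and $\lvert\eta(z)-\eta(z+1)\rvert\le1$, applying Young's inequality $\lvert ab\rvert\le\frac{1}{4B}a^2+Bb^2$ with a parameter $B>0$ bounds the absolute value of this summand by
\[
\frac{\|\varphi\|_\infty}{8B}\int\big(\sqrt{f(\eta)}-\sqrt{f(\eta^{z,z+1})}\big)^2 d\nu^N_\alpha+\frac{\|\varphi\|_\infty B}{2}\int\big(\sqrt{f(\eta)}+\sqrt{f(\eta^{z,z+1})}\big)^2 d\nu^N_\alpha.
\]
The last integral is at most $2\int\big(f(\eta)+f(\eta^{z,z+1})\big)d\nu^N_\alpha=4$, again by the invariance of $\nu^N_\alpha$ and $\int f\,d\nu^N_\alpha=1$, while the first integral is exactly the $z$-th term of the bulk carr\'e du champ $D_{N,0}(\sqrt f,\nu^N_\alpha)$ (recall that $\mc L_{N,0}$ has jump rate $1$ on each bond).

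Summing over $z\in\{x,\dots,y-1\}$, the first terms telescope, $\sum_{z=x}^{y-1}\int(\sqrt{f(\eta)}-\sqrt{f(\eta^{z,z+1})})^2 d\nu^N_\alpha\le D_{N,0}(\sqrt f,\nu^N_\alpha)\le D_N(\sqrt f,\nu^N_\alpha)$, since $D_N=D_{N,0}+N^{-\theta}D_{N,b}$ with all terms nonnegative; the second terms contribute $2\|\varphi\|_\infty B\,(y-x)$. Choosing $B=A/(2\|\varphi\|_\infty)$ then yields $\lvert\langle\varphi(\eta)(\eta(x)-\eta(y)),f\rangle_{\nu^N_\alpha}\rvert\lesssim \frac1A D_N(\sqrt f,\nu^N_\alpha)+A(y-x)$, with the implicit constant depending only on $\|\varphi\|_\infty$, which equals $1$ in all the applications of this lemma. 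There is no genuine difficulty here; the only structural subtlety is that the symmetrization step crucially uses the swap-invariance of the reference measure, which fails for profiles that are not constant --- this is precisely why the downstream replacement lemmas are run against $\nu^N_\alpha$ rather than against a measure tuned to the macroscopic profile.
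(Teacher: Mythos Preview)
Your proof is correct and essentially identical to the paper's: telescope $\eta(x)-\eta(y)$, symmetrize each summand using the swap-invariance of both $\varphi$ and $\nu^N_\alpha$, factor $f(\eta)-f(\eta^{z,z+1})$ through the square roots, apply Young's inequality, and bound by the carr\'e du champ. (One minor wording slip: in your last paragraph the Dirichlet pieces do not ``telescope''---they are simply a partial sum of $D_{N,0}$ over the bonds between $x$ and $y$, hence bounded by the full $D_{N,0}\le D_N$.)
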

\begin{proof}
	By summing and subtracting appropriate terms, we have that
	\begin{equation*}
	\begin{split}
	\vert \langle \varphi(\eta)(  \eta(x)-\eta(y)),f \rangle_{\nu^N_\alpha} \vert 
	&\leq  \dfrac{1}{2} \sum_{z=x}^{y-1}\left\vert \int\varphi(\eta)(  \eta(z)-\eta(z+1))[f(\eta)-f(\eta^{z,z+1})] \; d\nu^N_\alpha\right\vert \\ 
	&+ \dfrac{1}{2}\sum_{z=x}^{y-1} \left\vert \int \varphi(\eta)(  \eta(z)-\eta(z+1))[f(\eta)+f(\eta^{z,z+1})] \; d\nu^N_\alpha\right\vert.
	\end{split}
	\end{equation*}
	Note that since $\varphi$ satisfies $\varphi(\eta)=\varphi(\eta^{z,z+1})$ for any $z=x,\cdots, y-1$, by a change of variables, we conclude that the last term in the previous display is equal to zero. Now, we treat the remaining term. 
	Using the equality $(a-b)=(\sqrt a-\sqrt b)(\sqrt a+\sqrt b)$ and then Young's inequality, the first term at the {right-hand} side of last display is bounded from above by a constant times
	\begin{equation*}
	\begin{split}
	& \sum_{z=x}^{y-1}{\frac A4}\int (\varphi(\eta)(\eta(z)-\eta(z+1)))^2\left( \sqrt{f(\eta^{z,z+1})}+ \sqrt{f(\eta)}\right)^{2}  d\nu^N_\alpha+\dfrac{1}{{4A}}D_N(\sqrt{f},\nu^N_\alpha).
	\end{split}
	\end{equation*} 
	The fact that $\varphi$ is bounded, $|\eta(x)|\leq 1$ and $f$ is a density, the {integral in }  last expression is bounded from above by a constant. This ends the proof.
\end{proof}

We are now able to show the first Replacement Lemma.
\begin{lem}\label{replacement lemma_0}\label{point}
	Fix $x,y\in\Lambda_N$ such that $|x-y|=o(N)$. Let {$\varphi:\Omega_N\to\Omega_N$ be a}  bounded  function which satisfies $\varphi(\eta)=\varphi(\eta^{z,z+1})$ for any $z=x,\cdots, y-1$. For any $t\in [0,T]$ we have that
	\begin{equation}\label{replacement}
	\begin{split}
	&\displaystyle \limsup_{N\to +\infty} \mathbb{E}_{\mu_{N}}\left[\, \left| \displaystyle\int_{0}^{t}\varphi(\eta_{sN^2})(\eta_{sN^2}(x)- \eta_{sN^{2}}(y))ds \right| \,\right] = 0.
	\end{split}
	\end{equation}
\end{lem}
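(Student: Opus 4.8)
The plan is to use the standard entropy-inequality argument from the entropy method. First I would fix a density $f$ on $\Omega_N$ and, using the entropy inequality (see Chapter~5 of \cite{KL}) together with the trivial bound $\log x\leq x-1$, reduce the expectation in \eqref{replacement} to controlling
\[
\frac{1}{N^2}\Big(\frac{H(\mu_N\,|\,\nu^N_\alpha)}{\gamma_0 N}+\sup_{f}\Big\{\Big|\,\langle\varphi(\eta)(\eta(x)-\eta(y)),f\rangle_{\nu^N_\alpha}\,\Big|-\tfrac{\gamma_0 N}{2}\langle f,-\mc L_N f\rangle_{\nu^N_\alpha}\Big\}\Big)
\]
for a suitable constant $\gamma_0>0$, the supremum running over densities $f$ with respect to $\nu^N_\alpha$. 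Since $\mu_N$ is a probability measure on a finite state space and $\nu^N_\alpha$ is a product measure bounded below uniformly (as $\alpha\in(0,1)$ can be chosen, or more precisely since all marginals are bounded away from $0$ and $1$), the relative entropy $H(\mu_N\,|\,\nu^N_\alpha)$ is $O(N)$, so the first term contributes $O(1/N^2)$ after the $N^{-2}$ prefactor — negligible.

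For the second (variational) term, I would invoke the Dirichlet-form estimate \eqref{eq:error_DF}, which gives $\langle f,-\mc L_N f\rangle_{\nu^N_\alpha}\geq D_N(\sqrt f,\nu^N_\alpha)-O(1/N)$ up to a multiplicative constant. Then Lemma \ref{lem:useful_0} applied with the given $\varphi$ (which indeed satisfies $\varphi(\eta)=\varphi(\eta^{z,z+1})$ for $z=x,\dots,y-1$) yields
\[
\big|\langle\varphi(\eta)(\eta(x)-\eta(y)),f\rangle_{\nu^N_\alpha}\big|\lesssim \tfrac1A D_N(\sqrt f,\nu^N_\alpha)+A\,(y-x).
\]
Choosing $A$ proportional to $1/(\gamma_0 N)$ makes the Dirichlet-form contributions cancel against the $-\tfrac{\gamma_0 N}{2}\langle f,-\mc L_N f\rangle$ term (absorbing the $O(1/N)$ error from \eqref{eq:error_DF}, which after the overall $N^{-2}$ scaling is harmless), leaving a leftover bounded by a constant times $A(y-x)\cdot N^2\cdot N^{-2}$... more carefully: the variational term is then $\lesssim N\,(y-x)$, and after the prefactor $N^{-2}$ we are left with $\lesssim (y-x)/N$. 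Since $|x-y|=o(N)$ by hypothesis, this vanishes as $N\to\infty$, which proves \eqref{replacement}.

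The main obstacle — or rather the point requiring the most care — is the bookkeeping of constants in the optimization over $A$ and $\gamma_0$: one must check that the choice of $A$ that kills the Dirichlet form simultaneously keeps the residual term $A(y-x)$ of order $(y-x)/N=o(1)$, and that the error terms $O(1/N)$ coming from \eqref{eq:error_DF} do not spoil the cancellation once multiplied back by the $\gamma_0 N$ factor (they produce an $O(1)$ term which, crucially, still carries the overall $N^{-2}$ from the time-rescaled generator $N^2\mc L_N$, hence vanishes). Everything else is routine: the hypothesis $\varphi(\eta)=\varphi(\eta^{z,z+1})$ is exactly what is needed for the ``diagonal'' term in Lemma \ref{lem:useful_0} to vanish by the change of variables $\eta\mapsto\eta^{z,z+1}$, and the boundedness of $\varphi$ together with $|\eta(x)|\le 1$ handles the remaining integrals. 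Note that this lemma only concerns a bulk-type exchange (sites $x,y$ at mutual distance $o(N)$), so the slow boundary rates play no role here beyond being bounded, and $\theta\geq 1$ enters only through \eqref{eq:error_DF}.
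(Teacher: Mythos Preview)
Your strategy is exactly the paper's: entropy inequality to pass from $\mu_N$ to the product measure $\nu^N_\alpha$, then Feynman--Kac, then Lemma~\ref{lem:useful_0} combined with the Dirichlet-form comparison~\eqref{eq:error_DF}. The ingredients and their roles are correctly identified.

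However, the displayed variational expression is not the correct output of these steps, and your subsequent bookkeeping inherits the confusion. The entropy inequality with parameter $BN$ followed by Feynman--Kac (the accelerated generator being $N^2\mc L_N$) yields, with no $N^{-2}$ prefactor,
\[
\frac{C_\alpha}{B}\;+\;t\,\sup_f\Big\{\,\big|\langle\varphi(\eta)(\eta(x)-\eta(y)),f\rangle_{\nu^N_\alpha}\big|\;+\;\tfrac{N}{B}\,\langle\mc L_N\sqrt{f},\sqrt{f}\rangle_{\nu^N_\alpha}\,\Big\},
\]
the Dirichlet-type term carrying $\sqrt{f}$ (from the Rayleigh quotient), not $f$. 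Applying Lemma~\ref{lem:useful_0} with $A=B/N$ and then~\eqref{eq:error_DF} bounds the supremum by $\tfrac{B}{N}|y-x|+O(1/B)$; one sends $N\to\infty$ first (using $|y-x|=o(N)$), then $B\to\infty$, to kill the entropy term $C_\alpha/B$. Your claimed intermediate ``variational term $\lesssim N(y-x)$'' is inconsistent with your own choice $A\sim(\gamma_0 N)^{-1}$ (which would give $(y-x)/N$, not $N(y-x)$), and the bound $\log x\le x-1$ plays no role. Once the formula is corrected, the rest of your plan goes through verbatim.
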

\begin{proof}
{The starting point in the proof is to change from the measure $\mu_N$ to a suitable measure, which for our purposes is the Bernoulli product measure   $\nu_{\rho(\cdot)}^N$
with a constant profile $\rho(\cdot)=\alpha\in(0,1)$. By the explicit formula for the entropy, it holds 
\begin{equation*}
\begin{split}
    H(\mu_N|\nu_{\alpha}^N)&=\sum_{\eta\in\Omega_N} \mu_N(\eta)\log\bigg(\frac{\mu_N(\eta)}{\nu_{\rho(\cdot)}^N(\eta)}\bigg)     
    \leq N \log\bigg(\frac{1}{C_{\alpha}}\bigg)\sum_{\eta\in\Omega_N} \mu_N(\eta)=C_{\alpha} N.
\end{split}
\end{equation*}
Therefore, by the entropy inequality and Jensen's inequality, for any $B>0$ the expectation in the statement of the lemma  is bounded by 
\begin{equation}\label{after jensen}
\frac {C_\alpha} {B} + \frac {1} {NB} \log \mathbb E_{\nu_{\alpha}^N}\bigg[ e^{|\int_0^t BN\varphi(\eta_{sN^2})(\eta_{sN^2}(x)- \eta_{sN^{2}}(y))\,ds| } \bigg].
\end{equation}
Since $e^{|x|}\leq e^x+e^{-x}$ and 
\begin{equation*}
    \underset{N\to\infty}\limsup\, \tfrac{1}{N}\log (a_N+b_N)\leq \max\big\{\underset{N\to\infty}\limsup\, \tfrac{1}{N}\log (a_N),    \underset{N\to\infty}\limsup\, \tfrac{1}{N}\log (b_N)  \big\},
\end{equation*}
we can remove the absolute value from \eqref{after jensen}. By Feynman-Kac's formula (see Lemma 7.3 in \cite{bmns}), \eqref{after jensen} is bounded by 
\begin{equation*}
  \frac {C_{\alpha}} {B} +   t\sup_{f} \Big \{ |\langle \varphi(\eta)(\eta(x)-\eta(y)), f \rangle_{\nu^N_\alpha} |+ \tfrac{N}{B} \langle \mc L_{N}\sqrt{f},\sqrt{f} \rangle_{\nu^N_\alpha} \Big \},
	\end{equation*}
	where the supremum above is  over  densities $f$ with respect to $\nu_\alpha^N$. By Lemma \ref{lem:useful_0} with the choice $A=\tfrac BN$ we have that 
	$$\left| \langle\varphi(\eta)( \eta(x)-\eta(y)), f \rangle_{\nu^N_\alpha} \right| \lesssim \tfrac{N}{B} D_{N}(\sqrt{f},\nu^N_\alpha) + \tfrac{B}{N}{|y-x|}{.} $$
	From \eqref{eq:error_DF} and the inequality above, the term on the right-hand side of \eqref{after jensen}, is bounded from above by
$
	\tfrac{B}{N}{|y-x|} +{\tfrac {1}{B}}.
$
	Taking $N \to \infty$ and then $B\to+\infty$ we are done.}
\end{proof}
\begin{lem}[{Replacement Lemma}]\label{replacement lemma_1}
	Let {$\psi:\Omega_N\to\Omega_N$ be a}  bounded  function which satisfies $\psi(\eta)=\psi(\eta^{z,z+1})$ for any $z=x+1,\cdots, x+\varepsilon N-1$.  For any $t\in [0,T]$ and $x\in\Lambda_N$ such that $x\in\{1,\cdots, N-\varepsilon N-2\}$  we have that
	\begin{equation}\label{replacement_1}
	\begin{split}
	&{\limsup_{\varepsilon\to 0}}\displaystyle\limsup_{N\to +\infty} \mathbb{E}_{\mu_{N}}\left[\, \left| \displaystyle\int_{0}^{t}\psi(\eta_{sN^2})(\eta_{sN^2}(x)- \overrightarrow{\eta}^{\varepsilon N}_{sN^2}(x))ds \right| \,\right] = 0.
	\end{split}
	\end{equation}
	Note that for $x\in\Lambda_N$ such that  $x\in\{N-\varepsilon N-1, N-1\}$ the previous result is also true, but we replace in the previous expectation $\overrightarrow{\eta}^{\varepsilon N}_{sN^2}(x)$ by $\overleftarrow{\eta}^{\varepsilon N}_{sN^2}(x)$, {where both averages were defined in \eqref{eq:emp_aver}}.
\end{lem}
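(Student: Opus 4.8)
The plan is to rerun, with the block average in place of the single-site difference $\eta(x)-\eta(y)$, the entropy--Feynman--Kac--Dirichlet-form scheme already used for Lemma \ref{replacement lemma_0}, the only new issue being the bond that touches the site $x$. It is enough to treat $x\in\{1,\dots,N-\varepsilon N-2\}$ with $\overrightarrow{\eta}^{\varepsilon N}$, since the case of $\overleftarrow{\eta}^{\varepsilon N}$ near the right endpoint is entirely symmetric. Write $\tau(\eta)=\psi(\eta)\big(\eta(x)-\overrightarrow{\eta}^{\varepsilon N}(x)\big)$. First I would reduce to a static estimate exactly as in the proof of Lemma \ref{replacement lemma_0}: using $H(\mu_N\mid\nu^N_\alpha)\leq C_\alpha N$, the entropy and Jensen inequalities, the bound $e^{|y|}\leq e^{y}+e^{-y}$ together with $\limsup_N\tfrac1N\log(a_N+b_N)\leq\max\{\limsup_N\tfrac1N\log a_N,\limsup_N\tfrac1N\log b_N\}$ to drop the absolute value, and finally the Feynman--Kac formula (Lemma 7.3 of \cite{bmns}), one bounds the expectation in the statement, for every $B>0$, by
\[
\frac{C_\alpha}{B}+t\,\sup_f\Big\{\langle\tau,f\rangle_{\nu^N_\alpha}+\tfrac{N}{B}\langle\mathcal L_N\sqrt f,\sqrt f\rangle_{\nu^N_\alpha}\Big\},
\]
the supremum running over densities $f$ with respect to $\nu^N_\alpha$.

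By the Dirichlet form estimate \eqref{eq:error_DF} there is $c>0$ with $\langle\mathcal L_N\sqrt f,\sqrt f\rangle_{\nu^N_\alpha}\leq -c\,D_N(\sqrt f,\nu^N_\alpha)+C/N$; since $D_N\geq D_{N,0}=\sum_{z=1}^{N-2}D_{N,z}$ with $D_{N,z}(\sqrt f,\nu^N_\alpha)=\int[\sqrt{f(\eta^{z,z+1})}-\sqrt{f(\eta)}]^2\,d\nu^N_\alpha$, it remains to bound $\sup_f\big\{\langle\tau,f\rangle_{\nu^N_\alpha}-\tfrac{cN}{B}\sum_{z}D_{N,z}(\sqrt f,\nu^N_\alpha)\big\}$ up to an additive $C/B$. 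For this I would telescope: $\eta(x)-\overrightarrow{\eta}^{\varepsilon N}(x)=\tfrac1{\varepsilon N}\sum_{z=x}^{x+\varepsilon N-1}(x+\varepsilon N-z)\,(\eta(z)-\eta(z+1))$, so $\langle\tau,f\rangle_{\nu^N_\alpha}=\tfrac1{\varepsilon N}\sum_{z=x}^{x+\varepsilon N-1}(x+\varepsilon N-z)\,\langle\psi(\eta)(\eta(z)-\eta(z+1)),f\rangle_{\nu^N_\alpha}$. For $z\in\{x+1,\dots,x+\varepsilon N-1\}$ the hypothesis gives $\psi(\eta)=\psi(\eta^{z,z+1})$, and the computation from the proof of Lemma \ref{lem:useful_0} (split $f=\tfrac12(f+f^{z,z+1})+\tfrac12(f-f^{z,z+1})$, the symmetric part vanishing after the change of variables $\eta\mapsto\eta^{z,z+1}$; then $a-b=(\sqrt a-\sqrt b)(\sqrt a+\sqrt b)$ and Young) gives $|\langle\psi(\eta)(\eta(z)-\eta(z+1)),f\rangle_{\nu^N_\alpha}|\lesssim A_z\|\psi\|_\infty^2+\tfrac1{A_z}D_{N,z}(\sqrt f,\nu^N_\alpha)$ for any $A_z>0$.

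Choosing $A_z$ of order $\tfrac{(x+\varepsilon N-z)B}{cN^2\varepsilon}$ lets the $D_{N,z}$-part of each such term be absorbed into $\tfrac{cN}{B}D_{N,z}(\sqrt f)$, leaving the cost $\tfrac1{\varepsilon N}\sum_{z}(x+\varepsilon N-z)A_z\lesssim\tfrac{B}{\varepsilon^2 N^3}\sum_{k=1}^{\varepsilon N}k^2\lesssim B\varepsilon$. The only term left is the bond $z=x$, with weight $1$. Here $\psi$ need not commute with $\eta^{x,x+1}$, so the symmetric part does not vanish: it equals $\tfrac12\langle(\psi(\eta)-\psi(\eta^{x,x+1}))(\eta(x)-\eta(x+1)),f\rangle_{\nu^N_\alpha}$, which is at most the supremum of $|\psi-\psi\circ(\cdot)^{x,x+1}|$ over $\{\eta(x)\neq\eta(x+1)\}$. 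Since $\psi$ depends on $\{x+1,\dots,x+\varepsilon N\}$ only through the number of particles there, this one-swap discrepancy is $0$ whenever $\psi$ does not depend on $\eta(x)$ and $\eta(x+1)$, and of order $(\varepsilon N)^{-1}$ when $\psi$ is itself the average over that block --- which exhausts all the instances ($\psi\equiv1$, $\psi=\eta(1)$, $\psi=\overrightarrow{\eta}^{\varepsilon N}(1)$) in which the lemma is used; thus this piece is $o_N(1)$, and the antisymmetric part of the $z=x$ term is absorbed as above.

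Collecting the estimates, $\sup_f\{\cdots\}\lesssim B\varepsilon+C/B$ uniformly in $N$, hence the expectation in the statement is $\leq C_\alpha/B+t\,(C' B\varepsilon+C''/B)$; sending $N\to\infty$, then $\varepsilon\to0$, then $B\to\infty$ gives the claim. I expect the bond $(x,x+1)$ to be the only genuinely delicate point: unlike every other bond involved, $\psi$ need not be invariant under $\eta^{x,x+1}$, and one must use that the invariance of $\psi$ inside the averaging block makes the one-swap discrepancy $\psi-\psi\circ(\cdot)^{x,x+1}$ either vanish or be $O((\varepsilon N)^{-1})$ in the cases where the lemma is invoked; the rest is a routine rerun of the argument of Lemma \ref{replacement lemma_0}.
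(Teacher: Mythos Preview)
Your proof follows the same entropy/Feynman--Kac/Dirichlet-form scheme as the paper. The paper is more direct: it writes $\langle \psi(\eta)(\eta(x)-\overrightarrow{\eta}^{\varepsilon N}(x)), f\rangle_{\nu^N_\alpha}$ as $\tfrac{1}{\varepsilon N}\sum_{y=x+1}^{x+\varepsilon N}\langle \psi(\eta)(\eta(x)-\eta(y)), f\rangle_{\nu^N_\alpha}$ and simply cites Lemma~\ref{lem:useful_0} with the uniform choice $A=B/N$, obtaining the bound $B\varepsilon + O(B^{-1})$; your bond-by-bond weights $A_z$ are an equivalent repackaging and gain nothing extra.

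The one point on which you are actually more careful than the paper is the bond $(x,x+1)$: the paper's appeal to Lemma~\ref{lem:useful_0} tacitly uses $\psi(\eta)=\psi(\eta^{x,x+1})$, which is \emph{not} part of the stated hypothesis and is in fact violated in the application $\psi=\overrightarrow{\eta}^{\varepsilon N}(1)$. Your observation that in every use of the lemma the symmetric-part discrepancy $\sup_\eta|\psi(\eta)-\psi(\eta^{x,x+1})|$ is either $0$ or $O((\varepsilon N)^{-1})$ is precisely the patch needed; the paper does not comment on it. So your argument is correct and, on this detail, slightly more complete than the paper's.
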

\begin{proof}
	We present the proof for the case when $x\in\{1,\cdots, N-\varepsilon N-2\}$ but we note that the other case is completely analogous. By applying the same arguments as in the proof of the previous lemma {and by changing to the Bernoulli product measure $\nu_\alpha^N$ with $\alpha\in(0,1)$}, we can bound from above the previous expectation by 
	\begin{equation}\label{eq:FK_comp2}
	\begin{split}
	\tfrac{{C_{\alpha}}}{B} + t\sup_{f} \Big \{{ |}\langle \psi(\eta)(\eta(x)-\overrightarrow{\eta}^{\varepsilon N}(x)), f \rangle_{\nu^N_\alpha} {|}+ \tfrac{N}{B} \langle \mc L_{N}\sqrt{f},\sqrt{f} \rangle_{\nu^N_\alpha} \Big \}.
	\end{split}
	\end{equation}
	where $B$ is a positive constant. The supremum above is  over  densities $f$ with respect to {$\nu_\alpha^N$}. 
	The first term in the supremum above can be {bounded by}  
	\begin{equation*}
	\frac{1}{\varepsilon N}\sum_{y=x+1}^{x+\varepsilon N}{|}\langle \psi(\eta)(\eta(x)-\eta(y)), f \rangle_{\nu^N_\alpha} {|.}
	\end{equation*}
	By Lemma \ref{lem:useful_0} with the choice $A=\tfrac BN$  and from \eqref{eq:error_DF}, the term on the right-hand side of \eqref{eq:FK_comp2}, is bounded from above by
$
	{B\varepsilon} +\tfrac {1}{N}.
	$
	Taking $N \to \infty$, ${\varepsilon\to 0} $ and then $B\to+\infty$ we are done.
\end{proof}
Now we state the  replacement lemma that we need when the process is speeded up in the subdiffusive time scale.
\begin{cor}\label{replacement lemma_mass}
	 Recall from \eqref{eq:defmt} the definition of the mass of the system $m^N_t$ at the subdiffusive time scale $tN^{1+\theta}$. For any $\theta>1$ and  $t\in [0,T]$ and $x\neq z\in\Lambda_N$ we have that
	\begin{equation}\label{replacement_mass}
	\begin{split}
	&\displaystyle\limsup_{N\to +\infty} \mathbb{E}_{\mu_{N}}\left[\, \left| \displaystyle\int_{0}^{t}\eta_{sN^{1+\theta}}(z)(\eta_{sN^{1+\theta}}(x)-m_s^N )ds \right| \,\right] = 0.
	\end{split}
	\end{equation}
\end{cor}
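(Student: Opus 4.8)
The plan is to run the entropy plus Feynman--Kac scheme used for Lemma \ref{replacement lemma_0} and Lemma \ref{replacement lemma_1}, adapted to the subdiffusive timescale $N^{1+\theta}$ and with the \emph{global} average $m^N$ replacing a local block average. Set $V(\eta)=\eta(z)\big(\eta(x)-m^N(\eta)\big)$, with $m^N(\eta)=\tfrac1{N-1}\sum_{w\in\Lambda_N}\eta(w)$, and change reference measure from $\mu_N$ to the homogeneous Bernoulli product measure $\nu_\alpha^N$ with constant density $\alpha\in(0,1)$. As in the proof of Lemma \ref{replacement lemma_0}, using $H(\mu_N\,|\,\nu_\alpha^N)\leq C_\alpha N$, the entropy inequality, Jensen's inequality, the bound $e^{|a|}\leq e^a+e^{-a}$, and the Feynman--Kac formula (Lemma 7.3 of \cite{bmns}), I would bound the expectation in \eqref{replacement_mass} by
\[
\frac{C_\alpha}{B}+t\,\sup_{f}\Big\{\langle V,f\rangle_{\nu_\alpha^N}+\frac{N^{\theta}}{B}\,\langle \mathcal L_N\sqrt f,\sqrt f\rangle_{\nu_\alpha^N}\Big\},
\]
the supremum running over densities $f$ with respect to $\nu_\alpha^N$ and $B>0$ arbitrary. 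The one change relative to Lemma \ref{replacement lemma_1} is the Dirichlet prefactor: since the accelerated process $\eta^N_\cdot=\eta_{\cdot N^{1+\theta}}$ has generator $N^{1+\theta}\mathcal L_N$, the Feynman--Kac step produces $N^{1+\theta}/(BN)=N^{\theta}/B$ in place of $N/B$.

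For the Dirichlet term I would use a sharpened form of \eqref{eq:error_DF}, namely $\langle \mathcal L_N\sqrt f,\sqrt f\rangle_{\nu_\alpha^N}\lesssim -D_N(\sqrt f,\nu_\alpha^N)+O(N^{-\theta})$. This refinement is essential: the naive error $O(1/N)$ is not enough, because the subdiffusive rescaling multiplies it by $N^{\theta}$. It is available from the computation in the remark following \eqref{eq:error_DF}: the error equals $\tfrac12 N^{-\theta}\,\mathbb E_{\nu_\alpha^N}[(\mathcal L_{N,b}f)(\eta)]$, and by Lemma 5.1 of \cite{BGJO} one has $|\mathbb E_{\nu_\alpha^N}[(\mathcal L_{N,b}f)(\eta)]|\leq \epsilon' D_{N,b}(\sqrt f,\nu_\alpha^N)+C_{\epsilon',K}$ for every $\epsilon'>0$; absorbing $\epsilon'N^{-\theta}D_{N,b}\leq\epsilon'D_N$ into the main term gives $\tfrac{N^{\theta}}{B}\langle \mathcal L_N\sqrt f,\sqrt f\rangle_{\nu_\alpha^N}\leq -c\tfrac{N^{\theta}}{B}D_N(\sqrt f,\nu_\alpha^N)+\tfrac{C}{B}$, where the last summand is harmless after $B\to\infty$.

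It then remains to estimate $\langle V,f\rangle_{\nu_\alpha^N}$. Writing $\eta(x)-m^N(\eta)=\tfrac1{N-1}\sum_{w}(\eta(x)-\eta(w))$ gives
\[
\langle V,f\rangle_{\nu_\alpha^N}=\frac1{N-1}\sum_{w\in\Lambda_N}\big\langle \eta(z)(\eta(x)-\eta(w)),f\big\rangle_{\nu_\alpha^N};
\]
the term $w=x$ vanishes, the term $w=z$ is bounded by $\tfrac1{N-1}$ since the integrand is bounded by $1$ and $f$ is a density, and for each remaining $w$ I would apply Lemma \ref{lem:useful_0} (extended as described below) with $\varphi=\eta(z)$, obtaining $|\langle \eta(z)(\eta(x)-\eta(w)),f\rangle_{\nu_\alpha^N}|\lesssim \tfrac1A D_N(\sqrt f,\nu_\alpha^N)+A|x-w|$. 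Summing over $w$ and dividing by $N-1$ yields $|\langle V,f\rangle_{\nu_\alpha^N}|\lesssim \tfrac1A D_N(\sqrt f,\nu_\alpha^N)+AN+\tfrac1N$, using that the mean of $|x-w|$ over $w\in\Lambda_N$ is at most $N$. Choosing $A=c\,B/N^{\theta}$, so that $\tfrac1A D_N$ is dominated by $c\tfrac{N^{\theta}}{B}D_N$, one has $AN=O(B/N^{\theta-1})$, so the whole quantity in \eqref{replacement_mass} is at most $\tfrac{C_\alpha}{B}+t\,O(B/N^{\theta-1})+t\,O(1/N)+t\,O(1/B)$. Since $\theta>1$, letting $N\to\infty$ and then $B\to\infty$ closes the argument.

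I expect the real obstacle — the only point genuinely new relative to Lemma \ref{replacement lemma_1} — to be the use of Lemma \ref{lem:useful_0} when $z$ lies between $x$ and $w$, since then $\varphi=\eta(z)$ is not invariant under the exchanges along the path from $x$ to $w$. For such $w$, say with $x<z<w$ (so necessarily $2\leq z\leq N-2$, and $z=1$ or $z=N-1$ never falls in this case), I would split
\[
\eta(z)(\eta(x)-\eta(w))=\eta(z)(\eta(x)-\eta(z-1))+\eta(z)(\eta(z-1)-\eta(z+1))+\eta(z)(\eta(z+1)-\eta(w)),
\]
where the first and third terms are covered by Lemma \ref{lem:useful_0} with $\varphi=\eta(z)$ (the bonds involved all avoid $z$), and for the middle term one uses that $\eta(z)(\eta(z-1)-\eta(z+1))$ is odd under the site-transposition $(z-1\leftrightarrow z+1)$, which is a product of three bulk transpositions each preserving $\nu_\alpha^N$; a change of variables together with Cauchy--Schwarz then bounds $|\langle\eta(z)(\eta(z-1)-\eta(z+1)),f\rangle_{\nu_\alpha^N}|$ by $\tfrac1A D_N(\sqrt f,\nu_\alpha^N)+A$. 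Each of these bounds has the form $\tfrac1A D_N+A|x-w|$ up to constants (recall $|x-w|\geq1$), so the counting over $w$ above is unaffected, and the symmetric case $x>z$ is identical. Finally, for general $K\geq2$ the same estimate is applied a bounded number of times to reduce the higher-order boundary polynomials in \eqref{eq:dynkin_mass} to closed expressions in $m^N_s$, exactly as noted after \eqref{diffeq}.
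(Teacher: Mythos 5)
Your proof is correct and follows essentially the same route as the paper: entropy inequality and Feynman--Kac against $\nu_\alpha^N$, the decomposition $\eta(x)-m^N=\tfrac{1}{N-1}\sum_w(\eta(x)-\eta(w))$, Lemma \ref{lem:useful_0} applied with $\varphi=\eta(z)$, and the choice $A=BN^{-\theta}$ so that the surviving error $AN=BN^{1-\theta}$ vanishes precisely because $\theta>1$. You are in fact more careful than the paper on the two points it leaves implicit: the paper invokes \eqref{eq:error_DF} with its stated $O(1/N)$ error, which would be destroyed by the prefactor $N^{\theta}/B$, and your sharpening to an $O(N^{-\theta})$ error (plus an absorbable Dirichlet piece, via Lemma 5.1 of \cite{BGJO}) is exactly what is needed; likewise, the paper splits the sum into $y<z$ and $y>z$ and declares the second half ``analogous'' even though the exchange path then crosses $z$ and $\varphi=\eta(z)$ is no longer invariant, whereas your three-term decomposition through $\eta(z)(\eta(z-1)-\eta(z+1))$, handled by the transposition $(z-1\leftrightarrow z+1)$ written as three bulk swaps, closes that gap cleanly.
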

The proof follows exactly the same strategy as above, the only difference being that when we use Lemma \ref{lem:useful_0} the function $\varphi(\eta)=\eta(z)$ is not invariant under the exchanges in the bulk. Nevertheless, by  observing that the integrand function above can be written as 
{\begin{equation}\label{replacement_mass}
	\begin{split}
\eta(z)(\eta(x)-\langle\pi^N,1\rangle )&=\frac{\eta(z)}{N-1}\sum_{y\neq x}(\eta(x)-\eta(y))\\&=\frac{\eta(z)}{N-1}(\eta(x)-\eta(z))+\frac{\eta(z)}{N-1}\sum_{y\neq x,z}{(\eta(x)-\eta(y))}
	\end{split}
	\end{equation}}
	and {  thanks to the exclusion rule the first term in the last line of  last display vanishes as $N\to+\infty$,  and to finish the proof it is enough to estimate the second term in the last line of last display. To finish the proof,  we distinguish two cases: either $x<z$ or $x>z$. We do the proof for the  case $x<z$, but the other one is completely analogous. Observe that 
\begin{equation*}
	\begin{split}
\frac{\eta(z)}{N-1}\sum_{y\neq x,z}{(\eta(x)-\eta(y))}=\frac{\eta(z)}{N-1}\sum_{y=1}^{z-1}{(\eta(x)-\eta(y))}+\frac{\eta(z)}{N-1}\sum_{y=z+1}^{N-1}{(\eta(x)-\eta(y))}.
	\end{split}
	\end{equation*}
	Now we explain how to estimate each one of the terms in the previous display. We present the proof for the first term but the second one is analogous. 
	Therefore, we have to estimate 
		\begin{equation*}
	\begin{split}
	& \mathbb{E}_{\mu_{N}}\left[\, \left| \displaystyle\int_{0}^{t}\frac{\eta_{sN^{1+\theta}}(z)}{N-1}\sum_{y=1}^{z-1}(\eta_{sN^{1+\theta}}(x)-\eta_{sN^{1+\theta}}(y))ds \right| \,\right].
	\end{split}
	\end{equation*}
	Now, we mimic the proof of the previous lemma. By following the first part of the proof, last expectation is bounded from above by 
	\begin{equation*}
	\begin{split}
	\tfrac{{C_{\alpha}}}{B} + t\sup_{f} \Big \{{ |}\langle \frac{\eta(z)}{N-1}\sum_{y=1}^{z-1}{(\eta(x)-\eta(y))}, f \rangle_{\nu^N_\alpha} {|}+ \tfrac{N^\theta}{B} \langle \mc L_{N}\sqrt{f},\sqrt{f} \rangle_{\nu^N_\alpha} \Big \},
	\end{split}
	\end{equation*}
	where $B$ is a positive constant. The supremum above is  over  densities $f$ with respect to {$\nu_\alpha^N$}. Now, repeat the proof of Lemma  \ref{lem:useful_0} and the previous lemma and to conclude,  make the choice  $A=BN^{-\theta}$. We leave the details to the reader.}

\section{Energy Estimate}
\label{appendix:energy}

Now we  prove that the density $\rho(t,u)$ belongs to the space $L^{2}(0,T;\mathcal{H}^{1})$, see Definition \ref{Def. Sobolev space}.  {Define} the  linear functional $\ell_{\rho}$ defined in $C^{0,1}_{c}([0,T]\times (0,1))$ by 
$$ \ell_{\rho}(G) = \int^{T}_{0}\int^{1}_{0}\partial_uG_{s}(u)\rho(s,u) \, du ds = \int^{T}_{0}\int^{1}_{0}\partial_uG_{s}(u)\,  \pi(s,du) ds.$$

\begin{prop} 
	\label{prop:energy_estimate}There exist positive constants $C$ and $c$ such that 
	$$\mathbb{E}\Big[ \sup _{G\in C^{0,1}_{c}([0,T]\times (0,1))}\Big\{ \ell_{\rho}(G) - c \Vert G \Vert_{2}^{2}\Big\}\Big]\leq C < \infty.$$
	Above $\Vert G \Vert_{2}$ denotes the norm of a function $G \in L^{2}([0,T]\times (0,1)).$
\end{prop}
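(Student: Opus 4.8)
The plan is to prove Proposition~\ref{prop:energy_estimate} by the standard energy estimate technique (see e.g.~\cite{KL}, Chapter~5), combining Dynkin's martingale decomposition with an entropy and exponential-moment argument, and then an application of the Riesz representation theorem. First I would fix a dense countable family $\{G_k\}_{k\geq 1}$ in $C^{0,1}_c([0,T]\times(0,1))$ (with respect to the relevant topology) and reduce, via monotone convergence, the supremum over the whole space to a supremum over finitely many $G_1,\dots,G_m$ inside the expectation. For each fixed $G=G_k$, I would write $\ell_\rho(G)$ as a limit of its discrete counterpart
\begin{equation*}
\ell_\rho^N(G)=\int_0^T \frac{1}{N}\sum_{x=1}^{N-1}\nabla_N^+G_s(\tfrac xN)\,\eta_{sN^2}(x)\,ds,
\end{equation*}
which by summation by parts and the convergence of the empirical measure $\pi^N_s\to\rho_s\,du$ (Theorem~\ref{th:hyd_ssep}, or rather the tightness plus characterization already established) converges in distribution to $\ell_\rho(G)$. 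The key point is then to produce a uniform-in-$N$ bound on $\mathbb E_{\mu_N}[\exp\{\ell_\rho^N(G)-c\|G\|_2^2\}]$ type quantities.

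The main step is the following: by Dynkin's formula applied to the additive functional $\int_0^t \frac1N\sum_x \nabla_N^+G_s(\tfrac xN)\eta_{sN^2}(x)\,ds$ (or more directly by Feynman–Kac, as in the proof of Lemma~\ref{replacement lemma_1}), together with the entropy inequality $H(\mu_N\mid\nu_\alpha^N)\lesssim N$, I would bound
\begin{equation*}
\frac1N\log \mathbb E_{\mu_N}\Big[\exp\Big\{N\int_0^T \tfrac1N\textstyle\sum_x\nabla_N^+G_s(\tfrac xN)\eta_{sN^2}(x)\,ds\Big\}\Big]
\end{equation*}
by a constant plus $\int_0^T \sup_f\{\langle V_{G_s},f\rangle_{\nu_\alpha^N}+N\langle\mathcal L_N\sqrt f,\sqrt f\rangle_{\nu_\alpha^N}\}\,ds$, where $V_{G_s}(\eta)=\frac1N\sum_x \nabla_N^+G_s(\tfrac xN)\eta(x)$, and the supremum runs over densities $f$ with respect to $\nu_\alpha^N$. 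Using the bound \eqref{eq:error_DF} between the Dirichlet form and the carr\'e du champ, it remains to estimate $\langle V_{G_s},f\rangle_{\nu_\alpha^N}-\tfrac12 D_{N,0}(\sqrt f,\nu_\alpha^N)+O(N^{-1})$. Here I expand $D_{N,0}$, use $(a-b)=(\sqrt a-\sqrt b)(\sqrt a+\sqrt b)$ and Young's inequality exactly as in Lemma~\ref{lem:useful_0}: for each bond $\{x,x+1\}$ the linear term picks up $\nabla_N^+ G_s(\tfrac xN)(\eta(x)-\eta(x+1))$ after a summation by parts at the level of $f$, which against $\tfrac14 D_{N,0}$ is controlled by a constant times $\frac1{N^2}\sum_x (\nabla_N^+G_s(\tfrac xN))^2 = \frac1{N^2}\sum_x N^2(G_s(\tfrac{x+1}N)-G_s(\tfrac xN))^2 \lesssim \int_0^1(\partial_u G_s(u))^2\,du+o(1)$. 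Choosing the Young constant appropriately and integrating in $s$ yields a bound of the form $c\int_0^T\int_0^1 (\partial_u G_s)^2\,du\,ds=c\|\partial_u G\|_2^2$ — but since $G\in C^{0,1}_c$ this is not yet $\|G\|_2^2$; one then uses the compact support and Poincar\'e's inequality on $(0,1)$ to dominate $\|\partial_u G\|_2^2$ is not bounded by $\|G\|_2^2$, so in fact the correct statement (and what the proposition should be read as) controls $\ell_\rho(G)$ against $\|G\|_2^2$ via the exponential Chebyshev trick: I would instead apply the above to $G/\lambda$ for a small scaling parameter and optimize, obtaining $\ell_\rho(G)\leq c_1\|G\|_2^2$ on the exponential scale after noting $\nabla_N^+(G/\lambda)$ scales linearly and the quadratic carr\'e du champ correction scales quadratically, producing the desired gap.

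Concretely, the final assembly is: set $\varphi_m(\eta_\cdot)=\max_{k\leq m}\{\ell_\rho^N(G_k)-c\|G_k\|_2^2\}$; since $e^{\varphi_m}\leq\sum_{k\leq m}e^{\ell_\rho^N(G_k)-c\|G_k\|_2^2}$, the entropy/Feynman–Kac bound above gives $\mathbb E_{\mu_N}[e^{N\varphi_m/?}]$ under control; more simply, using the elementary bound $\mathbb E[\sup_{k\le m}X_k]\le \log\sum_k \mathbb E[e^{X_k}]$ with $X_k=\ell_\rho^N(G_k)-c\|G_k\|_2^2$, each $\mathbb E_{\mu_N}[e^{X_k}]\le C'$ uniformly in $N$ and $k$ by the computation of the previous paragraph (with $c$ chosen large enough), so $\mathbb E_{\mu_N}[\varphi_m]\le \log(mC')$. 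The dependence on $m$ is removed exactly as in \cite{KL}: one keeps the bound before taking the max outside, i.e.\ works with $\mathbb E_{\mu_N}[\exp\{\sup_{k\le m}X_k\}]$ directly using that the number of $G_k$ appearing is effectively controlled because the bound $\mathbb E[e^{X_k}]\le C'$ holds uniformly, hence $\mathbb E[e^{\sup_{k\le m}X_k}]$... does grow in $m$; the standard fix is that $\sup_{k\le m}X_k\le \frac1N\log\sum_{k\le m}e^{NX_k}$ and $\limsup_N\frac1N\log(\sum_{k\le m}\cdot)=\max_k\limsup_N\frac1N\log(\cdot)$, which is $\le C$ independent of $m$; then $\mathbb E[\sup_{k\le m}X_k]\le C$ by Fatou. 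Finally pass to the limit $N\to\infty$ using Theorem~\ref{th:hyd_ssep}, then $m\to\infty$ by monotone convergence, to get $\mathbb E[\sup_{G}\{\ell_\rho(G)-c\|G\|_2^2\}]\le C<\infty$, and conclude by the Riesz representation theorem that $\rho\in L^2(0,T;\mathcal H^1)$ (this last deduction is recorded in Corollary~\ref{cor:EnergyEstimate}, which is used in Proposition~\ref{prop:charac}). The main obstacle I anticipate is bookkeeping the boundary contributions: one must check that the boundary generator terms $\tfrac1{N^\theta}\mathcal L_{N,b}$, which appear in $\langle\mathcal L_N\sqrt f,\sqrt f\rangle$, do not spoil the bound — but since $G$ has compact support in $(0,1)$, the test-function weights $\nabla_N^+G_s$ vanish near the boundary for $N$ large, so those terms only enter through \eqref{eq:error_DF} and are already absorbed; the genuinely delicate point is getting the constant $c$ to depend only on $\|\partial_u G\|_2^2$ (equivalently, exploiting the compact support to pass from a pointwise-in-$s$ Young estimate to the global $L^2$ bound without losing uniformity in $N$).
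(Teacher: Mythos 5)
Your overall architecture matches the paper's: reduce by density and monotone convergence to a finite maximum over a countable dense family, pass the expectation to the microscopic system via continuity in the Skorohod topology, then use the entropy inequality with respect to $\nu^N_\alpha$, Jensen, $e^{\max_k a_k}\le\sum_k e^{a_k}$, Feynman--Kac, and the comparison \eqref{eq:error_DF}; the $m$-dependence is killed by the $\tfrac1N\log$ prefactor exactly as you eventually state, and your observation that the compact support of $G$ keeps the boundary generator out of the picture is correct. However, there is a genuine gap at the central estimate, and you half-notice it yourself. After Feynman--Kac one must estimate $\frac1N\sum_x\nabla_N^+G_s(\tfrac xN)\langle\eta(x),f\rangle_{\nu^N_\alpha}$ against $N\langle\mathcal L_N\sqrt f,\sqrt f\rangle_{\nu^N_\alpha}$. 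The paper first performs an Abel summation over $x$ (using the compact support of $G$ to discard boundary terms), which turns this into $\sum_x G_s(\tfrac xN)\,(\eta(x)-\eta(x+1))$ --- the coefficient of the discrete gradient of $\eta$ is $G$ itself, not $\nabla_N^+G$. Then the symmetrization trick (writing the integral as half plus half with $\eta(x),\eta(x+1)$ swapped, for which $\nu^N_\alpha$ is invariant) produces the factor $f(\eta)-f(\eta^{x,x+1})$, and Young's inequality with parameter of order $1/N$ yields $\tfrac{C}{N}\sum_x G_s^2(\tfrac xN)\to C\|G_s\|_{L^2}^2$ plus a multiple of $D_{N,0}(\sqrt f,\nu^N_\alpha)$ that is absorbed by the $N$ copies of the Dirichlet form coming from Feynman--Kac. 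This is precisely how the quadratic term $c\|G\|_2^2$ --- and not $c\|\partial_u G\|_2^2$ --- is generated, and this is the whole content of the proposition: a bound by $\|\partial_u G\|_2^2$ is trivially true (since $\rho\in[0,1]$, $\ell_\rho(G)\le\tfrac12\|\partial_uG\|_2^2+\tfrac12 T$ by Cauchy--Schwarz) and would give nothing when feeding the Riesz representation theorem in Corollary \ref{cor:EnergyEstimate}.

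Your version keeps $\nabla_N^+G_s(\tfrac xN)$ as the coefficient of $(\eta(x)-\eta(x+1))$ and therefore lands on $\|\partial_uG\|_2^2$. Neither of your proposed repairs works: Poincar\'e's inequality on $(0,1)$ for compactly supported functions controls $\|G\|_2$ by $\|\partial_uG\|_2$, not the reverse (as you note), and replacing $G$ by $G/\lambda$ and optimizing is a pure homogeneity argument --- it can only adjust constants, never change which norm appears on the right-hand side (rapidly oscillating $G$ with small $\|G\|_2$ and huge $\|\partial_uG\|_2$ shows no such conversion is possible). The fix is simply to do the summation by parts on the lattice sum before applying Young's inequality, as in \eqref{EE5}--\eqref{EE6} of the paper's argument.
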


{Before proving this result, we state and prove an energy estimate for the macroscopic density.
\begin{cor}\label{cor:EnergyEstimate}
Any limit point $\mathbb{Q}$ of the sequence $(\mathbb{Q}_N)_{N\geq 1}$ satisfies
\[\mathbb{Q}\left(\pi_\cdot\in \mathcal{D}([0,T],\mathcal{M}), \;\;\pi_t:=\rho_t(u) du, \;\;\rho\in L^2(0;T,\mathcal{H}^1)\right)=1.\]
We denote $R_T$ the event above. 
\end{cor}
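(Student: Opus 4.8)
The corollary packages two facts about an arbitrary limit point $\mathbb{Q}$ of $(\mathbb{Q}_N)_{N\ge1}$: that $\mathbb{Q}$-almost surely the trajectory $t\mapsto\pi_t$ takes values in measures absolutely continuous with respect to Lebesgue measure, and that the resulting density lies in $L^2(0,T;\mathcal{H}^1)$. The plan is to get the first from the exclusion constraint and the second from the energy estimate in Proposition \ref{prop:energy_estimate}.

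For absolute continuity, I would use that there is at most one particle per site: for every $t$ and every non-negative $G\in C([0,1])$ one has $\langle\pi^N_t,G\rangle\le\frac{1}{N-1}\sum_{x\in\Lambda_N}G(\tfrac xN)$, and the right-hand side converges to $\int_0^1 G(u)\,du$. Passing to the limit along the subsequence defining $\mathbb{Q}$, and using that for all but countably many $t$ the evaluation $\pi\mapsto\langle\pi_t,G\rangle$ is $\mathbb{Q}$-a.s.\ a continuity point of the Skorohod trajectory, one gets $\langle\pi_t,G\rangle\le\int_0^1G(u)\,du$ for a.e.\ $t$, $\mathbb{Q}$-a.s.; right-continuity of the trajectories then upgrades this to every $t\in[0,T]$. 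Since $\pi_t$ is a positive measure, this yields $\pi_t(du)=\rho_t(u)\,du$ with $0\le\rho_t\le1$, hence in particular $\sup_{t\le T}\|\rho_t\|_{L^2}\le1$, $\mathbb{Q}$-a.s.

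For the Sobolev regularity, I would first deduce from Proposition \ref{prop:energy_estimate} that, $\mathbb{Q}$-a.s., $M(\rho):=\sup_{G\in C^{0,1}_c([0,T]\times(0,1))}\{\ell_\rho(G)-c\|G\|_2^2\}$ is finite (the supremum being measurable after restriction to a countable $C^1$-dense family of test functions, and the proposition giving $\mathbb{E}_\mathbb{Q}[M(\rho)]\le C$). Then $\ell_\rho(G)\le M(\rho)+c\|G\|_2^2$ for every such $G$; replacing $G$ by $\lambda G$ and optimizing over $\lambda>0$ gives the homogeneous bound $|\ell_\rho(G)|\le 2\sqrt{c\,M(\rho)}\,\|G\|_2$. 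Hence $\ell_\rho$ extends to a bounded linear functional on $L^2([0,T]\times(0,1))$, and the Riesz representation theorem provides $\xi\in L^2([0,T]\times(0,1))$ with $\int_0^T\!\int_0^1\partial_uG_s(u)\,\rho_s(u)\,du\,ds=-\int_0^T\!\int_0^1 G_s(u)\,\xi_s(u)\,du\,ds$ for all test functions $G$; that is, $\xi=\partial_u\rho$ in the weak sense, so $\rho_s\in\mathcal{H}^1$ for a.e.\ $s$ and $\int_0^T\|\partial_u\rho_s\|_{L^2}^2\,ds<\infty$. Together with $\sup_{t\le T}\|\rho_t\|_{L^2}\le1$ from the previous step, this is precisely $\rho\in L^2(0,T;\mathcal{H}^1)$, i.e.\ $\mathbb{Q}(R_T)=1$.

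The only substantial ingredient is Proposition \ref{prop:energy_estimate} itself, which is taken for granted; granting it, the residual technical points are just the measurability of the supremum over test functions (handled by a countable dense family) and the promotion of the a.e.-in-time bound to an everywhere-in-time one (handled by right-continuity of the trajectories in $\mathcal{D}([0,T],\mathcal{M}^+)$). The remainder is the standard Riesz-representation argument, carried out as in \cite{KL}.
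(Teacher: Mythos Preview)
Your proposal is correct and follows essentially the same route as the paper: absolute continuity from the exclusion constraint (the paper just cites \cite{KL}, p.~57), and the $L^2(0,T;\mathcal{H}^1)$ regularity from Proposition~\ref{prop:energy_estimate} via Riesz representation. You supply more detail than the paper does---in particular the scaling $G\mapsto\lambda G$ to turn the finite-supremum bound into a homogeneous estimate, the countable-dense-family argument for measurability of the supremum, and the right-continuity upgrade from a.e.\ $t$ to all $t$---whereas the paper simply asserts that $\ell_\rho$ is $\mathbb{Q}$-a.s.\ continuous and invokes Riesz directly.
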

\begin{proof}[Proof of Corollary \ref{cor:EnergyEstimate}]
We first note that because of the exclusion between particles,  every limit point  $\mathbb{Q}$
is concentrated on trajectories of measure that are  absolutely continuous with respect to the Lebesgue measure (see e.g. \cite{KL}, p.57, last paragraph for more details).

From Proposition \ref{prop:energy_estimate}, $\ell_{\rho}$ is $\bb Q$-almost surely continuous and therefore we can extend this linear functional to $L^{2}([0,T]\times (0,1))$. As a consequence of the Riesz's Representation Theorem  there exists $H \in L^{2}([0,T]\times (0,1))$ such that
$$\ell _{\rho}(G) = -\int^{T}_{0}\int^{1}_{0} G_{s}(u)H_{s}(u)du ds$$ for all $G \in C^{0,1}_{c}([0,T]\times (0,1))$. From this we conclude that $\rho \in L^{2}(0,T;\mathcal{H}^{1})$.
\end{proof}
}
\begin{proof}[Proof of Proposition \ref{prop:energy_estimate}]
	By density and by the Monotone Convergence Theorem it is enough to prove that for a countable dense subset $\lbrace  G_{m}\rbrace_{m \in \mathbb{N}}$ on $C_{c}^{0,2}([0,T]\times (0,1))$ it holds that
	$$\mathbb{E}\left[ \max _{k \leq m}\lbrace \ell_{\rho}(G^{k}) - c  \Vert G^{k} \Vert_{2}^{2}\rbrace \right]\leq C_{0},$$
	for any $m$ and for $C_{0}$ independent of $m$. 
	Note that the function that associates to a trajectory $\pi_\cdot \in \mc D([0,T], \mc M^+)$ the number 
	$\max_{k\leq m}\left\{ 
	\ell_{\rho}(G^{k})- c\Vert G^{k} \Vert_{2}^{2}
	\right\},$ 
	is  continuous and bounded  w.r.t.  the Skorohod topology of $ \mc D([0,T], \mc M^+)$ and for that reason, the expectation in the previous display is equal to the next limit
	\begin{eqnarray*}
		\lim _{N\rightarrow \infty} \EE _{\mu _{N}}\left[ \max_{k\leq m}\left\lbrace \int_{0}^{T}\dfrac{1}{N-1} 
		\sum_{x=1}^{N-1} \partial _{u}G_{s}^{k}(\tfrac{x}{N})\eta_{{sN^2}}(x)ds -c\Vert G^{k} \Vert_{2}^{2} \right\rbrace \right].
	\end{eqnarray*}
	By   entropy  and  Jensen's inequalities plus the fact that $e^{\max_{k\leq m} a_{k}}\leq \sum_{k=1}^{m}e^{a_{k}}$ the previous display is bounded from above by 
	\begin{eqnarray*}\nonumber
		&&C_{0}
		+ \dfrac{1}{N}\log \mathbb{E}_{\nu^N_\alpha}\left[ \sum _{k =1}^{m}{\exp
	\Big\{{\int _{0}^{T}\sum _{x \in \Lambda_{N}}\partial_{u}G_{s}^{k}(\tfrac{x}{N})\eta_{{sN^2}}(x) ds - cN\Vert G^{k} \Vert_{2}^{2} }\Big\}} \right],\\
	\end{eqnarray*}
	By linearity of the expectation, to treat the second term in the previous display it is enough  to bound the term 
	$$\limsup_{N \to \infty} \;  \dfrac{1}{N}\log \mathbb{E}_{\nu^N_\alpha}\left[ {\exp
	\Big\{\int _{0}^{T}\sum _{x \in \Lambda_{N}}\partial_{u}G_{s}(\tfrac{x}{N})\eta_{{sN^2}}(x) ds - cN\Vert G \Vert_{2}^{2} \Big\}}\right],$$
	for a fixed function $G\in C_{c}^{0,2}([0,T]\times (0,1))$, by a constant independent of $G$.
	By Feynman-Kac's formula, the expression inside the limsup is bounded from above by 
	\begin{equation} 
	\label{EE3}
	\int _{0}^{T}\sup _{f}\Big\{\dfrac{1}{N}\int_{\Omega_{N}}\sum _{x \in \Lambda_{N}}\partial_{u}G_{s}(\tfrac{x}{N})\eta(x) f(\eta)d{\nu^N_{\alpha} } - c\Vert G \Vert_{2}^{2}  + N \langle \mc L_{N}\sqrt{f},\sqrt{f}\rangle_{{\nu^N_{\alpha}}}\Big\}\, ds{,}
	\end{equation}
	where the supremum is carried over all the densities $f$ with respect to $\nu^N_{\alpha}$. Note that by a Taylor expansion on $G$, it is easy to see that we can replace its space derivative by the discrete gradient $\nabla^+_{N} G_{s}(\tfrac{x-1}{N})$ by paying an error of order $O(\frac 1N)$. Then, from a summation by parts, we obtain
	\begin{equation*}
	\int_{\Omega_{N}} \sum_{x=1}^{N-2} G_{s}(\tfrac{x}{N})(\eta(x)-\eta(x+1))f(\eta)d\nu^N_{\alpha}
	\end{equation*}
	By writing the previous term as one half of it plus one half of it and in one of the halves we swap the occupation variables $\eta(x)$ and  $\eta(x+1)$, for which the measure $\nu_\alpha$ is invariant,  the last display becomes equal to
	\begin{equation}
	\begin{split}
	\label{EE5}
	&\dfrac{1}{2}\int_{\Omega_{N}} \sum_{x=1}^{N-2} G_{s}(\tfrac{x}{N})(\eta(x)-\eta(x+1))(f(\eta)-f(\eta^{x,x+1}))d{\nu^N_{\alpha}}.
	\end{split}
	\end{equation}
	Repeating similar arguments to those used in the proof of Lemma \ref{lem:useful_0}, the last term is bounded from above by 
	\begin{eqnarray}\label{EE6}\nonumber
	&\dfrac{1}{4N}\int_{\Omega_{N}} \sum_{x=1}^{N-2} (G_{s}(\tfrac{x}{N}))^{2}(\sqrt{f(\eta)}+ \sqrt{ f(\eta^{x,x+1})})^{2}d\nu^N_{\alpha} +  \dfrac{1}{4N}\int_{\Omega_{N}} \sum_{x=1}^{N-2} (\sqrt{f(\eta)}- \sqrt{ f(\eta^{x,x+1})})^{2}d\nu^N_{\alpha} \\\nonumber
	&\leq  \dfrac{C }{N}\sum_{x\in \Lambda_{N}} (G_{s}(\tfrac{x}{N}))^{2}+ \dfrac{1}{4N} D_{0,N}(\sqrt{f},\nu^N_{\alpha}) 
	\end{eqnarray}
	for some $C>0$. 
	From \eqref{eq:error_DF} we get  that \eqref{EE3} is bounded from above by 
	\begin{equation*}
	C \int_0^T \Big[ 1 +  \dfrac{1}{N} \sum_{x\in \Lambda_{N}} (G_{s}(\tfrac{x}{N}))^{2}\Big] \, ds  \; -\;  c \| G\|_2^2 
	\end{equation*}
	plus an error of order $O(\tfrac 1N)$. Above
	$C$ is a positive constant independent of $G$.  Since $\dfrac{1}{N} \sum_{x\in \Lambda_{N}} (G_{s}(\tfrac{x}{N}))^{2}$ converges, as $N\to+\infty$, to $\|G\|_2^2$, then it is enough to  choose $c>C$ to conclude that 
	$$\limsup_{N\to \infty} \; \Big\{ C \int_0^T \Big[ 1 +  \dfrac{1}{N} \sum_{x\in \Lambda_{N}} (G_{s}(\tfrac{x}{N}))^{2}\Big] \, ds  \; -\;  c \| G\|_2^2  \Big\} \; \lesssim \; 1$$ and we are done.
\end{proof}

\section{Uniqueness of weak solutions of \eqref{eq:Robin_equation}}
\label{ap:uni_weak}

We start this section by recalling from  {Section 7.2 of }  \cite{BDGN}  the next two lemmas, which will be used in our proof. {The first one concerns uniqueness of the strong solutions of the heat equation with \emph{linear} Robin boundary conditions.}
\begin{lem} \label{lem:rob}
	For any $ t\in(0,T] $, the following problem with Robin boundary conditions
	\begin{align}\label{eq:lem_robin}
		\begin{cases}
			\partial_s\varphi(s,u)+a\partial^2_u\varphi(s,u)=\lambda \varphi(s,u),&(s,u)\in [0,t{)}\times(0,1),\\
			\partial_u \varphi(s,0)=b(s,0)\varphi(s,0), & s\in[0,t),\\
			\partial_u \varphi(s,1)=-b(s,1)\varphi(s,1), & s\in[0,t),\\
			\varphi(t,u)=h(u), & u\in(0,1),
		\end{cases}
	\end{align}
	with $ h\equiv h(u)\in C_0^2([0,1]) $ , 
	$ \lambda \geq0 $ , 
	$ 0<a\equiv a(u,t)\in C^{2,2}([0,T]\times[0,1]) $, and for $ u\in\{0,1\} $, 
	$ 0<b\equiv b(u,t)\in C^2[0,T] $, 
	has a unique solution $ \varphi\in C^{1,2}([0,t]\times [0,1]) $. Moreover, if $ h\in[0,1] $ then we have $ \forall (s,u)\in [0,t]\times [0,1] $:
	\begin{align*}
		0\leq \varphi(s,u)\leq e^{-\lambda(t-s)}.
	\end{align*}
\end{lem}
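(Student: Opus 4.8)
The plan is to establish uniqueness and the bound $0\leq \varphi(s,u)\leq e^{-\lambda(t-s)}$ via a reversal of time combined with an energy/maximum-principle argument, since \eqref{eq:lem_robin} is a backward problem with terminal data at time $t$. First I would substitute $\tau=t-s$ and set $\psi(\tau,u)=\varphi(t-\tau,u)$, turning \eqref{eq:lem_robin} into a forward parabolic problem $\partial_\tau \psi = a\,\partial^2_u\psi -\lambda\psi$ on $(0,t]\times(0,1)$ with Robin boundary conditions $\partial_u\psi(\tau,0)=b(t-\tau,0)\psi(\tau,0)$, $\partial_u\psi(\tau,1)=-b(t-\tau,1)\psi(\tau,1)$, and initial data $\psi(0,\cdot)=h\in C_0^2([0,1])$. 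Existence and $C^{1,2}$-regularity of the solution of this uniformly parabolic problem (with $a\in C^{2,2}$, $b\in C^2$, smooth compactly-supported initial data) is classical parabolic theory, so I would simply cite the relevant reference (as the authors do with \cite{BDGN}); the new content to spell out is the two-sided bound.

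For uniqueness: given two solutions, their difference $w$ solves the same problem with $h\equiv 0$. I would multiply the equation by $w$ and integrate over $(0,1)$, using integration by parts $\int_0^1 a\,\partial_u^2 w\cdot w\,du = [a\,\partial_u w\cdot w]_0^1 - \int_0^1 \partial_u(aw)\partial_u w\,du$. The boundary term produces, after inserting the Robin conditions, contributions of the form $-a(1,s)b(s,1)w(1)^2 - a(0,s)b(s,0)w(0)^2\leq 0$ since $a,b>0$; the remaining interior term is $-\int a(\partial_u w)^2 - \int w\,\partial_u a\,\partial_u w$, and the cross term is absorbed via Young's inequality into the good term plus a constant times $\int w^2$, using the lower bound on $a$ and the bounds on $\partial_u a$. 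This yields $\tfrac{d}{d\tau}\|w\|_{L^2}^2 \leq C\|w\|_{L^2}^2$, and Gronwall with $w(0)=0$ gives $w\equiv 0$.

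For the bound $0\leq\varphi\leq e^{-\lambda(t-s)}$, equivalently $0\leq\psi(\tau,u)\leq e^{-\lambda\tau}$: I would use the parabolic maximum principle with the sign of the Robin boundary terms. For the lower bound, if $\psi$ attained a negative interior minimum at an interior point, the maximum principle is contradicted ($\lambda\geq 0$ helps), and at a boundary minimum point with $\psi<0$ the Hopf lemma together with $\partial_u\psi(\tau,0)=b\psi(\tau,0)<0$ (which has the wrong sign for an inward-decreasing minimum) gives a contradiction; hence $\psi\geq 0$. For the upper bound, set $\phi(\tau,u)=e^{\lambda\tau}\psi(\tau,u)$; then $\partial_\tau\phi = a\partial_u^2\phi$ with the same Robin conditions and $\phi(0,\cdot)=h\leq 1$. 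The constant function $1$ is a supersolution (the boundary conditions $\partial_u\phi(\tau,0)=b\phi(\tau,0)$ are compatible with comparison since $b>0$), so the comparison principle gives $\phi\leq 1$, i.e. $\psi\leq e^{-\lambda\tau}$. Translating back gives the claim.

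The main obstacle I anticipate is handling the boundary terms carefully in the maximum-principle argument: the Robin conditions have a definite sign ($b>0$) which is exactly what makes the comparison work, but one must argue via Hopf's lemma at boundary extrema rather than the naive interior maximum principle, and one must check that $h\in C_0^2$ (vanishing near the endpoints) is consistent with, and indeed stronger than, what comparison with the constants $0$ and $1$ requires. The energy argument for uniqueness is routine once the boundary sign is used correctly.
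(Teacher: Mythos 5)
The paper does not actually prove Lemma \ref{lem:rob}: it is imported verbatim, together with Lemma \ref{lem:approx_b}, from Section 7.2 of \cite{BDGN} (``We start this section by recalling from Section 7.2 of \cite{BDGN} the next two lemmas''), so there is no in-paper argument to compare yours against. Your plan is nonetheless a correct and standard way to establish the statement, and the key structural points are right: the time reversal $\tau=t-s$ turns \eqref{eq:lem_robin} into a forward uniformly parabolic Robin problem; the boundary contributions $-a(1,\cdot)b(\cdot,1)w(1)^2-a(0,\cdot)b(\cdot,0)w(0)^2$ in the energy identity have the good sign precisely because $a,b>0$; and the two-sided bound follows from the minimum principle plus Hopf's lemma at the lateral boundary, where the Robin condition forces the wrong sign of $\partial_u\psi$ at a negative boundary minimum. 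Two small points deserve to be made explicit if this were written out. First, the constant $1$ does \emph{not} satisfy the Robin boundary conditions as equalities; it is a supersolution because $\partial_u 1=0\leq b(\cdot,0)\cdot 1$ at $u=0$ and $0\geq -b(\cdot,1)\cdot 1$ at $u=1$, and the comparison then goes through by the same Hopf argument applied to $1-\phi$ rather than by invoking a boundary-condition ``compatibility'' loosely. Second, for the claimed $C^{1,2}([0,t]\times[0,1])$ regularity up to the corners one needs the zeroth-order compatibility conditions $\partial_u h(0)=b(t,0)h(0)$ and $\partial_u h(1)=-b(t,1)h(1)$, which is exactly what the hypothesis $h\in C^2_0([0,1])$ guarantees (both sides vanish); this is worth stating rather than leaving implicit in the citation of classical parabolic theory.
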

The second lemma is a technical regularization result on the coefficients $b(s,\cdot).$
\begin{lem}\label{lem:approx_b}
	Let $ 0\leq b $ be a bounded measurable function in $ [0,T] $, $ A=\{t\in[0,T]:b(t)>0\} $ and $ p\in[1,\infty) $. 
	Then there is a sequence $  (b_k)_{k\geq0} $ of positive functions in $ C^{\infty}[0,T] $ such that $ b_k\xrightarrow{k\to\infty}b $ in $ L^p([0,T]) $ and
	\begin{align*}
		\norm{\frac{b}{b_k}-1}_{L^{p}(A)}\xrightarrow{k\to\infty}0.
	\end{align*}
\end{lem}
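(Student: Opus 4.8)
The statement to prove is Lemma~\ref{lem:approx_b}: given a bounded measurable $b\geq 0$ on $[0,T]$ with $A=\{t:b(t)>0\}$ and $p\in[1,\infty)$, one can construct a sequence $(b_k)_{k\geq 0}$ of positive $C^\infty$ functions with $b_k\to b$ in $L^p([0,T])$ and $\|b/b_k-1\|_{L^p(A)}\to 0$.

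\medskip

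\textbf{Approach.}
The plan is to obtain $b_k$ by two successive operations: first replace $b$ by a slightly lifted version $b+\varepsilon_k$ to avoid the degeneracy where $b$ vanishes, then mollify. The key quantitative input is that the mollification should happen on a scale much finer than the height of the lift, so that the multiplicative control $b/b_k\approx 1$ survives. Concretely, pick a standard mollifier $\varphi_\eta$ (a smooth nonnegative bump of integral one supported in $[-\eta,\eta]$, extending $b$ by, say, even reflection at the endpoints to stay on $[0,T]$), set $\tilde b_{\varepsilon,\eta}:=\varphi_\eta * (b+\varepsilon)$, and finally choose a diagonal sequence $\varepsilon_k\to 0$, $\eta_k\to 0$ with $\eta_k\ll\varepsilon_k$ in a sense made precise below, letting $b_k:=\tilde b_{\varepsilon_k,\eta_k}$. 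Each $b_k$ is smooth, and strictly positive because $\varphi_\eta*(b+\varepsilon)\geq \varepsilon>0$ pointwise.

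\medskip

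\textbf{Step 1: $L^p$ convergence on $[0,T]$.} Write $b_k - b = \varphi_{\eta_k}*(b+\varepsilon_k) - b = \varphi_{\eta_k}*b - b + \varepsilon_k$ (using that $\varphi_{\eta_k}$ has unit integral). The term $\varepsilon_k\to 0$ contributes $\varepsilon_k T^{1/p}\to 0$ in $L^p$. The term $\varphi_{\eta_k}*b - b$ tends to $0$ in $L^p([0,T])$ for any $\eta_k\to 0$, by the standard continuity of translation/mollification in $L^p$ for $p<\infty$ (here one uses $b\in L^p$, which holds since $b$ is bounded on a finite interval). Hence $b_k\to b$ in $L^p([0,T])$ for \emph{any} choice $\varepsilon_k\to0$, $\eta_k\to0$; no coupling is needed yet.

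\medskip

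\textbf{Step 2: the multiplicative estimate on $A$ --- the main obstacle.} On $A$ we must control
\[
\Big\|\frac{b}{b_k}-1\Big\|_{L^p(A)}^p=\int_A \Big|\frac{b(t)-b_k(t)}{b_k(t)}\Big|^p\,dt.
\]
Since $b_k\geq \varepsilon_k$ everywhere, the crude bound $|b/b_k-1|\leq |b-b_k|/\varepsilon_k$ gives
\[
\Big\|\frac{b}{b_k}-1\Big\|_{L^p(A)}\leq \frac{1}{\varepsilon_k}\,\|b-b_k\|_{L^p([0,T])}.
\]
Thus it suffices to choose the mollification scale $\eta_k$, \emph{depending on $\varepsilon_k$}, small enough that $\|b-b_k\|_{L^p([0,T])}\leq \varepsilon_k\,\delta_k$ for some $\delta_k\to 0$. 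This is where the coupling enters and where the argument must be done carefully, because $\|b-\varphi_\eta*b\|_{L^p}$ has no a priori modulus of continuity in $\eta$ for a merely bounded measurable $b$ --- but it \emph{does} tend to $0$ as $\eta\to 0$. So: fix $\varepsilon_k:=1/k$ (say); by Step 1's convergence $\|b-\varphi_\eta*b\|_{L^p}+\varepsilon_k T^{1/p}\to 0$ as $\eta\to 0$, hence there exists $\eta_k>0$ small enough that $\|b-b_k\|_{L^p([0,T])}\leq \varepsilon_k^2$. Then $\|b/b_k-1\|_{L^p(A)}\leq \varepsilon_k\to 0$, while simultaneously $\|b-b_k\|_{L^p}\leq\varepsilon_k^2\to 0$. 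This proves both claims at once.

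\medskip

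I expect Step 2 to be the delicate point: one must resist the temptation to seek a rate for mollification (which does not exist without regularity of $b$) and instead exploit that $\varepsilon_k$ is chosen first and $\eta_k$ afterwards, so that $\eta_k$ can be tailored to beat $\varepsilon_k$. A minor technical wrinkle is the behaviour near the endpoints $\{0,T\}$: one should either extend $b$ to $\mathbb R$ by even reflection (so the mollified function restricted to $[0,T]$ is still smooth up to the boundary and $\geq\varepsilon_k$) or simply use one-sided mollifiers near $0$ and $T$; either way the estimates above are unaffected since boundedness of $b$ is all that is used. Finally, note $b_k\in C^\infty[0,T]$ and $b_k\geq \varepsilon_k>0$, completing the proof.
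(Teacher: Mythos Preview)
The paper does not prove this lemma itself; it is simply quoted from Section~7.2 of \cite{BDGN}. So the only question is whether your argument stands on its own.

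Your overall strategy (lift by $\varepsilon_k$, mollify at scale $\eta_k$, then couple the two) is the right one, and Step~1 is fine. But Step~2 contains a genuine error. You claim that for fixed $\varepsilon_k$ one can choose $\eta_k$ so small that $\|b-b_k\|_{L^p}\leq\varepsilon_k^2$. This is false: since $b_k-b=(\varphi_{\eta_k}*b-b)+\varepsilon_k$, the reverse triangle inequality gives $\|b-b_k\|_{L^p}\geq \varepsilon_k T^{1/p}-\|\varphi_{\eta_k}*b-b\|_{L^p}$, and as $\eta_k\to 0$ with $\varepsilon_k$ fixed the right-hand side tends to $\varepsilon_k T^{1/p}$, not to $0$. (Concretely, if $b\equiv c>0$ is constant then $b_k\equiv c+\varepsilon_k$ and $\|b-b_k\|_{L^p}=\varepsilon_k T^{1/p}$ for every $\eta_k$.) Hence the crude bound $|b/b_k-1|\leq |b-b_k|/\varepsilon_k$ cannot, by itself, be pushed to zero.

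The repair is to split the numerator \emph{before} dividing: write
\[
\frac{b}{b_k}-1=\frac{b-\varphi_{\eta_k}*b}{b_k}\;-\;\frac{\varepsilon_k}{b_k}.
\]
For the first term your crude bound does work: $|b-\varphi_{\eta_k}*b|/b_k\leq |b-\varphi_{\eta_k}*b|/\varepsilon_k$, and now one may indeed choose $\eta_k$ (after fixing $\varepsilon_k$) so that $\|b-\varphi_{\eta_k}*b\|_{L^p}\leq\varepsilon_k^2$, since there is no additive $\varepsilon_k$ obstructing this. For the second term, note $0\leq \varepsilon_k/b_k=\varepsilon_k/(\varphi_{\eta_k}*b+\varepsilon_k)\leq 1$, and at every Lebesgue point of $b$ in $A$ one has $\varphi_{\eta_k}*b\to b>0$, so $\varepsilon_k/b_k\to 0$ a.e.\ on $A$; dominated convergence (with dominating function $1\in L^p(A)$) then yields $\|\varepsilon_k/b_k\|_{L^p(A)}\to 0$. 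With this correction your proof goes through.
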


For the proof of Lemma \ref{lem:uniqueness}, that is of uniqueness  of weak solutions  of \eqref{eq:Robin_equation}, we will follow Filo's method \cite{filo}, but mostly as presented in  Section 7.2 of \cite{BDGN}. The main idea is to choose a particular test function for the weak formulation satisfied by $ w:=\rho^{(1)}-\rho^{(2)}$, where $ \rho^{(1)} $ and $ \rho^{(2)} $ are two weak  solutions with the same initial data.
Although we do not have as much work to treat the bulk terms as in \cite{BDGN}, our main issue is the non linearity of the boundary conditions.

Recalling the weak formulation in \eqref{eq:Robin_integral} and Lemma \ref{lem:D1-D2}, since $$ D_{\lambda,\sigma}\rho_s^{(1)}(v)-D_{\lambda,\sigma}\rho_s^{(2)}(v)=-w_s(v) V_{\lambda,\sigma}(\rho_s^{(1)},\rho_s^{(2)})(v,v):=-w_s(v)V_{\lambda,\sigma}(v,s) $$ for $ v=0,1 $ and $ (\lambda,\sigma)=(\alpha,\gamma),(\beta,\delta) $, we have:
\begin{multline}\label{weak_w}
	\inner{w_{t}, G_{t}}=   
	\int_0^t\inner{w_{s},\Big( \partial^2_u + \partial_s\Big) G_{s}} ds  
	+	\int_0^t
	w_{s}(0)\Big(
	\partial_u  G_{s}(0)
	-
	G_s(0)
	V_{\alpha,\gamma}(0,s)
	\Big)
	ds\\
	-\int_0^t
	w_{s}(1)
	\Big(
	\partial_u  G_{s}(1)
	+
	G_s(1)
	V_{\beta,\delta}(1,s)
	\Big)
	ds.
	\end{multline}
Now we choose our test functions. Since $V_{\cdot,\cdot} $ does not have enough regularity, we have to overcome this problem by using Lemma \ref{lem:approx_b}. We focus on the left boundary, since for the right boundary the computations are analogous. Let $ A_0=\{t\in[0,T]: V_{\alpha,\gamma}(0,t)>0 \} $ (similarly, we define $ A_1 $ with respect to the right boundary). From Lemma \ref{lem:D1-D2} we have $ V_{\alpha,\gamma}(0,s)>0 $ and we may therefore exchange $ [0,t] $ by $ A_0 $ (resp. $ A_1 $) and apply Lemma \ref{lem:approx_b}. As a consequence of Lemma \ref{lem:approx_b},   for $ k$ large enough, there exists $ b_{k}(s,0) $ close to $  V_{\alpha,\gamma}(0,s)$ in $ L^p([0,T]) $ for $p\in [1,+\infty)$:
\begin{align*}
	\norm{\frac{ V_{\alpha,\gamma}(0,\cdot)}{b_k(\cdot,0)}-1}_{L^p(A_0)}\leq\eps
\end{align*}
for $ \eps>0$ and 
$ A_0=\{
s\in[0,t]:  V_{\alpha,\gamma}(0,s)>0
\} $. Now we choose the space of test functions as a sequence $ \varphi_k $, where  for each $k$, the function $ \varphi_k $ solves \eqref{eq:lem_robin} with $ b_k(\cdot,0)$ given above and  with $ \lambda=0 $.
From the boundary conditions of \eqref{eq:lem_robin}, second term in \eqref{weak_w} writes as
\begin{align}\label{eq:boundary_1}
	&\int_0^t
	\varphi_k(0,s)w_s(0)
	\left(
	b_k(0,s)- V_{\alpha,\gamma}(0,s)
	\right)
	ds.
\end{align}
Exchanging $ [0,t] $ by $ A_0 $, the last display can be bounded from above by
\begin{align*}
	\Big|\int_{A_0}
	\varphi_k(0,s)w_s(0)b_k(0,s)
	\left(
	1-\frac{ V_{\alpha,\gamma}(0,s)}{b_k(0,s)}
	\right)
	ds\Big|
	\leq 
	2\norm{b_k(0,s)}_{L^1(A_0)}\norm{\frac{ V_{\alpha,\gamma}(0,s)}{b_k(0,s)}-1}_{L^1(A_0)}\lesssim \eps,
\end{align*}
where we used that  $\varphi_k(\cdot)$ and $w(\cdot)$ are bounded functions.   For the right boundary the argument is completely analogous.

Now we treat the bulk term. From our choice of test function we have
\begin{align*}
	\int_0^t\inner{w_s,(\partial^2_u+\partial_s)\varphi_k}ds
	=\int_0^t\inner{w_s,(1-a)\partial^2_u \varphi_k(\cdot,s)}ds.
\end{align*}
Letting $ a=1 $, we thus have that $ \inner{w_t,\varphi_t}\lesssim \eps $. Since $ \varphi_t=h $, it is enough to take $ h\equiv h_k\in C_0^2([0,1]) $ such that $ h_k(\cdot)\xrightarrow{k\to\infty}1_{\{u\in[0,1]:w_t(u)>0 \}}(t,\cdot) $ in $ L^2([0,1]) $. The conclusion follows straightforwardly.
\begin{rem}
  We remark that for the model in \cite{dptv} the lower bound takes the form of the term $ x=K $ in the sum above, with $ \gamma=0 $. For $ K=1 $, that is, the case studied in \cite{bmns}, we have $ V=\rho^{(1)}+\rho^{(2)} $, and thus $ V=0\implies w=0 $.
\end{rem}

\end{document}